\crefname{assumption}{assumption}{assumptions}
\Crefname{assumption}{Assumption}{Assumptions}
\crefname{secinapp}{appendix}{appendices}
\Crefname{secinapp}{Appendix}{Appendices}
\newcommand{\vect}[1]{\boldsymbol{#1}} 
\newcommand{\matr}[1]{\mathbf{#1}}
\newcommand{\E}{\mathrm{E}}
\newcommand{\V}{\mathrm{Var}}
\newcommand{\N}{\mathrm{N}}
\newcommand{\Cov}{\mathrm{Cov}}
\newcommand{\Etilde}{\widetilde{\E}}
\newcommand{\Vtilde}{\widetilde{\V}}
\newcommand{\Covtilde}{\widetilde{\Cov}}
\newcommand{\Fhat}{\widehat{F}}
\newcommand{\bFhat}{\widehat{\vect{F}}}
\newcommand{\bF}{\vect{F}}
\newcommand{\one}[1]{\delta\{#1\}}
\newcommand{\set}[1]{\{#1\}}
\newcommand{\bX}{\vect{X}}
\newcommand{\bXdata}{\matr{X}_p}
\newcommand{\bx}{\vect{x}}
\newcommand{\bZ}{\vect{Z}}
\newcommand{\bbZ}{\matr{Z}}
\newcommand{\IF}{\mathrm{IF}}
\newcommand{\IFipx}{\IF_{ip}(\bx; \bF^c)}
\newcommand{\IFlpx}{\IF_{\ell p}(\bx; \bF^c)}
\newcommand{\IFllpx}{\IF_{\ell'p}(\bx; \bF^c)}
\newcommand{\IFhat}{\IF}
\newcommand{\IFhathat}{\widehat{\IF}}
\newcommand{\IFhatXipj}{\IFhat_{ip}(\bX_{pj}; \bFhat)}
\newcommand{\IFhatXlpj}{\IFhat_{\ell p}(\bX_{pj}; \bFhat)}
\newcommand{\IFXipj}{\IF_{ip}(\bX_{pj}; \bF^c)}
\newcommand{\IFXlpj}{\IF_{\ell p}(\bX_{pj}; \bF^c)}
\newcommand{\IFhathatXipj}{\IFhathat_{ip}(\bX_{pj}; \bFhat)}
\newcommand{\IFhathatXlpj}{\IFhathat_{\ell p}(\bX_{pj}; \bFhat)}
\newcommand{\lambst}{\vect{\lambda}^*}
\newcommand{\norm}[1]{\|#1\|}
\newcommand{\bignorm}[1]{\bigg\|#1\bigg\|}
\newcommand{\bw}{\vect{w}}
\newcommand{\bG}{\vect{G}}
\newcommand{\bmu}{\boldsymbol{\mu}}
\newcommand{\argmax}{\mathrm{argmax}}
\newtheorem{theorem}{Theorem}
\newtheorem{proposition}{Proposition}
\newtheorem{corollary}{Corollary}
\newtheorem{assumption}{Assumption}
\newtheorem{lemma}{Lemma}
\newenvironment{proof}[1]{%
  \par\noindent\textbf{#1. }\ignorespaces
}{%
  \hfill$\square$\par
}
\begin{document}

\title{Nonparametric Robust Comparison of Solutions under Input Uncertainty\thanks{This work is supported by the National Science Foundation under Grant Nos.\ CAREER CMMI-2045400,\\ CMMI-2417616, and DMS-2410944.}}

\author{Jaime Gonzalez-Hodar\thanks{H.~Milton Stewart School of Industrial and Systems Engineering, Georgia Institute of Technology.} \and Johannes Milz\footnotemark[2] \and Eunhye Song\footnotemark[2]}

\maketitle

\abstract{
We study ranking and selection under input uncertainty in settings where additional data cannot be collected.  We propose the Nonparametric Input-Output Uncertainty Comparisons (NIOU-C) procedure to construct a confidence set that includes the optimal solution with a user-specified probability. We construct an ambiguity set of input distributions using empirical likelihood and approximate the mean performance of each solution using a linear functional representation of the input distributions. By solving optimization problems evaluating worst-case pairwise mean differences within the ambiguity set, we build a confidence set of solutions indistinguishable from the optimum. We characterize sample size requirements for NIOU-C to achieve the asymptotic validity under mild conditions. Moreover, we propose an extension to NIOU-C, NIOU-C:E, that mitigates conservatism and yields a smaller confidence set. In numerical experiments, NIOU-C provides a smaller confidence set that includes the optimum more frequently than a parametric procedure that takes advantage of the parametric distribution families. 
}%



\section{Introduction}
\label{sec:intro}

Stochastic simulation is widely used for decision-making owing to its flexibility and ability to model randomness. Simulation outputs depend on input random vectors from models estimated using data. When data collection is costly or limited to a single batch, input models are subject to finite-sample error, introducing extra uncertainty beyond run-to-run variability. This additional uncertainty, called \emph{input uncertainty} (IU)~\citep{barton2022}, cannot be reduced when further data collection is infeasible.

We consider ranking and selection (R\&S)   where the goal is to find the solution with the largest expected system output, represented  by the simulation output mean. 
This setting arises  when additional input data collection is infeasible or far costlier than simulation replications. For instance, 
General Motors uses simulation to compare  vehicle content portfolios, with  input models estimated from one-time consumer survey data~\citep{GMVCO}. 
Identifying the best solution  is challenging when the performance measures differences  are  marginal~\citep{songnelsonhong2015}.
Throughout the paper, we assume that the simulated optimum under the true distributions is the system optimum.

Our goal is to infer the solution by constructing a confidence set of solutions that contains the solution with a user-specified probability, accounting for IU and simulation error in the performance measure estimates, and ensuring asymptotic validity as sample sizes grow. 
A similar problem is studied in~\cite{Song2019}, which introduces the Input–Output Uncertainty Comparisons (IOU-C) and assumes the input distributions have known parametric distribution functions. 
This assumption simplifies input modeling to parameter estimation and enables parametric response models (e.g., regression) to link input and output. Consequently, quantifying IU reduces to characterizing the sampling distribution of the parameter estimator and fitting the response model.

However, parametric assumptions on the input distribution are often unrealistic. Without strong physical justification,  the parametric input model is typically chosen by fitting a number of candidate distributions to the data and picking the best goodness-of-fit test score. 
Misspecification can severely impact decisions. For instance, \cite{songlambarton2024} empirically show that fitting an exponential distribution to bimodal service times in a queueing  model causes severe undercoverage of  confidence intervals (CIs) for the  mean performance measure.  
Moreover, bias from an incorrect parametric model persists even with a large input data size. 

We  propose a nonparametric framework to construct the confidence set directly from the input data.
Using empirical likelihood (EL)~\citep{owenEL}, we  quantify the uncertainty about the true input distributions without parametric assumptions.  
EL  enables constructing an ambiguity set of input models, which asymptotically contains a distribution function yielding the correct simulation output means for all $k$ solutions with a given probability as input size grows. Leveraging this result, we adapt the  classical multiple comparisons with the best (MCB) procedure~\citep{chang1992optimal} by computing the confidence region of simulation output mean differences through $k(k-1)$ optimization problems, each maximizing the pairwise mean difference subject to the ambiguity set. 

Solving these optimization problems is challenging as the objectives must express the simulation output mean as a functional of each input model in the ambiguity set.  Therefore, we approximate the objectives using first-order functional Taylor expansions, where dependence on the input models is represented by influence functions~\citep{hampel1974influence} estimated via simulations. This yields convex problems, which can be solved to arbitrary precisions at a low computational cost. The CI bounds are then estimated by running simulations at the maximizers. We formalize this procedure as an algorithmic framework, Nonparametric IOU-C (NIOU-C). We show that NIOU-C returns the confidence set including the optimum with asymptotically correct coverage probability  provided that simulation sample sizes grow faster than the input data size. In numerical studies, NIOU-C delivers the correct empirical confidence with a smaller confidence set than an IOU-C algorithm in all experiments. Notably, even if the algorithm correctly assumes the parametric input distributions,  NIOU-C still outperforms it.

\subsection{Literature Review}
\label{subsec:lit.review}

Our work extends and builds on a number of research fields, which we briefly review below.

\paragraph{Nonparametric IU quantification}
Prior work on nonparametric IU quantification mainly constructs confidence intervals for a single simulation output mean; see Section~17.3.2 of \citet{barton2022} for a  review.   Most approaches adopt bootstrap to approximate the input data's sampling distribution~\citep{songlambarton2024}. 
Our work is  closest to \citet{lamqian2016,lam2017optimization}, which also use EL to construct CIs accounting for IU. 
However, their inference is one-dimensional, while we extend EL to multiple dimensions for joint inference on all R\&S solutions.

\paragraph{Simulation optimization under IU}

Existing work either consider fixed input batches or allow for additional data collection. 
A detailed discussion on the latter can be found in~\citet{hesong2024}. 
For fixed batches, \cite{corlu2013subset, corlu2015subset, Song2019} study the confidence set inference  for parametric input models.
\cite{pearce2017bayesian,wang2020} instead optimize  expected simulation output accounting for simulation error and IU via Bayesian optimization, 
representing risk-neutral approaches. On the other hand, \citet{gao2017robust,Fan2020} consider  robust formulations modeling IU with a finite set of candidate distributions. They aim to find the solution with the best worst-case performance and propose sequential sampling algorithms. \citet{wu2018} study R\&S  with risk measures (e.g.\ quantiles) under  parametric Bayesian  IU without providing specific solution algorithms. These frameworks 
assume known parametric families of  input models, or restrict  candidate distributions to finite sets.

\paragraph{Distributionally robust optimization (DRO)}

In DRO, ambiguity sets represent distributional uncertainty in stochastic optimization. \citet{delage2010} use moment-based ambiguity sets, \citet{BenTal2013} employ $f$-divergence ball, and
\citet{esfahanikuhn2018} adopt  Wasserstein balls for  high-probability coverage.
\citet{lam2019} further links EL-based ambiguity sets to Burg entropy. While we also use ambiguity sets, our focus differs: rather than optimizing decisions under worst-case distributions, we evaluate worst-case pairwise performance differences within the ambiguity set to form a confidence region for solution means, from which we construct the confidence set.

\subsection{Contribution Summary}

We propose a data-driven procedure  for   R\&S under IU that provides a confidence set containing the unknown best with a user-specified probability. 
Our main contributions are fourfold:
\textbf{(i)}~We provide the first nonparametric framework for R\&S under IU,  improving real-world applicability.
\textbf{(ii)}~Using EL, we  model uncertainty in solution comparisons via an ambiguity set, and approximate simulation-based performance measures using influence functions. 
To build the confidence set, we then formulate optimization problems solvable with standard solvers.
\textbf{(iii)} We prove that the confidence set provides an asymptotically valid coverage when simulation sample sizes grow faster than the input data size regardless of the number of data sources or the dimensionality of the input data.
\textbf{(iv)}  We benchmark  against IOU-C on examples where it either  correctly specifies or misspecifies the distribution families.  Our confidence set achieves correct coverage  with the same simulation budget while IOU-C fails 
under misspecification and becomes conservative otherwise.

A conference version of this work appeared in \citet{gonzalez2024},
which we extend in several directions. 
First, we establish a rigorous theoretical foundation for NIOU-C and prove that the proposed procedure achieves the desired coverage guarantee in the limit. 
Second, we propose NIOU-C:E, an efficiency-oriented variant designed to reduce the conservatism in  confidence sets. 
Third, we expand the empirical study by  conducting a comprehensive sensitivity analysis to examine how approximation errors affect the performance of NIOU-C and NIOU-C:E under varying input sizes, simulation budgets, and the numbers of  solution.

\subsection{Paper Outline}
In \Cref{sec:problem_desc},  we describe the R\&S problem under IU. 
\Cref{sec:as} formulates the ambiguity set  using EL\@. 
\Cref{sec:model}  presents a linear approximation of the simulation output mean  via the influence function. 
\Cref{sec:optimization} formulates optimization problems for a confidence region for the performance measure differences  and their solution methods are summarized in \Cref{sec:as-algorithm}. 
\Cref{sec:extension} explores an alternative  ambiguity set for tighter coverage. 
\Cref{sec:empirical}  provides empirical validation. 
Finally, we conclude in \Cref{sec:conclusion}. All  proofs of the theoretical statements are in the Online Supplement.

\section{Problem Description}
\label{sec:problem_desc}

We consider an R\&S problem with $k \in \mathbb{N}$ candidate solutions, evaluated by a stochastic simulator.
The system's randomness is modeled by a collection of input distributions, $\bF^c \triangleq \{F_p^c\}_{p \in [m]}$, 
where each $F_p^c$ is a probability density function
defined on a subset $\Xi_p$ of Euclidean space, and $[m]\triangleq\{1,\ldots,m\}$. We assume that the input distributions are mutually independent, meaning that the random inputs \( \mathbf{X}_1, \ldots, \mathbf{X}_m \), with \( \mathbf{X}_p \sim F_p^c \), are statistically independent across \( p \in [m] \).

For each Solution $i$, the simulation output $Y_i(\bF)$ depends on an arbitrary input model $\bF\triangleq \{F_p\}_{p\in[m]}$.  A replication of $Y_i(\cdot)$ requires $T_{ip}$ independent and identically distributed (i.i.d.) realizations, $\bZ^i_{p}(1),\ldots,\bZ^i_{p}(T_{ip}) \sim F_p$, denoted by $\matr{Z}^i_{p} \triangleq \{\bZ^i_{p}(t)\}_{t\in [T_{ip}]}$ from each $p\in[m]$. We formalize the input-output relationship and a bounded moment condition on $h_i$ below. 

\begin{assumption} \label{assump:input.length}
For each $i\in[k]$, $Y_{i}(\bF) = h_i(\matr{Z}^i_{1},\ldots,\matr{Z}^i_{m})$, where $h_i \colon 
\Xi_1^{T_{i1}} \times \cdots \times 
\Xi_m^{T_{im}}
\to \mathbb{R}$ is the output function of Solution $i$ and $0<T_{ip}<\infty$ is a known constant for each $p \in [m]$. Furthermore, for each $i\in[k]$, $h_i$
is measurable and satisfies
$
\int |h_i(\boldsymbol{x}_1, \ldots, \boldsymbol{x}_m)|^2 \, 
    \prod\nolimits_{p=1}^m  \prod\nolimits_{t=1}^{T_{ip}}
    \mathrm{d} F_p^c(\boldsymbol{x}_{p,t})
< \infty
$.
\end{assumption}

In general, $h_i$ does not have an analytical expression. For instance, $h_i$ may be an abstract representation of how inputs are mapped to the summary statistic computed at the end of a replication of a discrete-event simulator. While we adopt both $Y_i(\bF)$ and $h_i(\matr{Z}^i_{1},\ldots,\matr{Z}^i_{m})$ to denote the output, we use the latter whenever we need to explicitly represent the dependence on the inputs. 
We define the performance measure of Solution $i$, $\eta_i(\bF)$, as the expectation of the simulation output, which can explicitly be rewritten in an integral form under \Cref{assump:input.length}: 
\begin{equation*}
    \eta_i(\bF) \triangleq \E[Y_i(\bF)]
    =
    \int h_i(\boldsymbol{x}_1, \ldots, \boldsymbol{x}_m) \, 
    \prod\nolimits_{p=1}^m  \prod\nolimits_{t=1}^{T_{ip}}
    \mathrm{d} F_p(\boldsymbol{x}_{p, t}),
    \quad 
    \text{where}
    \quad 
    \boldsymbol{x}_p = \{\boldsymbol{x}_{p,t}\}_{t \in [T_{ip}]}.
\end{equation*}
The goal 
is to find the solution with the largest performance measure when $\bF^c$ is the input model: 
\begin{equation}\label{eq:RnS}
    i^c \triangleq \arg\max\nolimits_{i\in [k]} \eta_i(\bF^c).
\end{equation}
We assume $i^c$ to be unique in this work, i.e., $\eta_{i^c}(\bF^c) \neq \eta_i(\bF^c)$ for all $ i \in [k] \setminus \{i^c\}$.

The true input model $\bF^c$ is typically unknown, but may be estimated from data. For each $p \in [m]$, we assume access to a fixed 
batch of $n_p$ i.i.d.\ samples, $\{\bX_{pj}\}_{j \in [n_p]}$, from $F_p^c$, collected independently across $p\in[m]$. 
Let 
$n \triangleq (1/m)\sum_{p=1}^m n_p$ denote the average sample size from $m$ sources.
Although our methodology is specifically designed for problems where additional data collection is infeasible,  in later sections we analyze the  asymptotic behavior as $n \to \infty $ to assess the precision of statistical inference under increasing data availability.
To this end, we make the 
following assumption. 
\begin{assumption}\label{assump:data_size}
    There exists $\beta_p > 0$ such that $n_p / n \to \beta_p$ as $n \to \infty$.
\end{assumption}

While one could plug in the empirical estimates of $\bF^c$ to solve \eqref{eq:RnS}, this may not reliably identify the  correct solution $i^c$ with a high probability~\citep{songnelsonhong2015}. Instead, we take an inference approach, constructing a data-driven, computationally tractable confidence set $I_n \subset [k]$ such that
\begin{equation}
\label{eq:asymptotic-coverage}
    \liminf\nolimits_{n \to \infty} \Pr\{i^c \in I_n\} \geq 1 - \alpha.
\end{equation}
To construct $I_n$, we first establish the following lemma, generalizing Theorem~1.1 in \cite{chang1992optimal} to construct MCB CIs without requiring point estimates of $\eta_i(\bF^c)$, and to allow for asymptotic coverage guarantees.
For $x \in \mathbb{R}$, let $x^+ \triangleq \max\set{0,x}$ and $x^- \triangleq \max\set{0,-x}$.

\begin{lemma}[Asymptotic MCB-CI]
\label{lem:asymptotic-mcb-ci}
    Let $\alpha \in (0,1)$.
    If for each $i\in [k]$ there exists a sequence of random variables 
    $\{U_{i\ell,n}\}_{n\ge 1, \ell \neq i}$ such that
    \begin{equation} \label{eq:asymptotic-mcb-ci-1}
    \liminf_{n\to\infty} \Pr \big\{ \eta_i(\bF^c) - \eta_\ell(\bF^c) \leq 
    U_{i\ell,n}, \forall \ell \neq i \big\} \geq 1 - \alpha,
    \end{equation}
    then
    \begin{equation} \label{eq:asymptotic-mcb-ci-2} 
    \liminf_{n\to\infty} \Pr \big\{ \eta_i(\bF^c) - 
    \max\nolimits_{\ell  \neq i} {\eta_\ell(\bF^c)} \in 
    [D_{i,n}^-, D_{i,n}^+], \forall  i \in [k] \big\} \geq 1 - \alpha,
    \end{equation} 
    where $D_{i,n}^+ \triangleq \big(\min_{\ell  \neq i}\set{U_{i\ell,n}}\big)^+$, 
    $I_n \triangleq \set{i \in [k]: D_{i,n}^+ > 0}$, and 
    $D_{i,n}^- = 0$ if $I_n = \set{i}$, and 
    $D_{i,n}^- = -\big(\min_{\ell  \in I_n: \ell \neq i}\set{-U_{\ell i,n}}\big)^-$, 
    otherwise.
    Moreover, 
    $\liminf_{n\to\infty}\Pr\{i^c\in I_n\}\geq 1-\alpha$.
\end{lemma}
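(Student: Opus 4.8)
The plan is to derive both conclusions of the lemma from a single high-probability event anchored at the (unknown but unique) true best index $i^c$, exploiting the sign pattern that optimality of $i^c$ forces on the comparison parameters $\theta_i \triangleq \eta_i(\bF^c) - \max_{\ell \neq i}\eta_\ell(\bF^c)$. The structural fact that drives everything is this: since $i^c$ is the unique maximizer, $\theta_{i^c} > 0$, whereas for every $i \neq i^c$ we have $\max_{\ell\neq i}\eta_\ell(\bF^c) = \eta_{i^c}(\bF^c)$ and hence $\theta_i = \eta_i(\bF^c) - \eta_{i^c}(\bF^c) < 0$. I would then fix the event
\[
A_n \triangleq \set{\eta_{i^c}(\bF^c) - \eta_\ell(\bF^c) \le U_{i^c\ell,n}, \ \forall\, \ell \neq i^c}
\]
and invoke hypothesis \eqref{eq:asymptotic-mcb-ci-1} at the \emph{single} index $i = i^c$ to obtain $\liminf_{n\to\infty}\Pr(A_n) \ge 1-\alpha$. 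The remainder is purely deterministic: I would prove the two set inclusions $A_n \subseteq \set{i^c \in I_n}$ and $A_n \subseteq \set{\theta_i \in [D_{i,n}^-, D_{i,n}^+] \ \forall\, i \in [k]}$, after which taking $\liminf$ delivers both \eqref{eq:asymptotic-mcb-ci-2} and the ``Moreover'' coverage claim simultaneously.

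For the first inclusion, on $A_n$ each $U_{i^c\ell,n} \ge \eta_{i^c}(\bF^c) - \eta_\ell(\bF^c) > 0$, so $\min_{\ell\neq i^c}U_{i^c\ell,n} > 0$, giving $D_{i^c,n}^+ > 0$ and thus $i^c \in I_n$; this already settles the final statement. For the second inclusion I would verify the two-sided containment index by index, splitting on $i = i^c$ versus $i \neq i^c$. The upper bounds are the easy half. When $i \neq i^c$, containment holds for free because $\theta_i < 0 \le D_{i,n}^+$; when $i = i^c$, the defining inequalities of $A_n$ give $\theta_{i^c} = \min_{\ell\neq i^c}\big(\eta_{i^c}(\bF^c)-\eta_\ell(\bF^c)\big) \le \min_{\ell\neq i^c}U_{i^c\ell,n} \le D_{i^c,n}^+$. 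The lower bound for $i = i^c$ is likewise immediate, since $D_{i^c,n}^- \le 0 < \theta_{i^c}$ regardless of which branch of the definition of $D_{i^c,n}^-$ is active.

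The delicate step, and the one I expect to be the main obstacle, is the lower bound for $i \neq i^c$, together with the bookkeeping needed to read off $D_{i,n}^-$. Because $A_n$ already forces $i^c \in I_n$ with $i^c \neq i$, the set $I_n \neq \set{i}$, so the ``otherwise'' branch applies and $i^c$ belongs to the index set $\set{\ell \in I_n : \ell \neq i}$. Unwinding the nested negative part, $-\big(\min_{\ell\in I_n:\ell\neq i}\set{-U_{\ell i,n}}\big)^- = -\big(\max_{\ell\in I_n:\ell\neq i}U_{\ell i,n}\big)^+$, so it suffices to show $\eta_{i^c}(\bF^c)-\eta_i(\bF^c) \le \big(\max_{\ell\in I_n:\ell\neq i}U_{\ell i,n}\big)^+$. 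This follows from the single inequality $\eta_{i^c}(\bF^c)-\eta_i(\bF^c) \le U_{i^c i,n}$ contained in $A_n$ (taking $\ell = i$) combined with $i^c \in I_n$, which places $U_{i^c i,n}$ inside the maximum. The conceptual crux is recognizing that membership $i^c \in I_n$ --- not any additional probabilistic control beyond the event $A_n$ --- is exactly what validates the asymmetric lower endpoint; the rest is careful simplification of the $(\cdot)^+$ and $(\cdot)^-$ operations and confirming that no event other than $A_n$ is ever needed.
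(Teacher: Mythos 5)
Your proof is correct and follows essentially the same route as the paper's: both anchor a single high-probability event at $i^c$ (your $A_n$ is the paper's $E_1^n$) and reduce everything to the deterministic inclusions $A_n \subseteq \set{i^c \in I_n}$ and $A_n \subseteq \set{\eta_i(\bF^c) - \max_{\ell\neq i}\eta_\ell(\bF^c) \in [D_{i,n}^-, D_{i,n}^+],\ \forall i}$. The only difference is that the paper imports the key set inclusion from the proof of Theorem~1.1 in \citet{chang1992optimal} after a change of notation, whereas you verify it from scratch (correctly, including the unwinding of the $(\cdot)^+$ and $(\cdot)^-$ operations in the lower endpoint), so your argument is a self-contained version of the cited one.
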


\Cref{lem:asymptotic-mcb-ci} states that if, for each $i \in [k]$, we construct  simultaneous $1-\alpha$ CIs for the $k-1$ pairwise differences $\{\eta_i(\bF^c) - \eta_\ell(\bF^c)\}_{\ell\neq i}$, then we can construct MCB CIs that cover the difference between each solution's mean and the best of the rest with the same probability. 
To obtain the bounds $\{U_{i\ell,n}\}_{\ell \neq i}$  satisfying \eqref{eq:asymptotic-mcb-ci-1}, we proceed in three steps.
(i)~We define a data-driven ambiguity set $\mathcal{U}_\alpha$
to capture uncertainty in $\bF^c$ using EL\@. 
(ii)~We introduce a tractable linear model $\eta_i^L(\bF)$ of $\eta_i(\bF)$ for $\bF \approx \bF^c$.
(iii)~We construct a confidence region for $\{\eta_i(\bF^c) - \eta_\ell(\bF^c)\}_{\ell\neq i}$ for each $i\in[k]$ by solving  worst-case optimization problems with $\{\eta_i^L(\bF) - \eta_\ell^L(\bF)\}_{\ell\neq i}$ as objectives subject to $\bF\in \mathcal{U}_\alpha$ and derive $I_n$ from the maximizers via \Cref{lem:asymptotic-mcb-ci}.

\section{Ambiguity Set and Theoretical Coverage Guarantee}
\label{sec:as}

In this section, we construct an ambiguity set for a generic random vector $\vect{Y}_p \in \mathbb{R}^{q}$ with variance-covariance matrix $\matr{V}_p$ for $p\in[m]$. Then, in \Cref{sec:optimization}, we establish the confidence set in \Cref{lem:asymptotic-mcb-ci}.

For each $p\in[m]$, let $\vect{Y}_{pj}$, $j \in [n_p]$ be i.i.d.\ samples of $\vect{Y}_p$. For $\vect{a}, \vect{b}\in\mathbb{R}^q$, we write $\vect{a} \geq \vect{b}$ if $a_i \geq b_i$, $i \in [q]$. We construct a confidence region containing  $\bar{\bmu} \triangleq \sum_{p=1}^m \E[\vect{Y}_p]$ with probability at least $1-\alpha$ by finding  $\vect{B} \triangleq \{B_\ell\}_{\ell\in[q]} \in \mathbb{R}^q$ such that $\Pr\{\bar{\bmu} \leq \vect{B}\} \geq 1-\alpha$.
For $\bmu\in\mathbb{R}^q$, we define 
\begin{equation} \label{eq:R}
    R(\bmu) \triangleq \max\Big\{\, \prod\nolimits_{p=1}^m\prod\nolimits_{j=1}^{n_p} n_p {w_{pj}} \colon \sum\nolimits_{p=1}^m \sum\nolimits_{j=1}^{n_p} \vect{Y}_{pj} w_{pj} = \bmu, \quad \vect{w}_p \in \Delta_{n_p}, \quad  p \in [m] \, 
    \Big\}.
\end{equation}
Here, $\prod_{p=1}^m\prod_{j=1}^{n_p} n_p {w_{pj}}$ is the ratio between a nonparametric likelihood and the maximum likelihood under uniform weights. 
Therefore, 
$R(\bmu)$ is the maximum EL over the set of empirical distributions $\bw$ such that the sum of the means of  $\vect{Y}_p, p\in[m],$ stipulated by $\bw$, equals $\bmu$.
The next result provides the asymptotic theory to infer the mean of the sum of multi-dimensional vectors.

\begin{theorem}
    \label{thm:multisample-el} 
    Suppose that $\vect{Y}_{p}\in\mathbb{R}^q$, $p\in[m],$ are mutually independent with  positive definite variance-covariance matrix $\matr{V}_p$. Moreover, for each $p\in[m]$, $n_p/n \to \beta_p > 0$ as $n \to \infty$, where $n = (1/m)\sum_{p=1}^m n_p$.
    Then, $- 2\log(R(
    \bar{\bmu})) \xrightarrow{D}
    \chi_{q}^2$
    as $n \to \infty$.
\end{theorem}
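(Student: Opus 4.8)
The plan is to decouple the joint optimization in \eqref{eq:R} into $m$ single-sample empirical-likelihood problems tied together by one linear constraint, reduce the resulting profile to a constrained quadratic program, and then apply the multivariate central limit theorem together with the law of large numbers. Writing $\bmu_p \triangleq \E[\vect{Y}_p]$ and centering the data as $\tilde{\vect{Y}}_{pj} \triangleq \vect{Y}_{pj} - \bmu_p$, the simplex constraints $\vect{w}_p \in \Delta_{n_p}$ force $\sum_j w_{pj} = 1$, so the mean constraint $\sum_{p,j} \vect{Y}_{pj} w_{pj} = \bar{\bmu}$ is equivalent to $\sum_{p=1}^m \vect{s}_p = \vect{0}$, where $\vect{s}_p \triangleq \sum_{j=1}^{n_p} w_{pj}\tilde{\vect{Y}}_{pj}$ is the centered weighted mean of source $p$. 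Because the objective factorizes across $p$, I would first profile out the within-source weights at fixed $\vect{s}_p$, yielding the exact identity
\begin{equation*}
-2\log R(\bar{\bmu}) = \min\Big\{\, \sum\nolimits_{p=1}^m \big(-2\log R_p(\vect{s}_p)\big) \colon \sum\nolimits_{p=1}^m \vect{s}_p = \vect{0} \,\Big\},
\end{equation*}
where $R_p(\vect{s}_p)$ is the ordinary one-sample empirical-likelihood ratio for source $p$ at centered mean deviation $\vect{s}_p$. This isolates all coupling into the single linear constraint $\sum_p \vect{s}_p = \vect{0}$.

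Next I would invoke the one-sample theory of \citet{owenEL}. Writing $\vect{a}_p \triangleq \bar{\vect{Y}}_p - \bmu_p$ and letting $\hat{\matr{V}}_p$ denote the sample covariance of source $p$, the standard expansion of the one-sample log-likelihood ratio about the sample mean gives
\begin{equation*}
-2\log R_p(\vect{s}_p) = n_p(\vect{a}_p - \vect{s}_p)^\top \hat{\matr{V}}_p^{-1}(\vect{a}_p - \vect{s}_p) + r_p(\vect{s}_p),
\end{equation*}
with remainder $r_p$ negligible uniformly over $\norm{\vect{s}_p} \le C n^{-1/2}$; existence of an interior maximizer with all $w_{pj}>0$ holds with probability tending to one, since the central limit theorem places $\bmu_p + \vect{s}_p$ in the interior of the convex hull of $\{\vect{Y}_{pj}\}_j$ for such $\vect{s}_p$.

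I would then minimize the sum of the quadratic lead terms over $\sum_p \vect{s}_p = \vect{0}$. Setting $\matr{A}_p \triangleq n_p\hat{\matr{V}}_p^{-1}$, a Lagrange-multiplier computation gives $\vect{a}_p - \vect{s}_p^\ast = \matr{A}_p^{-1}\vect{\nu}$ with $\vect{\nu} = (\sum_p \matr{A}_p^{-1})^{-1}\sum_p \vect{a}_p$, so the optimal value is
\begin{equation*}
\Big(\sum\nolimits_{p=1}^m \vect{a}_p\Big)^\top \Big(\sum\nolimits_{p=1}^m \tfrac{1}{n_p}\hat{\matr{V}}_p\Big)^{-1}\Big(\sum\nolimits_{p=1}^m \vect{a}_p\Big).
\end{equation*}
A magnitude check confirms $\vect{s}_p^\ast = \vect{a}_p - n_p^{-1}\hat{\matr{V}}_p\vect{\nu} = O_p(n^{-1/2})$, so the minimizer stays inside the neighborhood where the quadratic approximation holds and the aggregated remainder is $o_p(1)$.

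Finally, $\sum_p \vect{a}_p = \sum_p(\bar{\vect{Y}}_p - \bmu_p)$ is a sum of independent, asymptotically normal source means, so the multivariate central limit theorem yields $(\sum_p \tfrac{1}{n_p}\matr{V}_p)^{-1/2}\sum_p \vect{a}_p \xrightarrow{D} \N(\vect{0},\matr{I}_q)$, where $n_p/n\to\beta_p$ keeps $n\sum_p n_p^{-1}\matr{V}_p \to \sum_p \beta_p^{-1}\matr{V}_p$ fixed and positive definite; since $\hat{\matr{V}}_p \to \matr{V}_p$ almost surely, Slutsky's theorem turns the displayed quadratic form into $\chi_q^2$ in the limit. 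The main obstacle is the uniform control of the one-sample remainder $r_p$ through the coupling: I must bound $r_p$ uniformly over the $O_p(n^{-1/2})$ range explored by the profile minimizer and argue the minimizer cannot escape that range, so that the errors do not accumulate after minimizing over $\sum_p \vect{s}_p = \vect{0}$. An equivalent route avoiding the profile is a single joint Lagrangian with one multiplier $\vect{\lambda}$ for the mean constraint and per-source multipliers for the simplex constraints; there the analogous difficulty is establishing $\vect{\lambda} = O_p(n^{-1/2})$ while simultaneously resolving the $m$ coupled normalization equations.
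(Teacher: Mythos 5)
Your route is genuinely different in execution from the paper's, though the starting point is the same: the paper's auxiliary problem \eqref{eq:pf-opt} introduces exactly your slack variables (its $\bmu_p$ is your $\vect{s}_p$ after centering), but it then solves the full joint KKT system, observing that all sources share one multiplier $\lambst$, and does the hard work on the \emph{dual} side by proving $\norm{\lambst}/n = O_p(n^{-1/2})$ via Owen's Lemma~11.2 and a chain of eigenvalue inequalities (Step~2 of \Cref{prop:opt_solution}). You instead profile out the within-source weights and control the \emph{primal} quantities $\vect{s}_p$, importing the one-sample quadratic expansion per source. Your leading-order algebra is correct and reproduces the paper's quadratic form $\big(\sum_p \overline{\vect{Y}}_p\big)^\top\big(\sum_p \widehat{\matr{V}}_p/n_p\big)^{-1}\big(\sum_p \overline{\vect{Y}}_p\big)$, and your CLT step is if anything cleaner: since $m$ is fixed, a per-source CLT plus $n_p/n \to \beta_p$ suffices, whereas the paper verifies a triangular-array Lindeberg--Feller condition.

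The gap is the one you name yourself and then do not close: uniform control of the one-sample remainders $r_p$ over the region explored by the profile minimizer, together with the localization $\norm{\vect{s}_p^*} = O_p(n^{-1/2})$. Checking that the minimizer of the \emph{quadratic approximation} is $O_p(n^{-1/2})$ does not yet show the minimizer of the \emph{exact} profile stays in that range, and without that the aggregated remainder cannot be declared $o_p(1)$. This is precisely where the substance of the paper's proof lives, so leaving it as an acknowledged obstacle means the proposal is a plan rather than a proof. The gap is fillable along standard lines: $-\log R_p(\cdot)$ is convex (mix optimal weight vectors at two mean values and use concavity of $\sum_j \log w_{pj}$); the feasible point $\vect{s}_p = \vect{a}_p - m^{-1}\sum_{p'}\vect{a}_{p'}$ shows the optimal value is $O_p(1)$; and a lower bound of the form $-2\log R_p(\vect{s}_p) \geq c\, n_p \norm{\vect{a}_p - \vect{s}_p}^2$ on a neighborhood whose radius is controlled by $\max_j\norm{\vect{Y}_{pj}} = o(n^{1/2})$ a.s.\ (Owen's Lemma~11.2, the same ingredient the paper uses) then forces $\norm{\vect{a}_p - \vect{s}_p^*} = O_p(n^{-1/2})$ and validates the uniform expansion. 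Until you carry out that step, the argument is incomplete at exactly the point where the paper's Steps~2--4 of \Cref{prop:opt_solution} do their work.
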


While EL generally permits  working with rank-deficient variance-covariance matrices (see \cite{Owen1988}), the limiting EL distribution’s degrees of freedom depend on the (generally unknown) rank. To simplify analysis, we work full rank matrices.
\Cref{thm:multisample-el} extends existing results. 
For $m=1$, it recovers  Theorem~3.2 in~\cite{owenEL}, which focuses on the mean inference for a random vector. For $q = 1$, 
\Cref{thm:multisample-el} corresponds
to Theorem~4 in \cite{lam2017optimization}, which studies the mean inference for sums of random variables.

\Cref{thm:multisample-el} guarantees the event $\{-2\log R(\bar{\bmu})\leq \chi_{q,1-\alpha}^2\}$ holds with asymptotic probability $1-\alpha$. Since $\log$ is monotone, 
$\log R(\bar{\bmu}) = \max\big\{\sum_{p=1}^m\sum_{j=1}^{n_p} \log(n_p {w_{pj}}) \colon \sum_{p=1}^m \sum_{j=1}^{n_p} \vect{Y}_{pj} w_{pj} = \bar{\bmu}, \vect{w}_p \in \Delta_{n_p},  p \in [m]\big\}$. 
This yields the data-driven ambiguity set
\begin{equation}
\label{eq:uncertaintyset}
    \mathcal{U}_\alpha \triangleq \Big\{ \bw \in \mathbb{R}^{n_1 + \cdots + n_m} 
    \colon -2 \sum\nolimits_{p=1}^m \sum\nolimits_{j=1}^{n_p} \log(n_p w_{pj}) 
    \leq \chi^2_{q,1-\alpha},\; \bw_p \in \Delta_{n_p},\; p \in [m] \Big\},
\end{equation}
where $\Delta_s \triangleq \{ \boldsymbol{v} \in \mathbb{R}^s \colon 
\boldsymbol{v} \geq 0, \sum_j v_j = 1 \}$ is the  
probability simplex. Each $\bw \in \mathcal{U}_\alpha$ defines a reweighted empirical distribution
$\bF$, where $F_p(\bx_p) \triangleq 
\sum_{j=1}^{n_p} w_{pj} \one{\bX_{pj} = \bx_p}$ and $\delta\{\cdot\}$ is an indicator function.
Let $\bFhat \triangleq (\Fhat_1,\ldots,\Fhat_m)$ 
be the empirical distribution function (edf) 
constructed from the input data. 

\begin{figure}[tbp]
    \centering
    \includegraphics[width=0.4\textwidth]{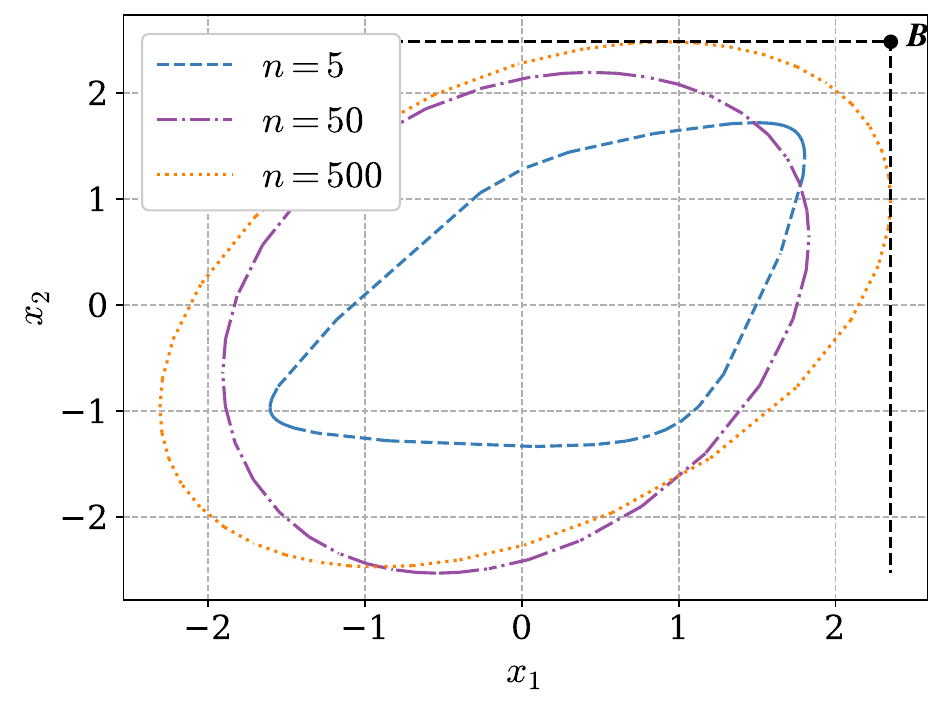}
\caption{
Illustration of the confidence set $C_\alpha$ given in \eqref{eq:confidence-set} for sample sizes $n=5, 50, 500$ and a polyhedral superset of $C_\alpha$
defined by the optimal values $\vect{B} = \{B_\ell\}_{\ell \in [2]}$ in \eqref{eq:generic_opt}
for $n=500$,
with $\alpha=0.05$, $q=2$, $m=1$ and $\vect{Y}_{\cdot j}$  normally distributed with zero mean, unit variance and  correlation equal to $0.4$.
\label{fig:ellipsoid_manybox}
Each plot of the boundary of $C_\alpha$ is scaled and translated to facilitate comparison of their shapes. The boundary of $C_\alpha$ is generated by plotting $\argmax_{\vect{c} \in C_\alpha} \vect{a}^\top \vect{c}$ for every possible direction $\vect{a} \in \mathbb{R}^2$.
}
\end{figure}

Compared to \eqref{eq:R}, \eqref{eq:uncertaintyset} drops the maximization, implying that $\mathcal{U}_\alpha$ contains weights $\vect{w}$ for which the weighted average of $\{\vect{Y}_{pj}\}_{j\in[n_p]}$ equals $\bar{\bmu}$ with an asymptotic probability of at least $1-\alpha$.
This leads to a confidence region $C_\alpha$, defined as all mean values attainable by distributions in $\mathcal{U}_\alpha$,
\begin{equation}
\label{eq:confidence-set}
C_\alpha \triangleq \Big\{ \sum\nolimits_{p=1}^m \sum\nolimits_{j=1}^{n_p} 
\vect{Y}_{pj} w_{pj} \colon \bw \in \mathcal{U}_\alpha \Big\}.
\end{equation}
\Cref{thm:multisample-el} ensures that $\Pr\{\bar{\bmu} \in C_\alpha\}$ approaches $1-\alpha$ as the sample size tends to infinity.
To compute $B_\ell$ for each $\ell \in [q]$, we maximize the vectors in $C_\alpha$ along the $\ell$th dimension:
\begin{equation}\label{eq:generic_opt}
    {B}_\ell 
    \triangleq \max\nolimits_{\vect{c} \in C_\alpha} \vect{e}_\ell^\top \vect{c},
\end{equation}
where 
$\vect{e}_\ell \in \mathbb{R}^q$ is the $\ell$th standard basis vector in $\mathbb{R}^q$. 
This yields $C_\alpha \subset \{\boldsymbol{\mu} \in \mathbb{R}^q \colon 
\boldsymbol{\mu} \leq \vect{B}\}$. 
\Cref{fig:ellipsoid_manybox} depicts  $C_\alpha$ for different sample sizes and  $\vect{B}$ for the $n=500$ case.

Our next result shows that the bounds $\vect{B}$ provide the correct coverage as the data size grows. 
\begin{lemma}
    \label{lem:mean.CR.coverage}
    Under the hypotheses of \Cref{thm:multisample-el}, 
    $\liminf_{n \to \infty} \Pr\{\bar{\bmu} \leq \vect{B}\} \geq 1 - \alpha$.
\end{lemma}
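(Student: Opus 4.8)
The plan is to reduce the coverage statement for the componentwise bounds $\vect{B}$ to the chi-squared limit already established in \Cref{thm:multisample-el}. The backbone consists of two facts: a purely deterministic inclusion $C_\alpha \subseteq \{\bmu \in \mathbb{R}^q \colon \bmu \leq \vect{B}\}$, which holds by construction of $B_\ell = \max_{\vect{c}\in C_\alpha}\vect{e}_\ell^\top \vect{c}$ in \eqref{eq:generic_opt}, and the event identity $\{\bar{\bmu}\in C_\alpha\}=\{-2\log R(\bar{\bmu})\leq \chi^2_{q,1-\alpha}\}$. Given these, the conclusion follows by monotonicity of probability together with a continuity-point argument applied to the limit in \Cref{thm:multisample-el}.

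First I would establish the event identity. Unfolding $R$ in \eqref{eq:R}, taking logarithms (which is monotone and turns the product into a sum) and multiplying by $-2$ converts the maximization into a minimization, so $-2\log R(\bar{\bmu}) = \min\big\{-2\sum_{p}\sum_{j} \log(n_p w_{pj}) \colon \sum_{p}\sum_{j} \vect{Y}_{pj} w_{pj} = \bar{\bmu},\ \vect{w}_p\in\Delta_{n_p}\big\}$. Comparing this with the definition of $\mathcal{U}_\alpha$ in \eqref{eq:uncertaintyset} and of $C_\alpha$ in \eqref{eq:confidence-set}, the inequality $-2\log R(\bar{\bmu})\leq \chi^2_{q,1-\alpha}$ holds if and only if there exists a feasible weight vector $\bw$ meeting both the mean constraint $\sum_{p}\sum_{j}\vect{Y}_{pj}w_{pj}=\bar{\bmu}$ and the empirical-likelihood budget, i.e.\ $\bw\in\mathcal{U}_\alpha$ with mean $\bar{\bmu}$, which is exactly $\bar{\bmu}\in C_\alpha$. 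I would also record the degenerate case in which $\bar{\bmu}$ lies outside the convex hull of the pooled data, where the feasible set in \eqref{eq:R} is empty; under the standard empirical-likelihood convention $R(\bar{\bmu})=0$, so $-2\log R(\bar{\bmu})=+\infty$ and $\bar{\bmu}\notin C_\alpha$, leaving the identity intact.

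Next I would invoke \Cref{thm:multisample-el}. Because the $\chi^2_q$ cumulative distribution function is continuous everywhere, the threshold $\chi^2_{q,1-\alpha}$ is a continuity point of the limiting law, so convergence in distribution yields $\Pr\{-2\log R(\bar{\bmu})\leq \chi^2_{q,1-\alpha}\}\to \Pr\{\chi^2_q\leq \chi^2_{q,1-\alpha}\}=1-\alpha$. By the event identity of the previous step, this is precisely $\Pr\{\bar{\bmu}\in C_\alpha\}\to 1-\alpha$.

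Finally, the deterministic inclusion $C_\alpha\subseteq\{\bmu\leq\vect{B}\}$ gives the event inclusion $\{\bar{\bmu}\in C_\alpha\}\subseteq\{\bar{\bmu}\leq\vect{B}\}$, hence $\Pr\{\bar{\bmu}\leq\vect{B}\}\geq \Pr\{\bar{\bmu}\in C_\alpha\}$ for every $n$; taking $\liminf_{n\to\infty}$ and combining with the limit above yields $\liminf_{n \to \infty} \Pr\{\bar{\bmu} \leq \vect{B}\} \geq 1 - \alpha$. I expect essentially no difficulty in this argument, since it is a short deterministic-inclusion-plus-limit chain; the only point requiring care is the precise equivalence $\{\bar{\bmu}\in C_\alpha\}=\{-2\log R(\bar{\bmu})\leq\chi^2_{q,1-\alpha}\}$ and the bookkeeping of the empty-feasible-set case, both of which are handled in the second step.
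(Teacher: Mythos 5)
Your proposal is correct and follows essentially the same route as the paper: the deterministic inclusion $C_\alpha \subseteq \{\bmu \leq \vect{B}\}$ combined with the convergence $\Pr\{\bar{\bmu}\in C_\alpha\}\to 1-\alpha$ from \Cref{thm:multisample-el}. The extra bookkeeping you supply (the explicit event identity and the empty-feasible-set convention $R(\bar{\bmu})=0$) is left implicit in the paper's two-line argument but is consistent with it.
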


\section{Linear Approximation of Performance Measures}
\label{sec:model}

In this section, we develop a linear approximation of the performance measures $\eta_i$ 
(see, e.g., \cite[Section~20.1]{Vaart1998}).
Under some technical assumptions stated below, we can
approximate $\eta_i$ 
at $\bF = (F_1, \ldots, F_p)$ 
via the linear functional in $\bG$:
\begin{equation} \label{eq:nonparam.delta}
     \eta_i^L(\bG; \bF) \triangleq \eta_i(\bF) + \sum\nolimits_{p=1}^m \int \IF_{ip}(\bx_p;\bF) \, \mathrm{d}\bG_p(\bx_p),
\end{equation} 
where $\IF_{ip}(\bx_p;\bF)$ is the influence function of $\eta_i$, that is, 
\begin{equation} \label{eq:influence_function_derivative}
\IF_{ip}(\bx_p;\bF) = \lim_{\epsilon \to 0^+} \frac{\eta_i(F_1, \ldots, F_{p-1}, F_{p} + \epsilon (\delta_{\bx_p}-F_p), F_{p+1}, \ldots, F_{m}) - \eta_i(F_1,\ldots,F_m)}{\epsilon},
\end{equation}
where $\delta_{\bx_p}$ puts the unit probability on $\bx_p$. 
The expansion \eqref{eq:nonparam.delta} holds under certain conditions on $\bG$, $\eta_i$, and $\bF$ which are discussed below.
The expansion in~\eqref{eq:nonparam.delta} resembles a parametric first-order $m$-variate Taylor's expansion, where the influence function, $\IFipx$, acts as a functional derivative and the integral captures the directional change in $\eta_i$ owing to perturbations in the input distribution.

For each $i \in [k]$, the next follows  directly from Proposition~2 in \cite{lam2017optimization}.

\begin{lemma}
\label{eq:etaidifferentiable}
    Let \Cref{assump:input.length} 
    hold.
    Let $\bF$ be either $\bF^c$
    or $\widehat{\bF}$. 
    Then for each $i \in [k]$,
    the influence functions $\IF_{ip}(\cdot;\bF)$, 
    $p \in [m]$, are well-defined, have zero mean, and the identity in \eqref{eq:influence_function_derivative}
    holds true.
\end{lemma}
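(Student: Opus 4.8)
The plan is to exploit the fact that $\eta_i(\bF)$ is \emph{multilinear} in the measures $F_1,\dots,F_m$: in the integral representation each $F_p$ enters exactly $T_{ip}$ times, once for every replication $t\in[T_{ip}]$. Consequently the Gateaux derivative in \eqref{eq:influence_function_derivative}, which perturbs only the $p$-th coordinate along the contamination direction $\delta_{\bx}-F_p$, can be computed by an ordinary product rule applied to the $T_{ip}$ factors carrying $F_p$. Concretely, writing $F_p^\epsilon\triangleq(1-\epsilon)F_p+\epsilon\,\delta_{\bx}$ and substituting into the integral for $\eta_i$, I would expand $\prod_{t=1}^{T_{ip}}\mathrm{d}F_p^\epsilon(\bx_{p,t})=\prod_{t=1}^{T_{ip}}\big[(1-\epsilon)\,\mathrm{d}F_p(\bx_{p,t})+\epsilon\,\mathrm{d}\delta_{\bx}(\bx_{p,t})\big]$ into a binomial-type sum over subsets $S\subseteq[T_{ip}]$, collect powers of $\epsilon$, and read off the coefficient of the linear term.

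Carrying this out, the $\epsilon^1$-coefficient of the resulting polynomial is
\[
\IF_{ip}(\bx;\bF)=\sum\nolimits_{s=1}^{T_{ip}}\big[g_{ip,s}(\bx;\bF)-\eta_i(\bF)\big],
\]
where $g_{ip,s}(\bx;\bF)$ denotes the integral of $h_i$ with the $s$-th copy of the $p$-th input pinned at $\bx$ and every remaining coordinate integrated against its own $F_{p'}$. This both shows the limit in \eqref{eq:influence_function_derivative} exists and furnishes the explicit form. The zero-mean property is then immediate: integrating $\IF_{ip}(\cdot;\bF)$ against $F_p$ turns each $g_{ip,s}(\cdot;\bF)$ back into $\eta_i(\bF)$ by Fubini, so every summand cancels and $\int \IF_{ip}(\bx;\bF)\,\mathrm{d}F_p(\bx)=0$.

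The main obstacle is \emph{integrability} of the partial integrals appearing as the coefficients of this polynomial. The expansion shows $\eta_i(F_1,\dots,F_p^\epsilon,\dots,F_m)$ is a polynomial in $\epsilon$ of degree $T_{ip}$ whose coefficients are integrals of $h_i$ with some copies of the $p$-th input pinned at $\bx$; once each coefficient is finite, differentiating the polynomial at $\epsilon=0$ is trivial and no separate limit--integral interchange is needed. For $\bF=\bF^c$ this finiteness is exactly where \Cref{assump:input.length} does the work: the bound $\int |h_i|^2\prod_{p=1}^m\prod_{t=1}^{T_{ip}}\mathrm{d}F_p^c<\infty$ implies the $L^1$ bound $\int|h_i|\prod_{p=1}^m\prod_{t=1}^{T_{ip}}\mathrm{d}F_p^c<\infty$ by Cauchy--Schwarz (the product measure being a probability measure), and Tonelli then guarantees every pinned partial integral $g_{ip,s}(\cdot;\bF^c)$ is finite for $F_p^c$-almost every $\bx$ and that the expansion is legitimate via absolute convergence. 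For $\bF=\bFhat$ the argument is more elementary, since every integral collapses to a finite sum over the observed data and is therefore finite almost surely.

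Finally, I would note that this computation is precisely the $m$-sample generalization of Proposition~2 in \cite{lam2017optimization}: their single-population result treats $\E[h(\mathbf{X}_1,\dots,\mathbf{X}_T)]$ with $T$ i.i.d.\ draws, and the only change here is bookkeeping across the $m$ mutually independent sources and their respective replication counts $T_{ip}$. The structure of the argument---product-rule expansion, $L^1$ domination of the coefficients, and Fubini for the zero-mean identity---is identical, so once the integrability step is secured the remaining verification is routine.
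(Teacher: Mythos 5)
The paper does not actually prove this lemma: it is stated as following ``directly from Proposition~2 in \cite{lam2017optimization}'', and no argument appears in the appendix. Your proposal is therefore best read as a reconstruction of that cited proof, generalized from one input source to $m$ mutually independent sources, and it takes essentially the route the authors implicitly rely on: expand $\prod_{t=1}^{T_{ip}}\mathrm{d}F_p^\epsilon(\bx_{p,t})$ by the product rule, read off the $\epsilon^1$ coefficient $\sum_{s=1}^{T_{ip}}\big[g_{ip,s}(\bx;\bF)-\eta_i(\bF)\big]$, and get the zero-mean property from Fubini. The reduction of \Cref{assump:input.length} to an $L^1$ bound via Cauchy--Schwarz, and the Tonelli argument for a.e.-finiteness of the single-pin integrals, are both correct, as is the observation that the $\bF=\bFhat$ case collapses to finite sums.

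One step deserves more care. Your integrability argument secures only the degree-one coefficients $g_{ip,s}$. For the identity in \eqref{eq:influence_function_derivative} to hold as a genuine one-sided limit when $T_{ip}\geq 2$ and $\bF=\bF^c$, the quantity $\eta_i(F_1,\ldots,F_p^\epsilon,\ldots,F_m)$ must itself be finite for small $\epsilon>0$, and its polynomial expansion contains coefficients of $\epsilon^j$, $j\geq 2$, in which two or more copies of the $p$-th input are pinned at the \emph{same} point $\bx$. Finiteness of these diagonal integrals for $F_p^c$-a.e.\ $\bx$ does not follow from the product-measure moment bound in \Cref{assump:input.length}, since the diagonal is a null set for the product measure (e.g.\ $h(x,y)=|x-y|^{-1/2}$ is square-integrable on the unit square but infinite on the diagonal); note that the authors impose the separate repeated-index moment condition in \Cref{assump:sim_output_bounded} precisely because reused inputs are not controlled for free. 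So your sentence ``once each coefficient is finite, differentiating the polynomial at $\epsilon=0$ is trivial'' glosses over exactly the coefficients your Tonelli argument does not reach: you either need to assume or establish finiteness of the multi-pin integrals, or interpret the derivative in a weaker sense along the linear part. This caveat is inherited from the cited Proposition~2 rather than introduced by you, and it vanishes when $T_{ip}=1$ or $\bF=\bFhat$, but as written it is the one genuine gap in an otherwise sound argument.
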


Next, we develop sufficient conditions to guarantee
the following convergence for the worst-case linearization errors
over the distributions contained in the ambiguity set $\mathcal{U}_\alpha$:
\begin{equation}
\label{eq:worst-case-error}
\sup\nolimits_{\bw \in \mathcal{U}_\alpha} |\eta_i(\bw)  - \eta^L_i(\bw; \bF^c)|
= O_p(n^{-1})
\quad \text{and} \quad 
\sup\nolimits_{\bw \in \mathcal{U}_\alpha} |\eta_i(\bw)  - \eta^L_i(\bw; \bFhat)|
= O_p(n^{-1}).
\end{equation}%
Namely, the linearization in \eqref{eq:nonparam.delta} is asymptotically accurate 
uniformly for all distributions in 
$\mathcal{U}_\alpha$.
Convergence rates for the worst-case
linearization errors are also established in
\cite{Owen1988,lam2017optimization,Duchi2021}.
The following statement is an adaptation of Proposition~3 in \cite{lam2017optimization}.

\begin{proposition}
\label{thm:worst-case-expansion-error}
Let $\alpha \in (0,1)$ and  $i \in [k]$. 
Suppose that \Cref{assump:input.length,assump:data_size} hold.
Then,
convergence for the worst-case 
errors in 
\eqref{eq:worst-case-error} hold true.
\end{proposition}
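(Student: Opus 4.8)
The plan is to exploit that, under \Cref{assump:input.length}, each performance measure is a \emph{polynomial functional} of the input distributions: substituting the reweighted empirical $\bF$ with atoms $\{\bX_{pj}\}$ and weights $\bw$ into $\eta_i$ turns it into a polynomial in $\bw$ of multidegree $(T_{i1},\dots,T_{im})$, i.e.\ of finite total degree $d_i \triangleq \sum_{p} T_{ip}$. Consequently the functional Taylor expansion of $\eta_i$ about a center $\bF$ (either $\bF^c$ or $\bFhat$) terminates after $d_i$ terms with an \emph{exact} remainder consisting of the homogeneous parts of order $2,\dots,d_i$ in the perturbation $D_p \triangleq F_p - F_p^c$ (resp.\ $F_p - \Fhat_p$). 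The order-one part is precisely the influence-function integral in \eqref{eq:nonparam.delta} by \Cref{eq:etaidifferentiable}, so proving \eqref{eq:worst-case-error} reduces to bounding the order-$\geq 2$ remainder uniformly over $\bw \in \mathcal{U}_\alpha$ by $O_p(n^{-1})$.

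First I would quantify how close the weights in $\mathcal{U}_\alpha$ are to uniform. Writing $w_{pj} = (1 + u_{pj})/n_p$ with $\sum_j u_{pj} = 0$, the defining inequality in \eqref{eq:uncertaintyset} reads $\sum_{p,j} 2\big(u_{pj} - \log(1+u_{pj})\big) \le \chi^2_{q,1-\alpha}$. Elementary analysis of $u \mapsto u - \log(1+u)$ shows that, for \emph{every} $\bw \in \mathcal{U}_\alpha$, each factor obeys $n_p w_{pj} \ge e^{-\chi^2_{q,1-\alpha}/2}$ and $\sum_{p,j} u_{pj}^2 \le C_\alpha$ for a deterministic constant $C_\alpha$. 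Hence, setting $\Delta_{pj} \triangleq w_{pj} - 1/n_p = u_{pj}/n_p$, we obtain the \emph{deterministic, uniform} bound $\|\Delta_p\|_2 = O(n^{-1})$ for each $p$, and the atoms $n_p w_{pj}$ stay bounded away from $0$ and $\infty$. This is the multi-source, $\mathcal{U}_\alpha$-uniform analogue of the weight control underlying Proposition~3 of \citet{lam2017optimization}.

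Next I would bound each homogeneous remainder term. A generic order-$r$ term ($2 \le r \le d_i$) is an $r$-linear form $\int\!\cdots\!\int K(\bx_{p_1},\dots,\bx_{p_r})\,\mathrm{d}D_{p_1}\cdots\mathrm{d}D_{p_r}$, whose kernel $K$ is a partial integral of $h_i$ against the center distribution; by Jensen's inequality and the $L^2$ moment bound in \Cref{assump:input.length}, $K \in L^2$. Splitting $D_p = A_p + B_p$ with $A_p \triangleq F_p - \Fhat_p$ (reweighting, present only for the $\bF^c$ center) and $B_p \triangleq \Fhat_p - F_p^c$ (sampling error), I would estimate the resulting pieces separately: (i)~pure-$A$ forms by Cauchy--Schwarz on the tensor, $\big|\sum_{\mathbf j} K_{\mathbf j}\prod_{l=1}^r \Delta_{p_l j_l}\big| \le \|K\|_F \prod_{l=1}^r\|\Delta_{p_l}\|_2$, using $\|K\|_F = O_p(n^{r/2})$ (a law-of-large-numbers consequence of $K\in L^2$) against $\prod_{l}\|\Delta_{p_l}\|_2 = O(n^{-r})$; (ii)~pure-$B$ forms, which are degenerate $V$-statistics of order $r$ and hence $O_p(n^{-r/2})$; and (iii)~mixed forms, where each $B$-integration contributes an $O_p(n^{-1/2})$ factor and each $A$-integration an $O_p(n^{-1/2})$ factor via $\|\Delta_p\|_2$ against an $L^2$-bounded evaluation. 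Every piece is therefore $O_p(n^{-r/2}) \le O_p(n^{-1})$ for $r \ge 2$, uniformly in $\bw$ because the only $\bw$-dependence enters through the uniformly controlled $\|\Delta_p\|_2$. Summing the finitely many terms yields the $\bF^c$ bound in \eqref{eq:worst-case-error}; the $\bFhat$ bound is the special case in which $B_p \equiv 0$, so only the pure-$A$ estimates are needed.

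The main obstacle is the uniform control of the second-order ($r=2$) contribution, since it is the one that attains—and cannot beat—the $O_p(n^{-1})$ rate. The delicate point is to convert the merely $L^2$ smoothness of $h_i$ into the Frobenius-norm bound $\|K\|_F = O_p(n)$ and the degenerate-$V$-statistic rate $O_p(n^{-1})$ \emph{simultaneously with} uniformity over the whole set $\mathcal{U}_\alpha$; this is exactly where the deterministic weight bound $\sum_{p,j}u_{pj}^2 \le C_\alpha$ is essential, as it decouples the (random) kernel estimates from the (deterministically bounded) supremum over $\bw$. The remaining work is bookkeeping: verifying that the order-one term matches \eqref{eq:nonparam.delta} exactly—which follows from the zero-mean property of the influence function in \Cref{eq:etaidifferentiable}—and checking that the finitely many higher-order ($r \ge 3$) terms are genuinely negligible.
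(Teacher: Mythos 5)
Your proposal is correct, but it takes a very different form from what the paper actually writes: the paper's entire proof is a two-sentence reduction to Proposition~3 of \cite{lam2017optimization}, observing (a) that the only change is the chi-square quantile defining $\mathcal{U}_\alpha$ ($\chi^2_{k-1,1-\alpha}$ in place of $\chi^2_{1,1-\alpha}$), which only alters a constant, and (b) that although Lam's statement assumes an eighth moment, the proof only uses the second moment, which \Cref{assump:input.length} supplies. What you have done is reconstruct, essentially from scratch, the argument that the paper outsources to that citation: the polynomial (finite-degree multilinear) structure of $\eta_i$ in the weights, the deterministic control of the weights implied by the EL constraint in \eqref{eq:uncertaintyset} (your bounds $n_p w_{pj} \geq l(\alpha) > 0$ and $\sum_{p,j} u_{pj}^2 \leq C_\alpha$ are exactly the content of the paper's \Cref{lemma:weights-bound}, itself adapted from Lemmas~EC.1--EC.2 of \cite{lam2017optimization}), and the slot-by-slot $O_p(n^{-1/2})$ accounting via Cauchy--Schwarz for the reweighting direction and degenerate $V$-statistic rates for the sampling direction. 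Your reconstruction buys self-containedness and makes explicit \emph{why} only the second moment is needed (the $L^2$ kernel bounds), which the paper merely asserts; the paper's version buys brevity and avoids re-verifying the combinatorial bookkeeping for tied indices in the empirical partial integrals, which is the one place your sketch glosses over (the kernels centered at $\bFhat$ involve repeated data points, and controlling their second moments is where a condition like the second part of \Cref{assump:sim_output_bounded} would otherwise be invoked; here the diagonal terms carry combinatorial weight $O(n^{-1})$ so the conclusion survives, but this deserves an explicit line). Net: no gap in the mathematics, just a fully expanded proof where the paper gives a reduction.
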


\section{CI Halfwidths Derivation}
\label{sec:optimization}

In this section, we use the ambiguity set $\mathcal{U}_\alpha$ defined in~\Cref{sec:as} for $q = k-1$ and the linear approximation $\eta_i^L$ of $\eta_i$ in~\Cref{sec:model} to derive the CI bounds that satisfy the asymptotic coverage hypothesis~\eqref{eq:asymptotic-mcb-ci-1}. This hypothesis requires $k-1$ simultaneous comparisons separately for each $i \in [k]$. 

Let $U_{i\ell,n}$ be the optimal value of the following program:
\begin{equation}\label{eq:U_iell.definition}
U_{i\ell,n} \triangleq \max\nolimits_{\bw\in\mathcal{U}_\alpha} \;\; \eta^L_i(\bw; \bF^c) - \eta^L_\ell(\bw; \bF^c).
\end{equation}
Namely, $U_{i\ell,n}$ is the worst-case (linearization-based) performance gap between Solutions $i$ and $\ell$, computed over the distributions $\bw$ contained in $\mathcal{U}_\alpha$. We also denote the maximizer of~\eqref{eq:U_iell.definition} by $\bw^*_{i\ell}$. 

Since $\mathcal{U}_\alpha$ contains probability measures 
with support $\{\bX_{pj}\}_{j \in [n_p]}$ and corresponding weights $\bw = \{w_{pj}\}_{j \in [n_p]}$, the integral 
in~\eqref{eq:nonparam.delta} becomes a weighted sum with weights $\bw\in\mathcal{U}_\alpha$:
\begin{small}
\[
    \eta^L_i(\bw; \bF^c) - \eta^L_\ell(\bw; \bF^c) = \eta_i(\bF^c) - \eta_\ell(\bF^c) + \sum\nolimits_{p=1}^m \sum\nolimits_{j=1}^{n_p} \big( \IF_{ip}(\bX_{pj}; \bF^c) - \IF_{\ell p}(\bX_{pj}; \bF^c)  \big)w_{pj}.
\] 
\end{small}%
To demonstrate that 
the collection, $\{U_{i\ell,n}\}_{i \neq \ell}$, satisfies the asymptotic coverage  hypothesis~\eqref{eq:asymptotic-mcb-ci-1}, let us first replace the generic vector, $\vect{Y}_{pj}$, introduced in~\Cref{sec:as} with
\begin{equation} \label{eq:redefined.Y}
\vect{Y}_{pj} = \left\{m^{-1}\big(\eta_i(\bF^c) - \eta_\ell(\bF^c)\big) + \IFXipj - \IFXlpj\right\}_{\ell\neq i}.
\end{equation} 
We observe that $\{\vect{Y}_{pj}\}_{p\in[m], j\in[n_p]}$ are mutually independent across $p$ and i.i.d.\ across $j$ for each $p$. Moreover, for each $p\in[m]$ and $j\in[n_p]$, $\vect Y_{pj}\in\mathbb{R}^{k-1}$ has mean $m^{-1}(\eta_i(\bF^c) - \eta_\ell(\bF^c))_{\ell \neq i}$ and the covariance matrix element corresponding to the two Solutions $\ell$ and $\ell^\prime$ is given by $\Cov(\IFXipj - \IFXlpj, \IFXipj - {\IF}_{\ell^\prime p}(\vect X_{pj}; \bF^c))$. The following assumption ensures that the resulting covariance matrix is positive definite. 
\begin{assumption}\label{assump:pos.def.cov}
    For each $p\in [m]$, the variance-covariance matrix of $\big(\IF_{ip}(\bX_{p};\bF^c)\big)_{i\in[k]}$ is positive definite.
\end{assumption}

\Cref{assump:pos.def.cov} essentially excludes the case where a linear combination of all $\IF_{ip}(\bX_{p};\bF^c), i\in[k]$ has a degenerate distribution, which is unlikely to occur in most  problems in practice.
Under \Cref{assump:data_size,assump:pos.def.cov}, $\{\vect{Y}_{pj}\}_{p\in[m], j\in[n_p]}$ satisfy the conditions of \Cref{thm:multisample-el,lem:mean.CR.coverage}.
Furthermore, $\bar{\bmu} = \eta_i(\bF^c) - \eta_\ell(\bF^c)$ and $\sum_{p=1}^m\sum_{j=1}^{n_p}\vect{Y}_{pj}w_{pj} = \eta^L_i(\bw; \bF^c) - \eta^L_\ell(\bw; \bF^c)$.
Consequently, we can redefine $C_\alpha$ in \eqref{eq:confidence-set} as 
\begin{equation}\label{eq:redefined.Ca}
    C_\alpha = \{(\eta_i^L(\vect{w}; \bF^c) - \eta_\ell^L(\vect{w}; \bF^c))_{\ell \neq i}: \vect{w}\in\mathcal{U}_\alpha\},
\end{equation}
and the optimal value $B_\ell$ in~\eqref{eq:generic_opt} corresponds to $U_{i\ell,n}$ in~\eqref{eq:U_iell.definition}. 
Therefore, we obtain the following corollary as a direct consequence of~\Cref{lem:mean.CR.coverage}.

\begin{corollary}\label{cor:upper_bound_opt}
Under \Cref{assump:input.length,assump:data_size,assump:pos.def.cov}, the collection $\{U_{i\ell,n}\}_{i\neq \ell}$ given in~\eqref{eq:U_iell.definition} satisfies 
\eqref{eq:asymptotic-mcb-ci-1}, that is,
$\liminf_{n\to\infty} \Pr\{ \eta_i(\bF^c) - \eta_\ell(\bF^c) \leq U_{i\ell,n}, \forall \ell \neq i \} \geq 1-\alpha$. 
\end{corollary}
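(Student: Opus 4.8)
The plan is to treat this as essentially a verification that the specialization carried out in the paragraph preceding the statement places us exactly within the framework of \Cref{thm:multisample-el} and \Cref{lem:mean.CR.coverage}, so that the conclusion of the latter translates verbatim into \eqref{eq:asymptotic-mcb-ci-1}. Concretely, I would fix $i \in [k]$, set $q = k-1$, and take the generic random vector $\vect{Y}_{pj}$ of \Cref{sec:as} to be the one redefined in \eqref{eq:redefined.Y}. The remaining work is to confirm the hypotheses, identify $\bar{\bmu}$ and $\vect{B}$ with the pairwise differences and the $U_{i\ell,n}$, and then quote \Cref{lem:mean.CR.coverage}.

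First I would check the distributional hypotheses. Mutual independence of $\{\vect{Y}_{pj}\}$ across $p$ and the i.i.d. property across $j$ are inherited from the input data $\{\bX_{pj}\}$, and the ratio condition $n_p/n \to \beta_p > 0$ is \Cref{assump:data_size}. For positive-definiteness of $\matr{V}_p = \Cov(\vect{Y}_p)$, note that the constant mean shift $m^{-1}(\eta_i(\bF^c) - \eta_\ell(\bF^c))_{\ell\neq i}$ does not affect the covariance, so $\matr{V}_p$ is the covariance of $(\IFXipj - \IFXlpj)_{\ell\neq i}$. This vector is the image of $(\IF_{ip}(\bX_{pj};\bF^c))_{i\in[k]}$ under the linear map whose rows are $\vect{e}_i - \vect{e}_\ell$, $\ell\neq i$; these rows are linearly independent, so the map has full row rank $k-1$, and \Cref{assump:pos.def.cov} then transfers positive-definiteness to $\matr{V}_p$. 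Finiteness of the second moments is ensured by the square-integrability in \Cref{assump:input.length} together with \Cref{eq:etaidifferentiable}.

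Next I would pin down $\bar{\bmu}$ and $C_\alpha$. Because the influence functions have zero mean by \Cref{eq:etaidifferentiable}, each $\vect{Y}_p$ has mean $m^{-1}(\eta_i(\bF^c)-\eta_\ell(\bF^c))_{\ell\neq i}$, and summing over the $m$ sources gives $\bar{\bmu} = (\eta_i(\bF^c)-\eta_\ell(\bF^c))_{\ell\neq i}$. Using $\sum_{j} w_{pj} = 1$ for each $p$ (since $\vect{w}_p \in \Delta_{n_p}$), the weighted sum $\sum_{p}\sum_{j} \vect{Y}_{pj} w_{pj}$ collapses the constant term back to $(\eta_i(\bF^c)-\eta_\ell(\bF^c))_{\ell\neq i}$ and retains the influence-function term, so it equals $(\eta_i^L(\bw;\bF^c) - \eta_\ell^L(\bw;\bF^c))_{\ell\neq i}$. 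This is precisely the description \eqref{eq:redefined.Ca} of $C_\alpha$, whence $B_\ell = \max_{\vect{c}\in C_\alpha}\vect{e}_\ell^\top\vect{c} = \max_{\bw\in\mathcal{U}_\alpha}[\eta_i^L(\bw;\bF^c)-\eta_\ell^L(\bw;\bF^c)] = U_{i\ell,n}$.

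Finally I would invoke \Cref{lem:mean.CR.coverage} to obtain $\liminf_{n\to\infty}\Pr\{\bar{\bmu}\leq\vect{B}\} \geq 1-\alpha$; substituting $\bar{\bmu} = (\eta_i(\bF^c)-\eta_\ell(\bF^c))_{\ell\neq i}$ and $B_\ell = U_{i\ell,n}$ rewrites the event as $\{\eta_i(\bF^c)-\eta_\ell(\bF^c) \leq U_{i\ell,n},\ \forall \ell\neq i\}$, which is exactly \eqref{eq:asymptotic-mcb-ci-1}. I expect the only non-routine point to be the positive-definiteness transfer in the second step, namely confirming the full row rank of the difference map so that \Cref{assump:pos.def.cov}, stated for the $k$-vector of influence functions, actually yields a nondegenerate covariance for the $(k-1)$-vector of pairwise differences; the rest is bookkeeping that matches the generic template of \Cref{sec:as} to the vectors in \eqref{eq:redefined.Y}.
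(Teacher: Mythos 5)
Your proposal is correct and follows essentially the same route as the paper: the paper derives the corollary as a direct consequence of \Cref{lem:mean.CR.coverage} by redefining $\vect{Y}_{pj}$ as in \eqref{eq:redefined.Y}, identifying $\bar{\bmu}$ with $(\eta_i(\bF^c)-\eta_\ell(\bF^c))_{\ell\neq i}$ and $B_\ell$ with $U_{i\ell,n}$ via \eqref{eq:redefined.Ca}. Your explicit full-row-rank argument for transferring positive-definiteness from \Cref{assump:pos.def.cov} to the covariance of the pairwise differences is a detail the paper leaves implicit, and it is correct.
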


Since $\IF_{ip}(\cdot;\bF^c)$ is generally
unknown, we cannot solve~\eqref{eq:U_iell.definition} to obtain $U_{i\ell,n}$. A natural estimator for $\IF_{ip}(\cdot;\bF^c)$ is $\IF_{ip}(\cdot;\bFhat)$, which replaces $\bF^c$ with the edf $\bFhat$.
Although $\IF_{ip}(\cdot;\bFhat)$ does not have an analytical form, it can be estimated via simulation. Under \Cref{assump:input.length}, Proposition~5 in \cite{lam2017optimization} 
shows that $\IFhat_{ip}(\bx, \bFhat)$ at $\bx=\bX_{pj}$ in the input data set can be rewritten as
\begin{equation} \label{eq:influence_function_cov}
\IFhatXipj = \Cov_{\bFhat}\Big(\, h_i(\matr{Z}^i_{1},\ldots,\matr{Z}^i_{m})\, ,\,  n_p  \sum\nolimits_{t=1}^{T_{ip}} \one{\bZ^i_{p}(t) = \bX_{pj}} - T_{ip} \,\Big).
\end{equation}

We have $\E_{\bFhat}[\one{\bZ^i_{p}(t) = \bX_{pj}}]=1/n_p$, which ensures the second term inside the covariance function in~\eqref{eq:influence_function_cov} has mean $0$. 
Since $\bFhat$ is available to us, we can estimate \eqref{eq:influence_function_cov} by running $R_1$ replications of Solution $i$. Let $\overline{h}_i \triangleq R_1^{-1} \sum\nolimits_{r=1}^{R_1} h_i(\matr{Z}^{ir}_{1},\ldots,\matr{Z}^{ir}_{m})$, where the superscript, $r$, in $\matr{Z}_{p}^{ir}$ indicates it is an input for the $r$th replication. 
We define our estimator of $\IFhat_{ip}(\cdot, \bFhat)$ by
\begin{equation} \label{eq:if_estimator}
\IFhathatXipj \triangleq R_1^{-1} \sum\nolimits_{r=1}^{R_1} \Big(h_i(\matr{Z}^{ir}_{1},\ldots,\matr{Z}^{ir}_{m}) - \overline{h}_i\Big)\Big(n_p \sum\nolimits_{t=1}^{T_{ip}} \one{\bZ_{p}^{ir}(t) = \bX_{pj}} - T_{ip}\Big).
\end{equation}
Because \eqref{eq:if_estimator} is a  sample covariance, it is a consistent estimator of  $\IFhat_{ip}(\bX_{pj}, \bFhat)$ given $\bFhat$ as $R_1 \to\infty$. We note that $\IFhathat(\bx;\bFhat)$ can be estimated simultaneously from $R_1$ replications. 
The conditional variance of $\IFhathatXipj$ given $\bFhat$ can be large when $\{T_{ip}\}_{p\in[m]}$ are large owing to the term, $\big(n_p \sum\nolimits_{t=1}^{T_{ip}} \one{\bZ_{p}^{ir}(t) = \bX_{pj}} - T_{ip}\big),$ in~\eqref{eq:if_estimator}, which necessitates choosing large $R_1$.  

Adopting \eqref{eq:if_estimator}, we define a new linear model
\[
\widehat{\eta}_i^L(\bw;\bFhat) \triangleq \eta_i(\bFhat) + \sum\nolimits_{p=1}^m \sum\nolimits_{j=1}^{n_p} \IFhathatXipj w_{pj}.
\]
There are two differences between $\eta_i^L(\bw;\bF^c)$ and $\widehat{\eta}_i^L(\bw;\bFhat)$; first, the constant term is $\eta_i(\bFhat)$, not $\eta_i(\bF^c)$ as the linear approximation is made at $\bFhat$; second, it replaces $\IFXipj$ in $\eta_i^L(\bw; \widehat{\bF})$ with $\IFhathatXipj$. 
Below, we discuss how $\widehat{\eta}_i^L(\bx;\bFhat)$ can be utilized to approximate $\{U_{i\ell,n}\}_{i\neq \ell}$. 

Let us define $\widehat\bw_{i\ell}^*$ as the maximizer to a modified version of~\eqref{eq:U_iell.definition} that replaces the objective function, $\eta_i^L(\bw;\bF^c)-\eta_i^L(\bw;\bF^c)$, with $\widehat{\eta}_i^L(\bw;\bFhat)-\widehat{\eta}_\ell^L(\bw;\bFhat)$. Because adding a constant to the objective function does not affect the optimality of $\widehat\bw_{i\ell}^*$, we can rewrite the modified program as
\begin{equation}\label{eq:opt_dro3}
    \max\nolimits_{\vect{w} \in \mathcal{U}_\alpha} \quad  \sum\nolimits_{p=1}^m \sum\nolimits_{j=1}^{n_p} \big(\IFhathatXipj - \IFhathatXlpj  \big) w_{pj},
\end{equation}
dropping the constant, $\eta_i(\bFhat)-\eta_\ell(\bFhat)$, from its objective. 
Program~\eqref{eq:opt_dro3} is a convex  problem; $\mathcal{U}_\alpha$ is a convex set defined by explicit convex constraints and its objective is linear in $\vect{w}$. 
Thus,~\eqref{eq:opt_dro3} can be solved efficiently by a convex optimization algorithm within a user-specified suboptimality.

Even if $\widehat{\bw}^{*}_{i\ell}$ is obtained by solving~\eqref{eq:opt_dro3}, we cannot estimate $U_{i\ell, n}$ by computing $\widehat{\eta}_i^L(\widehat{\bw}^{*}_{i\ell})$ 
since the difference $\eta_i(\bFhat) - \eta_\ell(\bFhat)$ is unknown. Instead, we compute the following estimator of ${U}_{i\ell,n}$ by simulating $R_2$ replications of Solutions $i$ and $\ell$ with $\widehat{\bw}^{*}_{i\ell}$ as the input model:
\begin{equation}\label{eq:U_iell.estimate}
    \widehat{U}_{i\ell,n} \triangleq {R_2}^{-1} \sum\nolimits_{r=1}^{R_2} \big(Y_{ir}(\widehat{\bw}^{*}_{i\ell}) - Y_{\ell r}(\widehat{\bw}^{*}_{i\ell})\big), 
\end{equation}
where $Y_{ir}(\widehat{\bw}^{*}_{i\ell})$ is the $r$th simulation output of Solution $i$ when the inputs are generated by $\widehat{\bw}^{*}_{i\ell}$. 

To summarize, we approximate $U_{i\ell,n}$ with $\widehat U_{i\ell,n}$ by estimating $\IFXipj$ with $\IFhathatXipj$, solving~\eqref{eq:opt_dro3} to find $\widehat{\bw}_{i\ell}^*$, and computing the empirical estimate of $\eta_i(\widehat{\bw}^{*}_{i\ell}) - \eta_\ell(\widehat{\bw}^{*}_{i\ell})$ in~\eqref{eq:U_iell.estimate}. These steps involve four errors: (i) finite-sample error of using the linear expansion at $\bFhat$ instead of $\bF^c$; (ii) estimation error in the influence function from finite $R_1$; and (iii) Monte Carlo error from finite $R_2$. Lastly, although subtle, (iv) $\widehat{U}_{i\ell,n}$ is an unbiased estimator of $\eta_i(\widehat{\bw}^{*}_{i\ell}) - \eta_\ell(\widehat{\bw}^{*}_{i\ell})$ given $\widehat{\bw}^*_{i\ell}$, not  $\eta_i^L(\widehat{\bw}^{*}_{i\ell}) - \eta_\ell^L(\widehat{\bw}^{*}_{i\ell})$, which can be linked to the worst-case linearization error studied in \Cref{sec:model}. 
To bound these errors, we first need to make some assumptions on the simulation output function.
The expectation with respect to  \(\bw \in \mathcal{U}_\alpha\) is denoted by \(\E_{\bw}[\cdot]\).

\begin{assumption}\label{assump:sim_output_bounded}
    For each $i \in [k]$, 
    $\sup_{\bw \in \mathcal{U}_\alpha} \E_{\bw}\big[ |h_i(\matr{Z}^{i}_{1},\ldots,\matr{Z}^{i}_{m})|^4 \big] < \infty$.
    In addition, for each $i \in [k]$
    and every sequence $I^p = (I^p_1,\ldots,I^p_{T_{ip}})$ 
    with $1 \leq I^p_t \leq T_{ip}$, and $\bbZ^i_{p,I^p} = \big(\bZ^i_{p}(I^p_1),\ldots,\bZ^i_{p}(I^p_{T_{ip}})\big)$, 
    we have $\E_{\bF^c}[|h_i(\bbZ^i_{1,I^1},\ldots,\bbZ^i_{m,I^m})|^4] < \infty$.
\end{assumption}

\Cref{assump:sim_output_bounded} implies that the fourth-order  moment of all simulation outputs are uniformly bounded over all $\bw\in \mathcal{U}_\alpha$. This boundedness holds whether input samples are reused, independently drawn, or a mix of both. 
It is essential for controlling simulation error in influence function estimation, where empirical distributions may yield repeated input combinations. If each $h_i$ is bounded, this condition is automatically satisfied.

\Cref{prop:algorithm-error} characterizes the asymptotic convergence rate of the approximation error in $\widehat U_{i\ell,n}$.

\begin{proposition}\label{prop:algorithm-error}
    Under \Cref{assump:data_size,assump:input.length,assump:pos.def.cov,assump:sim_output_bounded}, for $i \neq \ell$, 
    $
        |\widehat{U}_{i\ell,n} - U_{i\ell,n}| = o_p(n^{-1/2}) + O_p(R_1^{-1/2}) + O_p(R_2^{-1/2}).
    $
\end{proposition}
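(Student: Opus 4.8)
The plan is to decompose $\widehat U_{i\ell,n}-U_{i\ell,n}$ into a Monte Carlo term, a linearization term, and an optimization suboptimality term, and to match these to $O_p(R_2^{-1/2})$, $o_p(n^{-1/2})$, and $o_p(n^{-1/2})+O_p(R_1^{-1/2})$, respectively. For the fixed pair $(i,\ell)$, abbreviate $\Delta(\bw)\triangleq\eta_i(\bw)-\eta_\ell(\bw)$, $\Delta^L_c(\bw)\triangleq\eta^L_i(\bw;\bF^c)-\eta^L_\ell(\bw;\bF^c)$, $\Delta^L_h(\bw)\triangleq\eta^L_i(\bw;\bFhat)-\eta^L_\ell(\bw;\bFhat)$, and $\widehat\Delta^L_h(\bw)\triangleq\widehat\eta^L_i(\bw;\bFhat)-\widehat\eta^L_\ell(\bw;\bFhat)$, so that $U_{i\ell,n}=\Delta^L_c(\bw^*_{i\ell})=\max_{\bw\in\mathcal U_\alpha}\Delta^L_c(\bw)$, the maximizer $\widehat{\bw}^*_{i\ell}$ of~\eqref{eq:opt_dro3} equals $\argmax_{\bw\in\mathcal U_\alpha}\widehat\Delta^L_h(\bw)$, and, by~\eqref{eq:U_iell.estimate}, $\widehat U_{i\ell,n}$ is, conditionally on $\widehat{\bw}^*_{i\ell}$ and the data, an unbiased Monte Carlo estimate of $\Delta(\widehat{\bw}^*_{i\ell})$. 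I would then write
\[
\widehat U_{i\ell,n}-U_{i\ell,n}
=\big[\widehat U_{i\ell,n}-\Delta(\widehat{\bw}^*_{i\ell})\big]
+\big[\Delta(\widehat{\bw}^*_{i\ell})-\Delta^L_c(\widehat{\bw}^*_{i\ell})\big]
+\big[\Delta^L_c(\widehat{\bw}^*_{i\ell})-U_{i\ell,n}\big].
\]
The first bracket is Monte Carlo error: conditionally on $\widehat{\bw}^*_{i\ell}\in\mathcal U_\alpha$, the $R_2$ summands are i.i.d.\ with mean $\Delta(\widehat{\bw}^*_{i\ell})$ and conditional variance at most $\sup_{\bw\in\mathcal U_\alpha}\E_{\bw}[(h_i-h_\ell)^2]<\infty$, finite and uniform in $\bw$ by the first part of~\Cref{assump:sim_output_bounded}, so a conditional Chebyshev bound gives $O_p(R_2^{-1/2})$. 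The second bracket is a pointwise linearization error that~\Cref{thm:worst-case-expansion-error} controls uniformly, $|\Delta(\bw)-\Delta^L_c(\bw)|\le\sup_{\bw}|\eta_i(\bw)-\eta^L_i(\bw;\bF^c)|+\sup_{\bw}|\eta_\ell(\bw)-\eta^L_\ell(\bw;\bF^c)|=O_p(n^{-1})$, hence $O_p(n^{-1})=o_p(n^{-1/2})$.

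The third bracket is the suboptimality gap and carries the $R_1$ dependence. Since $\widehat{\bw}^*_{i\ell}$ maximizes $\widehat\Delta^L_h$ while $U_{i\ell,n}=\Delta^L_c(\bw^*_{i\ell})$, inserting and subtracting $\widehat\Delta^L_h$ at $\bw^*_{i\ell}$ and $\widehat{\bw}^*_{i\ell}$ and discarding the nonpositive optimality difference yields the standard bound $|\Delta^L_c(\widehat{\bw}^*_{i\ell})-U_{i\ell,n}|\le 2\sup_{\bw\in\mathcal U_\alpha}|\widehat\Delta^L_h(\bw)-\Delta^L_c(\bw)|$. I would split this via $\sup_\bw|\widehat\Delta^L_h-\Delta^L_c|\le\sup_\bw|\Delta^L_h-\Delta^L_c|+\sup_\bw|\widehat\Delta^L_h-\Delta^L_h|$. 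The first piece is the plug-in error from linearizing at $\bFhat$ instead of $\bF^c$; applying~\Cref{thm:worst-case-expansion-error} at \emph{both} base points and the triangle inequality gives $\sup_\bw|\eta^L_i(\bw;\bFhat)-\eta^L_i(\bw;\bF^c)|=O_p(n^{-1})$, so this piece is $O_p(n^{-1})=o_p(n^{-1/2})$. The second piece isolates the influence-function estimation error from the $R_1$ replications and is the crux.

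Writing $\widehat\varepsilon_{pj}\triangleq(\IFhathatXipj-\IFhatXipj)-(\IFhathatXlpj-\IFhatXlpj)$, the second piece equals $\sup_{\bw\in\mathcal U_\alpha}\big|\sum_{p}\sum_{j}\widehat\varepsilon_{pj}w_{pj}\big|$, a linear functional of $\bw$. I would write $w_{pj}=1/n_p+\rho_{pj}$ with $\sum_j\rho_{pj}=0$. The uniform part vanishes exactly: because $\sum_j\big(n_p\sum_t\one{\bZ^{ir}_p(t)=\bX_{pj}}-T_{ip}\big)=0$ for every replication $r$, we get $\tfrac1{n_p}\sum_j\IFhathatXipj=0$, while $\tfrac1{n_p}\sum_j\IFhatXipj=0$ by the zero-mean property of the influence function (\Cref{eq:etaidifferentiable}), so $\sum_j\widehat\varepsilon_{pj}/n_p=0$. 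Only the perturbation survives, and Cauchy--Schwarz gives $\big|\sum_j\widehat\varepsilon_{pj}\rho_{pj}\big|\le\|\widehat{\vect{\varepsilon}}_p\|_2\,\|\vect{\rho}_p\|_2$. The empirical-likelihood constraint defining $\mathcal U_\alpha$, with~\Cref{assump:data_size}, yields the weight-concentration bound $\sum_j(n_pw_{pj}-1)^2=O_p(1)$, i.e.\ $\sup_{\bw\in\mathcal U_\alpha}\|\vect{\rho}_p\|_2=O_p(n^{-1})$ (already implicit in the proof of~\Cref{thm:worst-case-expansion-error}). For the coefficient norm, the estimator~\eqref{eq:if_estimator} is a sample covariance whose conditional variance is inflated by the factor $n_p\sum_t\one{\bZ^{ir}_p(t)=\bX_{pj}}-T_{ip}$, giving $\E_{\bFhat}[\widehat\varepsilon_{pj}^2]=O(n_p/R_1)$ uniformly in $j$, hence $\E_{\bFhat}\|\widehat{\vect{\varepsilon}}_p\|_2^2=\sum_j\E_{\bFhat}[\widehat\varepsilon_{pj}^2]=O(n^2/R_1)$ and $\|\widehat{\vect{\varepsilon}}_p\|_2=O_p(nR_1^{-1/2})$. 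Multiplying and summing over the finitely many $p\in[m]$ gives $\sup_\bw|\widehat\Delta^L_h-\Delta^L_h|=O_p(nR_1^{-1/2})\cdot O_p(n^{-1})=O_p(R_1^{-1/2})$, and collecting the three brackets yields the claim.

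The main obstacle is this uniform control of the influence-function estimation error, specifically the variance bound $\E_{\bFhat}[\widehat\varepsilon_{pj}^2]=O(n_p/R_1)$ \emph{uniformly in the data index $j$}. A naive Cauchy--Schwarz bound on $\E_{\bFhat}\big[(h_i-\overline h_i)^2\big(n_p\sum_t\one{\cdot}-T_{ip}\big)^2\big]$ overestimates the rate; one must instead condition on the matching events $\{\bZ^{ir}_p(t)=\bX_{pj}\}$ and use the boundedness of the conditional fourth moments of $h_i$ with some input coordinates pinned to data points, which is precisely the second part of~\Cref{assump:sim_output_bounded}. Establishing that this variance bound—and hence the coefficient-norm bound $\|\widehat{\vect{\varepsilon}}_p\|_2=O_p(nR_1^{-1/2})$—holds uniformly in $j$ is where the moment assumptions do the real work; the remaining steps are routine once it is in place.
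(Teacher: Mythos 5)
Your proposal is correct and follows essentially the same route as the paper: the same three-way triangle-inequality decomposition (Monte Carlo error via conditional Chebyshev under \Cref{assump:sim_output_bounded}, worst-case linearization error via \Cref{thm:worst-case-expansion-error}, and a suboptimality gap bounded by $2\sup_{\bw\in\mathcal U_\alpha}|\widehat\Delta^L_h-\Delta^L_c|$), with the crux being the $O_p(R_1^{-1/2})$ control of the influence-function estimation error through the zero-sum identity, Cauchy--Schwarz against the EL weight-concentration bound, and the per-point variance bound $O(n_p/R_1)$ obtained by conditioning on the matching events and invoking the pinned-coordinate moment condition. Your regrouping of the middle terms (comparing $\Delta^L_c(\widehat{\bw}^*_{i\ell})$ to its maximum rather than $\Delta(\widehat{\bw}^*_{i\ell})$ to $\Delta(\bw^*_{i\ell})$) is an equivalent rearrangement that reduces to the same suprema the paper controls in its supporting propositions.
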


Let us revisit the discussion on the error sources (i)--(iv) in $U_{i\ell,n}$. The proof of \Cref{prop:algorithm-error} establishes that errors (i) and (iv) decrease at a rate faster than $n^{-1/2}$, while  (ii) and (iii)  diminish at rates $R_1^{-1/2}$  and $R_2^{-1/2}$, respectively. In our setting, the input data is more scarce than simulation data. Hence, it is reasonable to choose $R_1$ and $R_2$ large enough to avoid slowing the convergence rate of $\widehat{U}_{i\ell,n}$.

We close this section by stating that, despite the estimation error in $\{\widehat{U}_{i\ell,n}\}_{i\neq \ell}$, they can provide asymptotically correct CI bounds satisfying~\eqref{eq:asymptotic-mcb-ci-1} if $R_1$ and $R_2$ increase faster than $n$.
\begin{proposition}\label{prop:Uhat-coverage} 
Suppose that
\Cref{assump:input.length,assump:data_size,assump:pos.def.cov,assump:sim_output_bounded} hold. If $R_1$ and $R_2$ are chosen such  that $R_1/n\to\infty$ and $R_2/n \to\infty$, then $\liminf_{n\to\infty} \Pr\{ \eta_i(\bF^c) - \eta_\ell(\bF^c) \leq \widehat{U}_{i\ell,n}, \forall \ell \neq i \} \geq 1-\alpha$. 
\end{proposition}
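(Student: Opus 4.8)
The plan is to transfer the coverage guarantee already established for the exact bounds $\{U_{i\ell,n}\}$ in \Cref{cor:upper_bound_opt} to the computable bounds $\{\widehat U_{i\ell,n}\}$, exploiting that the two differ only at a scale smaller than the natural $n^{-1/2}$ fluctuation of the bounds. First I would invoke \Cref{prop:algorithm-error}: since $R_1/n\to\infty$ and $R_2/n\to\infty$ give $R_1^{-1/2}=o(n^{-1/2})$ and $R_2^{-1/2}=o(n^{-1/2})$, the error decomposition collapses to $\widehat U_{i\ell,n}-U_{i\ell,n}=o_p(n^{-1/2})$ for every $\ell\neq i$, i.e. $\sqrt n\,(\widehat U_{i\ell,n}-U_{i\ell,n})\xrightarrow{p}0$. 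Writing $\bar\mu_{i\ell}\triangleq\eta_i(\bF^c)-\eta_\ell(\bF^c)$, the target event is $A_n\triangleq\{\bar\mu_{i\ell}\le\widehat U_{i\ell,n},\ \forall\ell\neq i\}$, and I want $\liminf_n\Pr(A_n)\ge 1-\alpha$, knowing from \Cref{cor:upper_bound_opt} that the analogous event $B_n\triangleq\{\bar\mu_{i\ell}\le U_{i\ell,n},\ \forall\ell\neq i\}$ satisfies $\liminf_n\Pr(B_n)\ge 1-\alpha$. Because the perturbation is two-sided I cannot simply sandwich $A_n\supseteq B_n$; instead the argument must pass through the joint limiting law of the rescaled bounds.

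Second I would establish that $W_n\triangleq\sqrt n\,(U_{i\ell,n}-\bar\mu_{i\ell})_{\ell\neq i}\xrightarrow{D}W$ for a limit $W$ with a Lebesgue density. This is where the empirical-likelihood machinery enters: recalling that $U_{i\ell,n}$ is the support function $\max_{\vect c\in C_\alpha}\vect e_\ell^\top\vect c$ of the confidence region from \eqref{eq:generic_opt} and \eqref{eq:redefined.Ca}, the standard expansion underlying \Cref{thm:multisample-el} shows that $C_\alpha$ is asymptotically the ellipsoid $\{\bmu:n(\bmu-\widehat\bmu)^\top \matr{S}^{-1}(\bmu-\widehat\bmu)\le\chi^2_{k-1,1-\alpha}\}$ centered at the empirical mean $\widehat\bmu$ of the $\vect Y_{pj}$ in \eqref{eq:redefined.Y}, with $\matr{S}$ a consistent covariance estimate. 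Its support function gives $U_{i\ell,n}=\widehat\mu_\ell+\sqrt{\chi^2_{k-1,1-\alpha}\,S_{\ell\ell}/n}\,(1+o_p(1))$, so by the multivariate CLT for $\sqrt n(\widehat\bmu-\bar\bmu)$ and $\matr{S}\xrightarrow{p}\matr{\Sigma}$ I would obtain $W_n\xrightarrow{D}W$, where $W$ is the Gaussian vector $N(\vect 0,\matr{\Sigma})$ shifted coordinatewise by the strictly positive constants $\sqrt{\chi^2_{k-1,1-\alpha}\,\Sigma_{\ell\ell}}$. \Cref{assump:pos.def.cov} guarantees $\matr{\Sigma}$ is positive definite, so $W$ has a density and, in particular, the boundary of the orthant $\{\vect w:\vect w\ge\vect 0\}$ is a $W$-null set.

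Third I would carry out the transfer by the portmanteau theorem. Since $\{\vect w:\vect w\ge\vect 0\}$ is a $W$-continuity set, $\Pr(B_n)=\Pr\{W_n\ge\vect 0\}\to\Pr\{W\ge\vect 0\}$; combined with $\liminf_n\Pr(B_n)\ge 1-\alpha$ this forces $\Pr\{W\ge\vect 0\}\ge 1-\alpha$. Now $\sqrt n\,(\widehat U_{i\ell,n}-\bar\mu_{i\ell})_{\ell\neq i}=W_n+o_p(1)\xrightarrow{D}W$ by Slutsky's lemma, and applying the portmanteau theorem once more at the same continuity set yields $\Pr(A_n)=\Pr\{\sqrt n(\widehat U_{i\ell,n}-\bar\mu_{i\ell})\ge\vect 0,\ \forall\ell\neq i\}\to\Pr\{W\ge\vect 0\}\ge 1-\alpha$, which is the claim.

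The main obstacle is the second step: upgrading \Cref{thm:multisample-el}, which only controls the scalar ratio $-2\log R(\bar{\bmu})$, into the joint distributional limit of the support functions $W_n$, i.e. making the asymptotic-ellipsoid expansion of $C_\alpha$ and the $(1+o_p(1))$ control of its radius rigorous. The payoff of getting this right is precisely the continuity of $W$: it is the non-degeneracy supplied by \Cref{assump:pos.def.cov} that removes any atom of $W$ on the orthant boundary and thereby renders the $o_p(n^{-1/2})$ discrepancy between $\widehat U_{i\ell,n}$ and $U_{i\ell,n}$ asymptotically immaterial to coverage. If one preferred to avoid identifying $W$ explicitly, the same conclusion follows from an $\epsilon$-padding argument, bounding $\Pr(A_n)$ below by $\Pr\{\bar\mu_{i\ell}+\epsilon n^{-1/2}\le U_{i\ell,n},\ \forall\ell\neq i\}$ minus a vanishing term and then letting $\epsilon\downarrow 0$, but this still needs the same ellipsoidal asymptotics to show the shifted coverage returns to the $1-\alpha$ level.
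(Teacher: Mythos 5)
Your first and last steps track the paper: \Cref{prop:algorithm-error} together with $R_1/n\to\infty$, $R_2/n\to\infty$ gives $\widehat U_{i\ell,n}=U_{i\ell,n}+o_p(n^{-1/2})$, and the two-sided $o_p(n^{-1/2})$ perturbation is ultimately absorbed because the relevant limit law is non-degenerate (your portmanteau/Slutsky step is fine once a joint limit exists). The gap is your second step, and you have named it yourself as ``the main obstacle'' without resolving it: you need the \emph{two-sided} expansion $U_{i\ell,n}=\widehat\mu_\ell+(\chi^2_{k-1,1-\alpha}S_{\ell\ell}/n)^{1/2}(1+o_p(1))$, i.e.\ that $C_\alpha$ is asymptotically \emph{exactly} an ellipsoid, in order to obtain $W_n\xrightarrow{D}W$. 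Nothing in the paper (and nothing in your sketch) establishes the upper-bound half of that expansion, so as written the proof is incomplete at precisely the point where all the work lies.

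The paper circumvents this entirely by exploiting that the coverage event is monotone increasing in the bounds: only the one-sided inequality $U_{i\ell,n}\geq \overline U_{i\ell,n}+o_p(n^{-1/2})$ is needed, where $\overline U_{i\ell,n}$ in \eqref{eq:overlineUielln} is the explicit sample-mean-plus-quantile quantity. This lower bound is \Cref{prop:upper_bound.cov.opt}, proved via \Cref{cor:thm3} and a diagonalization argument. Chaining the inequalities gives $\widehat U_{i\ell,n}\geq \overline U_{i\ell,n}+o_p(n^{-1/2})$, hence $\Pr\{\eta_i(\bF^c)-\eta_\ell(\bF^c)\leq \widehat U_{i\ell,n},\forall\ell\neq i\}\geq \Pr\{\eta_i(\bF^c)-\eta_\ell(\bF^c)\leq \overline U_{i\ell,n}+o_p(n^{-1/2}),\forall\ell\neq i\}$, and the right-hand side is handled directly by the multivariate CLT for the influence-function averages, Slutsky's theorem, and \Cref{lem:prob_chisquare} (\Cref{prop:normality_coverage}), with \Cref{assump:pos.def.cov} supplying the non-degeneracy. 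Your closing remark that the $\epsilon$-padding variant ``still needs the same ellipsoidal asymptotics'' is the misdiagnosis: it needs only the lower bound. To repair your argument, either prove the two-sided support-function limit for the multi-sample EL region (substantial extra work, and strictly more than the proposition requires) or replace your second step by the one-sided bound of \Cref{prop:upper_bound.cov.opt}.
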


\section{NIOU-C  Algorithm}\label{sec:as-algorithm}

\Cref{alg:as} outlines the NIOU-C procedure that computes the confidence set, $\widehat{I}_n$, by combining the simulation estimation and solving convex optimization problems as discussed in \Cref{sec:optimization}.

{
\centering
\begin{minipage}{0.9\textwidth}
\begin{algorithm}[H]
\small 
\caption{NIOU-C Algorithm}\label{alg:as}
\begin{algorithmic}[1]
    \State From i.i.d.\ observations $\bX_{p1},\ldots,\bX_{pn_p}$, define the edf, $\Fhat_p$, as the estimator of $F_p^c$ for each $p \in [m]$.
    \For{each $i \in [k]$}  \label{step:first.for.begins}
        \State Run $R_1$ replications of the simulation to generate $Y_{i1}(\bFhat),\ldots,Y_{iR_1}(\bFhat)$. \label{alg-step:r1} 
        \State Compute $\IFhathatXipj$ in \eqref{eq:if_estimator} for each $p \in [m]$ and $j \in [n_p]$.
    \EndFor \label{step:first.for.ends}
    \For{each $i \in [k]$} \label{step:second.for.begins}
        \For{each $\ell \neq i$}
        \State Solve \eqref{eq:opt_dro3} to find its optimum $\widehat{\bw}^{*}_{i\ell}$. \label{step:solve.DRO}
        \State Run $R_2$ simulations of Solutions $i$ and $\ell$ to obtain $\{Y_{ir}(\widehat{\bw}^{*}_{i\ell})\}_{r \in R_2}$ and $\{Y_{\ell r}(\widehat{\bw}^{*}_{i\ell})\}_{r \in R_2}$. \label{alg-step:r2} 
        \State Compute $\widehat{U}_{i\ell,n}$ in \eqref{eq:U_iell.estimate}. 
        \EndFor \label{step:second.for.ends}
    \EndFor
    \For{each $i \in [k]$}
        \State Compute $\widehat{D}_{i,n}^+ = \big(\min_{\ell \neq i} \{\widehat{U}_{i\ell,n}\}\big)^+$ and $\widehat{I}_n = \{i \in [k] \colon \widehat{D}_{i,n}^+ > 0\}$. Let 
        $\widehat{D}_{i,n}^- = 0$, if  $\widehat{I}_n = \{i\}$, and $\widehat{D}_{i,n}^- = -\big(\min_{\ell \in \widehat{I}_n \colon \ell \neq i} \{-\widehat{U}_{\ell i,n}\}\big)^-$, otherwise. \label{step:MCB.CIs}
    \EndFor
    \State \Return $[\widehat{D}_{i,n}^-, \widehat{D}_{i,n}^+]$ for each $i \in [k]$ and $\widehat{I}_n$.
\end{algorithmic}
\end{algorithm}
\end{minipage}
}
\\

NIOU-C first runs $R_1$ replications at all $k$ solutions to estimate their influence functions. In our experiments, we use common random numbers (CRNs) across all solutions when generating the inputs in~\eqref{eq:if_estimator} for all our experiments in \Cref{sec:empirical}. That is, for each $r \in [R_1]$ and $p \in [m]$, $\bZ_{p}^{ir}(1),\ldots,\bZ_{p}^{ir}(T_{ip})$ are reused across all solutions as much as possible. If $T_{ip}$ remains the same for all solutions, then all $T_{ip}$ inputs can be reused. 
Provided that the CRNs induce positive correlations among the solutions' simulation outputs, adopting them may reduce the variance in our framework. We recall that the influence function estimators always enter our framework as the pairwise difference between two solutions. In the simplest case when $T_{ip} = T_{\ell p}$ for all $i\neq \ell$, 
it follows from~\eqref{eq:if_estimator} that
$\IFhathatXipj - \IFhathat_{ip}(\vect{X}_{pj};\bFhat)$ is equal to
\[
R_1^{-1} \sum\nolimits_{r=1}^{R_1} \Big(h_i(\matr{Z}^{ir}_{1},\ldots,\matr{Z}^{ir}_{m}) - h_\ell(\matr{Z}^{\ell r}_{1},\ldots,\matr{Z}^{\ell r}_{m}) - \overline{h}_i + \overline{h}_\ell\Big)\Big(n_p \sum\nolimits_{t=1}^{T_{ip}} \one{\bZ_{p}^{ir}(t) = \bX_{pj}} - T_{ip}\Big).
\]
In contrast to~\eqref{eq:if_estimator}, the first term in each summand contributes to variance reduction.

Because we solve~\eqref{eq:opt_dro3} for all $\ell \neq i$ at each $i\in [k]$, the total simulation budget for estimating $\{\widehat{U}_{i\ell,n}\}_{i\neq \ell}$ in Step~\ref{alg-step:r2}  is $2k(k-1)R_2$. Note that symmetry does not hold here, i.e.,  $\bw_{i\ell}^* \neq \bw_{\ell i}^*$ in general.
Since $\widehat{U}_{i\ell,n}$ is computed as the difference of the sample average estimators for Solutions $i$ and $\ell$, we also adopt the CRNs to reduce the variability in the estimation.
In summary, the total simulation cost of \Cref{alg:as} is $kR_1 + 2k(k-1)R_2$. 

NIOU-C is conducive to parallelization. The first for loop in Steps~\ref{step:first.for.begins}--\ref{step:first.for.ends} can be parallelized across $k$ solutions 
by synchronizing the seed in each replication even if the CRNs are adopted. 
The double for loop in Steps~\ref{step:second.for.begins}--\ref{step:second.for.ends} can be parallelized across all $k(k-1)$ ordered pairs of solutions. 

Lastly, we state 
\Cref{thm:as-asymp-coverage} ensures that NIOU-C provides the correct asymptotic probability guarantee when the numbers of simulations, $R_1$ and $R_2$, grow  faster than $n$:
\begin{theorem}
\label{thm:as-asymp-coverage}
    Under \Cref{assump:data_size,assump:pos.def.cov,assump:input.length,assump:sim_output_bounded}, 
    for $\widehat{D}_{i,n}^-$, $\widehat{D}_{i,n}^+$ and $\widehat{I}_n$ computed by \Cref{alg:as},
    \begin{equation*}
        \liminf\nolimits_{n\to\infty, \, R_1/n\to\infty, \, R_2/n\to\infty}
        \Pr\{\eta_i(\bF^c) - \max\nolimits_{\ell \neq i}\eta_\ell(\bF^c) \in [\widehat{D}_{i,n}^-, \widehat{D}_{i,n}^+], \forall i \in [k]\} \geq 1 - \alpha,
    \end{equation*}
    and asymptotic guarantee in \eqref{eq:asymptotic-coverage} holds true for $I_n = \widehat{I}_n$.
\end{theorem}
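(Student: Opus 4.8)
The plan is to recognize this theorem as the capstone that simply composes the two coverage results already in hand: \Cref{prop:Uhat-coverage}, which certifies that the \emph{estimated} bounds $\{\widehat U_{i\ell,n}\}_{i \neq \ell}$ inherit the one-sided simultaneous coverage \eqref{eq:asymptotic-mcb-ci-1} whenever $R_1/n \to \infty$ and $R_2/n \to \infty$, and \Cref{lem:asymptotic-mcb-ci}, which converts \emph{any} family of bounds satisfying \eqref{eq:asymptotic-mcb-ci-1} into MCB intervals with the two-sided coverage \eqref{eq:asymptotic-mcb-ci-2} together with the confidence-set guarantee $\liminf_n \Pr\{i^c \in I_n\} \geq 1-\alpha$. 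The first thing I would do is verify the purely syntactic but essential fact that the quantities $\widehat D_{i,n}^+$, $\widehat D_{i,n}^-$, and $\widehat I_n$ produced in Step~\ref{step:MCB.CIs} of \Cref{alg:as} are \emph{defined by exactly the same formulas} as $D_{i,n}^+$, $D_{i,n}^-$, and $I_n$ in \Cref{lem:asymptotic-mcb-ci}, with $U_{i\ell,n}$ replaced everywhere by $\widehat U_{i\ell,n}$. This identification is what lets us treat the algorithm's output as an instance of the Lemma's construction.

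With that in place, I would fix arbitrary index sequences $R_1 = R_1(n)$ and $R_2 = R_2(n)$ satisfying $R_1/n \to \infty$ and $R_2/n \to \infty$, so that along this path $\{\widehat U_{i\ell,n}\}_{n \geq 1, \ell \neq i}$ is a genuine $n$-indexed family of random variables. For each fixed $i \in [k]$, \Cref{prop:Uhat-coverage} delivers precisely the hypothesis \eqref{eq:asymptotic-mcb-ci-1} of \Cref{lem:asymptotic-mcb-ci} for this family. Applying \Cref{lem:asymptotic-mcb-ci} with the role of $U_{i\ell,n}$ played by $\widehat U_{i\ell,n}$ then yields both the simultaneous two-sided coverage \eqref{eq:asymptotic-mcb-ci-2} for $\{[\widehat D_{i,n}^-, \widehat D_{i,n}^+]\}_{i\in[k]}$ and the conclusion $\liminf_n \Pr\{i^c \in \widehat I_n\} \geq 1-\alpha$, which is the asymptotic guarantee \eqref{eq:asymptotic-coverage} with $I_n = \widehat I_n$.

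The one point that needs care — and which I regard as the main, if modest, obstacle — is reconciling the joint limit $\lim_{n\to\infty,\, R_1/n \to\infty,\, R_2/n \to\infty}$ in the theorem statement with the single-index $\liminf_{n\to\infty}$ in which both \Cref{prop:Uhat-coverage} and \Cref{lem:asymptotic-mcb-ci} are phrased. I would handle this by a path argument: the joint $\liminf$ is, by definition, the infimum of $\liminf_n \Pr\{\cdot\}$ taken over all admissible paths $(R_1(n), R_2(n))$ with $R_1/n, R_2/n \to \infty$; since the bound $1-\alpha$ obtained above holds along \emph{every} such path, it holds for the joint $\liminf$ as well. It is worth emphasizing that \Cref{lem:asymptotic-mcb-ci} is entirely agnostic to how the bounds entering it are generated — it is a deterministic implication at the level of the inequalities — so no additional control on the estimation error beyond what \Cref{prop:Uhat-coverage} already encodes (and which is quantified in \Cref{prop:algorithm-error}) is required. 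Thus the proof reduces to the bookkeeping of matching definitions and quantifying over growth paths, with all probabilistic content borrowed from the two cited results.
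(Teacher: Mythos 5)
Your proposal is correct and follows essentially the same route as the paper: the paper's own proof simply invokes \Cref{prop:Uhat-coverage} to verify hypothesis \eqref{eq:asymptotic-mcb-ci-1} for the estimated bounds $\{\widehat U_{i\ell,n}\}$ and then applies \Cref{lem:asymptotic-mcb-ci}. Your additional care in spelling out the identification of the algorithm's quantities with the lemma's construction and in handling the joint limit via a path argument is sound bookkeeping that the paper leaves implicit.
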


\section{Extension with Tighter Bounds}\label{sec:extension}

We computed a polyhedral set containing $C_\alpha$ (see \eqref{eq:redefined.Ca}) using the optimal values $\{U_{i\ell,n}\}_{i\neq \ell}$ from \eqref{eq:U_iell.definition}. While this method ensures coverage, it becomes increasingly conservative as  $k$ grows, leading to overcoverage of the joint CI in~\eqref{eq:asymptotic-mcb-ci-1}. 
To mitigate this, we propose an alternative approach for computing  $\{U_{i\ell,n}\}_{i\neq \ell}$. We adapt the technique developed in Section~5.1 of \cite{lam2019}, originally developed for tight statistical guarantees in EL-based DRO with multiple expectation constraints.

As shown in \Cref{thm:multisample-el}, twice the negative log empirical likelihood under the true mean constraint converges to $\chi^2_q$. To reduce conservatism, \cite{lam2019} models the $q$ dimensions as correlated $\chi^2_1$ variables and uses the $1-\alpha$ quantile of their maximum---rather than $\chi^2_{q,1-\alpha}$---to construct a tighter confidence region for the mean vector in \eqref{eq:uncertaintyset}.
To apply this modification to NIOU-C, we estimate the corresponding quantile for each Solution $i$. 
Let us define 
$
\upsilon_{\ell}(\bX_{pj}) \triangleq \IFXipj - \IFXlpj 
$
and let $q_n$ be the $1-\alpha$ quantile of $\sup_{\ell \neq i} J_\ell$, where for each $\ell \neq i$, $J_\ell \triangleq G_\ell^2$, $G_\ell\sim \N(0,1)$, and
\begin{equation} \label{eq:multivariate.norm.cov}
    \mathrm{Corr}(G_{\ell}, G_{\ell^\prime}) = \frac{\sum_{p=1}^m \frac{n_p}{n_p-1}\Cov(\upsilon_{\ell}(\bX_p), \upsilon_{\ell^\prime}(\bX_p))}{\sqrt{\sum_{p=1}^m \frac{n_p}{n_p-1}\V(\upsilon_{\ell}(\bX_p))\sum_{p=1}^m \frac{n_p}{n_p-1}\V(\upsilon_{\ell^\prime}(\bX_p))}}, \quad \ell^\prime \in [k]\backslash \{i,\ell\}.
\end{equation}
Marginally, $J_\ell \sim \chi^2_1$ and \eqref{eq:multivariate.norm.cov} describes the correlations among the $k-1$ such $\chi^2_1$ random variables.  The variance and covariance terms in~\eqref{eq:multivariate.norm.cov} must be estimated from our influence function estimators. 

\Cref{alg:quantile} outlines the Monte Carlo estimation of $q_n$. We replace~\eqref{eq:multivariate.norm.cov} with its estimate $\widehat{R}_{\ell,\ell^\prime}$. 
Since the influence function estimator from the main experiment is utilized here, its estimation error induces bias into $\widehat{R}_{\ell,\ell^\prime}$. This may affect the finite-sample coverage of NIOU-C, when $\widehat{q}_n$ replaces $\chi^2_{k-1,1-\alpha}$. We name this version of the modified algorithm as the NIOU-C:E, where E stands for extension and compare its empirical performances against NIOU-C in \Cref{sec:empirical}.

{
\centering
\begin{minipage}{0.9\textwidth}
\begin{algorithm}[H]
\small 
\caption{Quantile estimation for extension of NIOU-C}\label{alg:quantile}
\begin{algorithmic}[1]
    \State Let $\widehat{\upsilon}_{\ell}(\bX_{pj}) \triangleq \IFhathatXipj - \IFhathatXlpj$.
    For each input $p\in [m]$ and each $\ell\in[k]\setminus i$, compute 
    \[
    \widehat{\mu}_{\ell}^{(p)} = \frac{1}{n_p}\sum_{j=1}^{n_p} \widehat{\upsilon}_{\ell}(\bX_{pj}), \qquad
    \hat{s}_{\ell\ell}^{(p)} = \frac{1}{n_p-1}\sum_{j=1}^{n_p} \big(\widehat{\upsilon}_{\ell}(\bX_{pj}) - \widehat{\mu}_{\ell}^{(p)}\big)^2.
    \]
    For each pair $(\ell,\ell')$, compute the covariance for each $p$,
    $
    \hat s_{\ell\ell'}^{(p)} = (n_p-1)^{-1}\sum_{j=1}^{n_p} 
    \big(\widehat{\upsilon}_{\ell}(\bX_{pj}) - \widehat{\mu}_{\ell}^{(p)}\big)
    \big(\widehat{\upsilon}_{\ell'}(\bX_{pj}) - \widehat{\mu}_{\ell'}^{(p)}\big).
    $
    
    \State 
    Let $N=\sum_{p=1}^m n_p$. Define the covariance estimator
    $
    \widehat{\Sigma}_{\ell\ell'} = N^{-1}\sum_{p=1}^m n_p\, \hat s_{\ell\ell'}^{(p)}.
    $
    
    \State 
    Estimate~\eqref{eq:multivariate.norm.cov} with 
    $\widehat{R} = (\widehat{R}_{\ell\ell'})_{1\le \ell,\ell'\le d}$, where
    \[
    \widehat{R}_{\ell\ell'} = \frac{\widehat\Sigma_{\ell\ell'}}{\sqrt{\widehat\Sigma_{\ell\ell}\,\widehat\Sigma_{\ell'\ell'}}}
    =
    \frac{ \sum_{p=1}^m \frac{n_p}{n_p - 1} \sum_{j=1}^n \big(\widehat{\upsilon}_\ell(\bX_{pj}) - \widehat{\mu}^{(p)}_\ell\big)\big(\widehat{\upsilon}_{\ell'}(\bX_{pj}) - \widehat{\mu}^{(p)}_{\ell'}\big)}{\sqrt{\sum_{p=1}^m \frac{n_p}{n_p - 1}\sum_{j=1}^n \big(\widehat{\upsilon}_\ell(\bX_{pj}) - \widehat{\mu}^{(p)}_\ell\big)^2  \sum_{p=1}^m \frac{n_p}{n_p - 1}\sum_{j=1}^n \big(\widehat{\upsilon}_{\ell'}(\bX_{pj}) - \widehat{\mu}^{(p)}_{\ell'}\big)^2}}.
    \]

    \State \label{alg:ext-step4}
    Let $\vect{\xi} \sim \mathrm{N}(0,\widehat{R})$. 
    Generate $M$ independent realizations $\vect{\xi}^{(1)},\dots,\vect{\xi}^{(M)}$, each of dimension $k-1$.
    
    \State 
    For each $t\in[M]$, compute
    $
        S_t = \max_{\ell \neq i} \big(\xi^{(t)}_\ell\big)^2.
    $
    
    \State 
    Let $\widehat{q}_n$ be the empirical $(1-\alpha)$ quantile of $\{S_t : t\in[M]\}$.
    
    \State \Return $\widehat{q}_n$.
\end{algorithmic}
\end{algorithm}
\end{minipage}
}\\

\section{Empirical Study}
\label{sec:empirical}
We empirically evaluate the performances of NIOU-C and NIOU-C:E using various examples. 
\Cref{subsec:empirical-sens-analysis} presents a numerical example where $h_i$ is chosen so that both $\eta_i$ and its influence functions with respect to all input distributions can be derived analytically.
This example examines how approximation errors impact the NIOU-C framework under varying input sizes, simulation budgets, and the number of solutions.  
\Cref{subsec:empirical-queue-system} considers a queueing system capacity control problem using a discrete-event simulator, benchmarking NIOU-C against IOU-C from \cite{Song2019} that requires assuming the parametric distributions for the input models.

To compute $\widehat{q}_n$ required for NIOU-C:E, we set $M=20\text{,}000$ in Step~\ref{alg:ext-step4} of  \Cref{alg:quantile}.
All experiments are repeated for 1000 macro-runs to estimate:
(i) MCB Coverage probability in~\eqref{eq:asymptotic-mcb-ci-2}; (ii)  $\Pr\{i^c \in \widehat{I}_n\}$; and (iii)  average size of $\widehat{I}_n$, $\widehat\E|\widehat{I}_n|$. For the examples in \Cref{subsec:empirical-sens-analysis}, we also report (iv) the average MCB CI width for $i^c$, $\widehat\E[\widehat{D}^+_{i^c,n}-\widehat{D}^-_{i^c,n}]$.
The experiments were run on the Phoenix cluster \citep{pace} with Intel Xeon Gold 6226 CPUs (2.7 GHz, 24 cores) and
192 GB of RAM\@.
We use \texttt{CVXPY} \citep{diamond2016cvxpy} and \texttt{SCS} \citep{odonoghue:21} to solve \eqref{eq:opt_dro3} with optimality tolerance $10^{-6}$.

\subsection{Numerical examples with known response functions}\label{subsec:empirical-sens-analysis}

For each $p \in [m]$, let $F_p^c=\N(c_p, \tau^2_p)$. For all $p \neq p^\prime$, $Z_p$ is independent of $Z_{p^\prime}$. Let $\vect{c} \triangleq (c_1,\ldots,c_m)$, $\vect{\tau}^2 \triangleq (\tau^2_1,\ldots,\tau^2_m)$ and $\vect{a}\triangleq (a_1,\ldots,a_k)$, where $a_i\in\mathbb{N}$ characterizes Solution $i$. 
For each $i \in [k]$, 
the simulation output is given by
\begin{equation}\label{eq:expr.output}
    h_i(\vect{Z}) = (1/T) \sum\nolimits_{t=1}^T \left[ \sum\nolimits_{p=1}^m a_i (Z_p{(t)} - a_i)^2 + \sum\nolimits_{p=1}^m \sum\nolimits_{{p^\prime}>p} a_i (Z_p{(t)} - a_i) (Z_{p^\prime}{(t)} - a_i) \right],
\end{equation}
where $(Z_p{(1)}, \ldots, Z_p{(T)})$ are the random variates drawn from $F_p$, and  $T_{ip}=T$ for all $i \in [k]$ and $p \in [m]$. Therefore,
\begin{equation*}
    \eta_i(\bF) = \E\Big[ \sum\nolimits_{p=1}^m a_i (Z_p - a_i)^2 + \sum\nolimits_{p=1}^m \sum\nolimits_{{p^\prime}>p} a_i (Z_p - i) (Z_{p^\prime} - a_i) \Big]
\end{equation*}
and we can analytically derive
\[
\IF_{ip}(x; \bF^c) = a_i (x - a_i)^2 - a_i \tau^2_p - a_i (c_p - a_i)^2 + \sum\nolimits_{{p^\prime} \neq p} a_i (x - c_p) (c_{p^\prime} - a_i).
\]
In the following, we consider different  configurations with varying degrees of difficulty in identifying $i^c$.
In all experiments,  $T=10$, $m=2$, $\vect{c}=(193, 200)$, $\vect{\tau}^2=(1833, 2000)$,
and $n_1 = n_2 = n$. The coverage probability is  $1-\alpha=0.9$,  and CRNs are used for all solutions as discussed in \Cref{sec:as-algorithm}.

\subsubsection{Case 1}

\begin{figure}[tbp]
    \centering
    \includegraphics[width=0.9\textwidth]{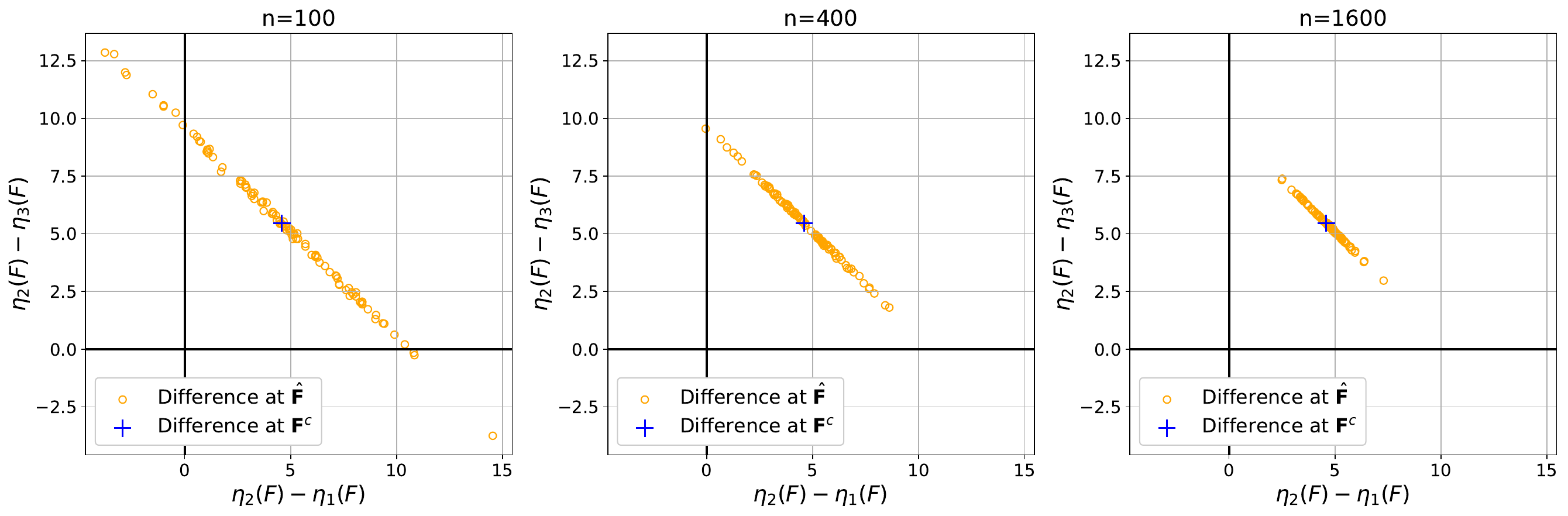}
\caption{
Comparison of the mean differences for $\bF^c$ and $\bFhat$ in Case 1 for different input data size $n$.
\label{fig:case1-difficulty}
The cross represents the true mean differences and the dots represents different realizations of $\bFhat$.
}
\end{figure}

We consider $k=3$ solutions with $\vect{a}=(66, 69, 72)$, yielding $i^c = 2$.
\Cref{fig:case1-difficulty} shows how inferring $i^c$ becomes easier as the sample size $n$ grows from $100$ to $400$ and $1600$. The true mean differences $(\eta_2(\bF^c) - \eta_1(\bF^c), \eta_2(\bF^c) - \eta_3(\bF^c))$ are marked with a cross. For each $n$, we generate $100$ independent samples from $\bF^c$ and plot $(\eta_2(\widehat{\bF}) - \eta_1(\widehat{\bF}), \eta_2(\widehat{\bF}) - \eta_3(\widehat{\bF}))$. A point in the first quadrant indicates $i^c = 2$ remains the conditional best. As $n$ increases, points concentrate in the first quadrant: 88\% for $n=100$, 99\% for $n=400$, and 100\% for $n=1600$.


\Cref{fig:if-estimation} highlights benefits of using CRNs  over independent runs for influence function estimation.  For five macro-runs, the solid line depicts $\IFhat_{2,1}(\cdot;\bFhat) - \IFhat_{3,1}(\cdot;\bFhat)$ and the dotted lines depict $\IFhathat_{2,1}(\cdot;\bFhat) - \IFhathat_{3,1}(\cdot;\bFhat)$ evaluated at data  $\{X_{1j}\}_{j \in [100]}$ from the first input model. The estimation error is orders of magnitude larger when CRNs are not used. Although this effect may be more pronounced for the output function in~\eqref{eq:expr.output}, we observed similar phenomena for others.

\begin{figure}[tbp]
    \centering
    \includegraphics[width=.8\textwidth]{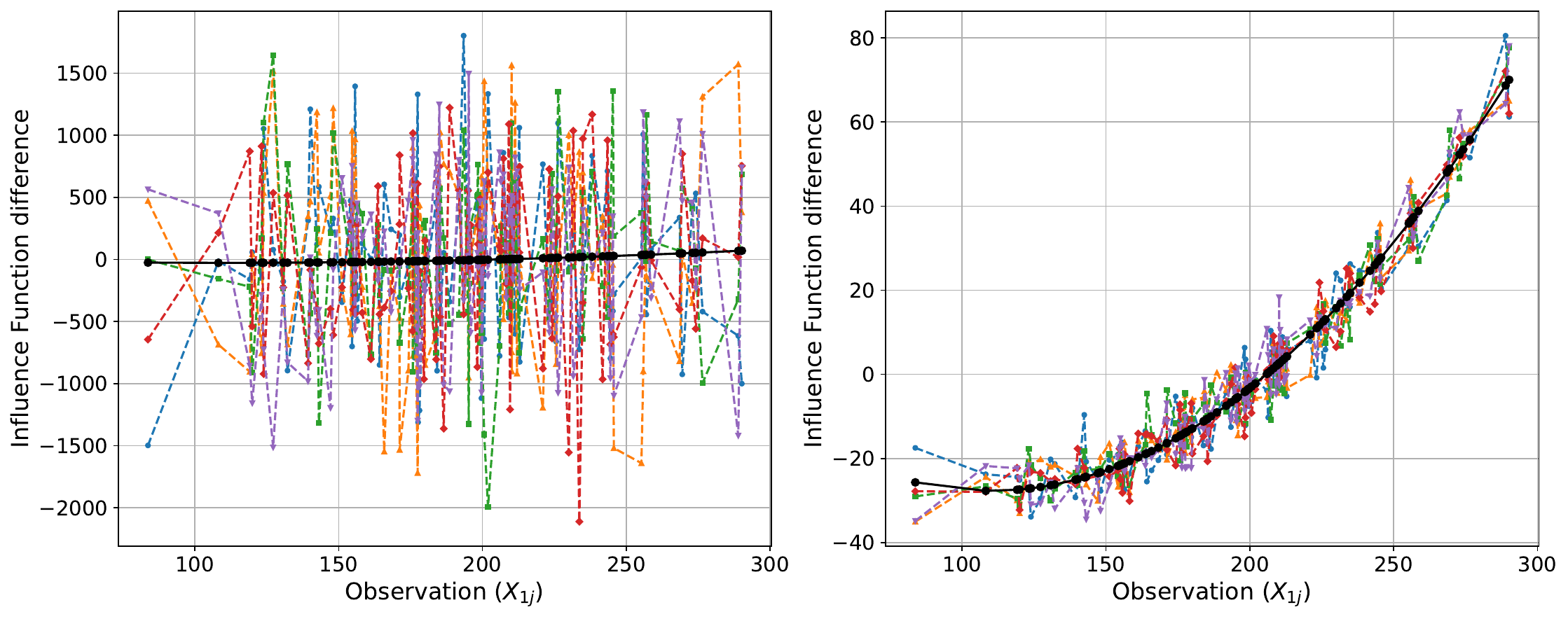}
\caption{
The influence function estimates using independent simulations (left) and  CRNs are used (right).
\label{fig:if-estimation}
The solid line is $\IFhat_{2,1}(\cdot;\bFhat) - \IFhat_{3,1}(\cdot;\bFhat)$; each dotted line is $\IFhathat_{2,1}(\cdot;\bFhat) - \IFhathat_{3,1}(\cdot;\bFhat)$ from a different macro-run.
}
\end{figure}

\Cref{tab:case1} compares  NIOU-C and NIOU-C:E for $n=100$, $400$, and $1600$, each using four choices  of $(R_1, R_2)$ to examine the effect of increasing the simulation budget. The first and last settings allocate equal simulation budget for estimating the influence functions and computing  $\widehat{U}_{i\ell,n}$. 
We also define a \textit{benchmark} value for each performance metric by applying the procedure to the same problem, but using $\eta_i(\cdot)$ and $\IF_{ip}(\cdot;\bF^c)$ instead of their estimates. 
The benchmark assesses the impact of  finite budget, but still reflects  IU.

\begin{table}[tbp]
    \centering
    \caption{Performance comparison of NIOU-C and NIOU-C:E for Case 1 from 1000 macro-runs.\label{tab:case1}}
    {\resizebox{0.9\textwidth}{!}{%
        \begin{tabular}{|c|c||cccc||cccc|}
            \hline
            & & \multicolumn{4}{c||}{NIOU-C} & \multicolumn{4}{c|}{NIOU-C:E}\\\hline
            $n$ & $(R_1,R_2)$ & MCB Cov. & $\Pr\{i^c \in \widehat{I}_n\}$  & $\widehat\E|\widehat{I}_n|$ &  $\widehat\E[\widehat{D}^+_{i^c,n}-\widehat{D}^-_{i^c,n}]$ & MCB Cov. & $\Pr\{i^c \in \widehat{I}_n\}$ & $\widehat\E|\widehat{I}_n|$ & $\widehat\E[\widehat{D}^+_{i^c,n} -\widehat{D}^-_{i^c,n}]$
            \\\hline\hline
             \multirow{5}{*}{100}
             & (100, 25) & 0.689 & 0.976 & 1.833 & 8.7 & 0.570 & 0.966 & 1.649 & 6.9 \\
             & (100, 100) & 0.746 & 0.991 & 1.824 & 8.7 & 0.631 & 0.976 & 1.601 & 7.0 \\
             & (400, 25) & 0.861 & 0.993 & 2.179 & 12.2 & 0.735 & 0.986 & 1.914 & 9.4 \\
             & (400, 100) & 0.913 & 0.996 & 2.236 & 12.1 & 0.794 & 0.994 & 1.897 & 9.4 \\\cline{2-10}
             & Benchmark & 0.973 & 0.999 & 2.514 & 14.9 & 0.910 & 0.998 & 2.169 & 11.4 \\\hline\hline
             \multirow{5}{*}{400}
             & (400, 100) & 0.716 & 0.997 & 1.159 & 5.7 & 0.593 & 0.997 & 1.100 & 5.2 \\
             & (400, 400) & 0.775 & 1.000 & 1.114 & 5.8 & 0.650 & 0.998 & 1.066 & 5.3 \\
             & (1600, 100) & 0.861 & 0.999 & 1.342 & 6.8 & 0.757 & 0.997 & 1.186 & 5.9 \\
             & (1600, 400) & 0.911 & 1.000 & 1.281 & 6.9 & 0.814 & 1.000 & 1.140 & 6.0 \\\cline{2-10}
             & Benchmark & 0.966 & 1.000 & 1.470 & 7.8 & 0.910 & 1.000 & 1.207 & 6.7  \\\hline\hline
             \multirow{5}{*}{1600}
             & (1600, 400) & 0.736 & 1.000 & 1.000 & 5.2 & 0.600 & 1.000 & 1.000 & 5.0 \\
             & (1600, 1600) & 0.776 & 1.000 & 1.000 & 5.3 & 0.665 & 1.000 & 1.000 & 5.1 \\
             & (6400, 400) & 0.876 & 1.000 & 1.001 & 5.7 & 0.786 & 1.000 & 1.000 & 5.3 \\
             & (6400, 1600) & 0.922 & 1.000 & 1.000 & 5.8 & 0.811 & 1.000 & 1.000 & 5.4 \\\cline{2-10}
             & Benchmark & 0.971 & 1.000 & 1.001 & 6.2 & 0.923 & 1.000 & 1.001 & 5.7  \\\hline
        \end{tabular}
    }}
    {
    }
\end{table}

For $n=100$,  the benchmarks of both NIOU-C and NIOU-C:E meet the target MCB coverage. NIOU-C shows a more conservative behavior, as evidenced by higher MCB coverage and $\Pr\{i^c \in \widehat{I}_n\}$, resulting in a larger $\widehat{I}_n$ and a wider MCB CI  compared to NIOU-C:E\@.
As $n$ increases to $400$ and $1600$, both benchmarks return increasingly  smaller $\widehat{I}_n$ and include $i^c$ in $\widehat{I}_n$ in all 1000 macro-runs as expected from \Cref{fig:case1-difficulty}.

With simulation error, both NIOU-C and NIOU-C:E include $i^c$ in $\widehat{I}_n$ with or above probability 0.9 in all $(R_1,R_2)$ settings. Meanwhile, the MCB coverage falls  below 0.9 for smaller $R_1$ and $R_2$, particularly for NIOU-C:E\@. The undercoverage of NIOU-C:E can be attributed to that it is design to cut conservatism by estimating the pairwise correlation of the $\chi^2_1$ random variables in~\eqref{eq:multivariate.norm.cov}. Once the simulation error is introduced, the exact $\text{IF}_{ip}$ is  replaced by its estimator, increasing the estimation error of the correlation matrix that can lead to imprecise estimation of $\widehat{q}_n$ in \Cref{alg:quantile} and  undercoverage. Moreover, for both NIOU-C and NIOU-C:E, recall that $\widehat{U}_{i\ell,n}$ for all $i\neq \ell$ are estimated from $R_2$ simulations, which may cause undercoverage in both for small $R_2$, although its effect on NIOU-C may be masked by its conservatism. Indeed,  \Cref{tab:case1} shows that increasing $R_2$ improves the MCB coverage and $\Pr\{i^c \in \widehat{I}_n\}$ for both algorithms.

When $R_1$ increases, influence function estimators become more precise and
all four metrics increase for both algorithms. 
The CI width, $\widehat{D}^+_{i^c,n}-\widehat{D}^-_{i^c,n}$, increasing in $R_1$  may appear counterintuitive, as one typically expects a CI to shrink with increasing simulation effort. We can link this observation to the so-called optimizer's curse~\citep{smith2006}. 
The maximizer, $\widehat{\bw}^{*}_{i\ell}$, of \eqref{eq:opt_dro3}, is suboptimal for~\eqref{eq:U_iell.definition}  owing to the error in approximating $\eta_i^L(\vect{w};\bFhat)$ with $\widehat{\eta}_i^L(\vect{w};\bFhat)$. This implies that the objective  of~\eqref{eq:U_iell.definition} at $\widehat{\bw}^{*}_{i\ell}$ is smaller than its optimal value, $U_{i\ell,n}$, ultimately making the CI width, $\widehat{D}_{i^c,n}^+ - \widehat{D}_{i^c,n}^-$, smaller. When $R_1$ is larger, $\widehat{\eta}_i^L(\vect{w};\bFhat)$ better approximates $\eta_i^L(\vect{w};\bFhat)$. Hence $\widehat{\bw}^{*}_{i\ell}$ converges to a maximizer of \eqref{eq:opt_dro3} and widens $[\widehat{D}_{i^c,n}^-, \widehat{D}_{i^c,n}^+]$. At the same time, $\widehat{I}_n$ becomes larger and the coverage improves.

\subsubsection{Case 2}

\begin{figure}[t]
\centering
    \includegraphics[width=0.9\textwidth]{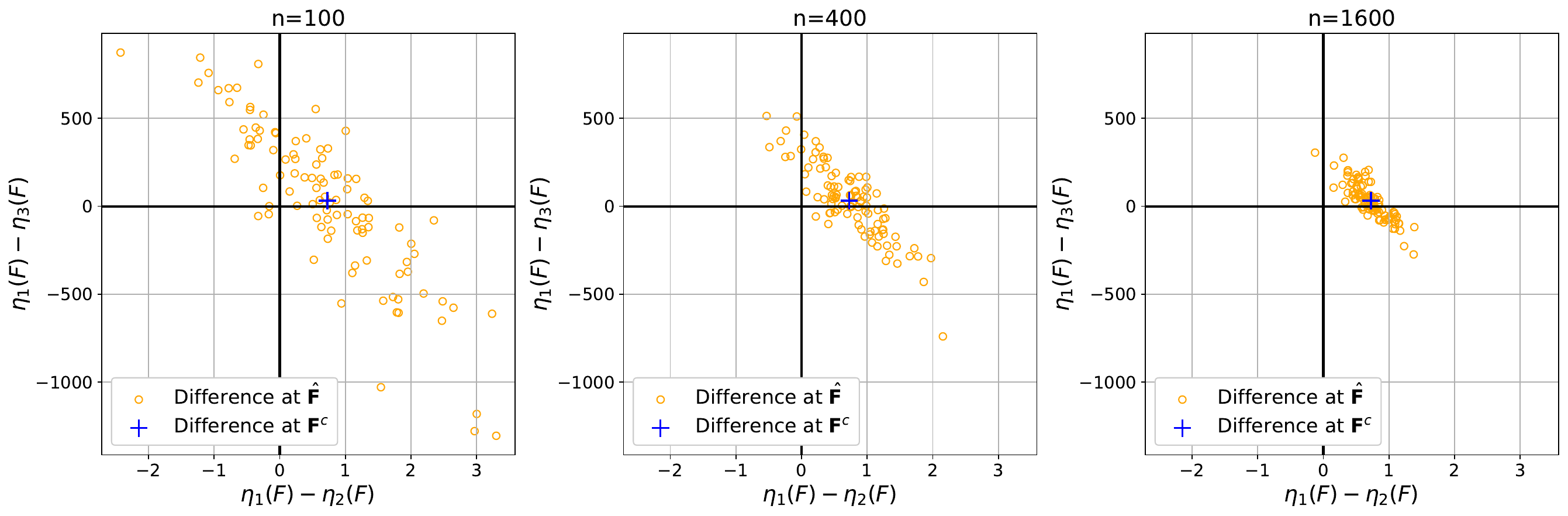}
\caption{
Comparison of the mean differences for $\bF^c$ and $\bFhat$ in Case 2 for different input data size.
\label{fig:case2-difficulty}
The cross represents the true mean differences and the dots represents different realizations of $\bFhat$.
}
\end{figure}

We consider a more difficult case with $k=3$  and $\vect{a}=(69, 70, 255)$, which yields $i^c = 1$. 
\Cref{fig:case2-difficulty} illustrates the difficulty of inferring $i^c$ as $n$ increases.
Compared to Case~1, the nonlinearity of the pairwise mean differences at each  $\bFhat$ is
more pronounced here. When $n=100$, 32\% of the points are in the first quadrant, which increases to 50\% and 57\% for $n=400$ and $n=1600$, respectively,  suggesting that identifying $i^c$ is challenging even for large $n$.

\begin{table}[t]
    \centering
    \caption{Performance comparison of NIOU-C and NIOU-C:E for Case 2 from 1000 macro-runs.\label{tab:case2}}
    {\resizebox{0.9\textwidth}{!}{%
        \begin{tabular}{|c|c||cccc||cccc|}
            \hline
            & & \multicolumn{4}{c||}{NIOU-C} & \multicolumn{4}{c|}{NIOU-C:E}\\\hline
            $n$ & $(R_1,R_2)$ & MCB Cov. & $\Pr\{i^c \in \widehat{I}_n\}$  & $\widehat\E|\widehat{I}_n|$ &  $\widehat\E[\widehat{D}^+_{i^c,n}-\widehat{D}^-_{i^c,n}]$ & MCB Cov. & $\Pr\{i^c \in \widehat{I}_n\}$ & $\widehat\E|\widehat{I}_n|$ & $\widehat\E[\widehat{D}^+_{i^c,n} -\widehat{D}^-_{i^c,n}]$ 
            \\\hline\hline
             \multirow{5}{*}{100}
             & (100, 25) & 0.691 & 0.810 & 2.339 & 578.4 & 0.613 & 0.754 & 2.133 & 463.6 \\
             & (100, 100) & 0.748 & 0.863 & 2.441 & 569.2 & 0.685 & 0.809 & 2.257 & 451.7 \\
             & (400, 25) & 0.855 & 0.927 & 2.680 & 795.1 &0.787 & 0.881 & 2.501 & 613.9 \\
             & (400, 100) & 0.906 & 0.950 & 2.755 & 788.9 & 0.825 & 0.904 & 2.567 & 606.0 \\\cline{2-10}
             & Benchmark & 0.976 & 0.988 & 2.910 & 955.4 & 0.936 & 0.968 & 2.780 & 721.6 \\\hline\hline
             \multirow{5}{*}{400}
             & (400, 100) & 0.713 & 0.862 & 2.203 & 269.0 & 0.646 & 0.828 & 2.020 & 216.6 \\
             & (400, 400) & 0.781 & 0.909 & 2.274 & 263.2 & 0.688 & 0.868 & 2.067 & 209.8 \\
             & (1600, 100) & 0.864 & 0.941 & 2.525 & 371.1 & 0.800 & 0.908 & 2.351 & 288.0 \\
             & (1600, 400) & 0.902 & 0.967 & 2.594 & 369.9 & 0.851 & 0.952 & 2.417 & 283.7 \\\cline{2-10}
             & Benchmark &  0.968 & 0.990 & 2.785 & 451.3 & 0.932 & 0.978 & 2.622 & 341.5 \\\hline\hline
             \multirow{5}{*}{1600}
             & (1600, 400) & 0.721 & 0.913 & 1.837 & 122.6 & 0.641 & 0.883 & 1.716 & 97.3 \\
             & (1600, 1600) & 0.785 & 0.937 & 1.857 & 117.3 & 0.711 & 0.917 & 1.724 & 91.7 \\
             & (6400, 400) & 0.873 & 0.968 & 2.125 & 172.0 & 0.808 & 0.947 & 1.950 & 131.9 \\
             & (6400, 1600) & 0.916 & 0.972 & 2.116 & 168.3 & 0.857 & 0.957 & 1.961 & 127.0 \\\cline{2-10}
             & Benchmark & 0.967 & 0.991 & 2.315 & 209.5 & 0.936 & 0.983 & 2.122 & 155.5  \\\hline
        \end{tabular}
    }}
\end{table}

Let us first observe the benchmark results in \Cref{tab:case2}. When $n=100$, the average size of $\widehat{I}_n$ is 2.910 for NIOU-C, indicating that the procedure often includes all three solutions. The average MCB CI width for $i^c$ is 955, much greater than in Case~1, suggesting that the mean difference of the best two solutions is highly sensitive to the input distributions.
NIOU-C:E is still more aggressive than NIOU-C as in Case~1, however,  $\widehat{I}_n$ shrinks more slowly as we increase $n$, reflecting the increased difficulty of the problem.  

With simulation error, the inclusion probability of $i^c$ falls below $0.9$ for both algorithms  for smaller $n$ and $(R_1,R_2)$, but it recovers as the input and simulation data increase. A similar observation can be made for the MCB coverage while its value is lower than $\Pr\{i^c \in \widehat{I}_n\}$. 

This result combined with Case~1's suggests that NIOU-C:E may be overly aggressive when the simulation budget is limited and the problem is difficult to distinguish $i^c$ from the rest with IU.

\subsubsection{Case 3}

To study the effect of the problem size, we set $k=10$ and $\vect{a}=(64, 65, \ldots, 72, 73)$, which yields $i^c=6$. 
We generate 100 independent  samples of size $n$ from $\bF^c$ to estimate $\bFhat$, and identify the optimum under $\bFhat$; $i^c = 6$ is optimal 39\%, 68\%, and 91\% of the times for $n = 100$, $400$, and $1600$, respectively. This indicates that identifying $i^c$ in this problem is not as easy as in Case~1, but not as difficult as in Case~2.

Comparing the benchmark performance metrics of NIOU-C in \Cref{tab:case3} with those in \Cref{tab:case1,tab:case2} reveals that NIOU-C is far more conservative for large $k$, achieving a benchmark MCB coverage near or at  $1$ for all values of $n$.   This conservatism is visually supported by \Cref{fig:ellipsoid_manybox}, where the gap between $C_\alpha$ and its polyhedral superset widens with larger $k$. 
In contrast, the benchmark of NIOU-C:E does not show the same effect as it is designed to reduce the conservatism caused by the dimensionality.
As $n$ increases, both benchmarks yield  smaller $|\widehat{I}_n|$. Compared to when $k=3$ in Cases~1~and~2, the  difference in sizes of $\widehat{I}_n$ returned by NIOU-C and NIOU-C:E is more pronounced across all $n$ values. 
 
When the simulation error is introduced, 
the results show that NIOU-C achieves the target MCB coverage even with relatively small $R_1$ and $R_2$, while NIOU-C:E fails to do so, resulting in $\Pr\{i^c \in \widehat{I}_n\}$ below the target in some settings.  For larger $n$, however, NIOU-C:E delivers a smaller set while including $i^c$ in $\widehat{I}_n$ with probability above 0.9.

\begin{table}[tbp]
    \centering
    \caption{Performance comparison of NIOU-C and NIOU-C:E for Case 3 from 1000 macro-runs.\label{tab:case3}}
    {\resizebox{0.9\textwidth}{!}{%
        \begin{tabular}{|c|c||cccc||cccc|}
            \hline
            & & \multicolumn{4}{c||}{NIOU-C} & \multicolumn{4}{c|}{NIOU-C:E}\\\hline
            $n$ & $(R_1,R_2)$ & MCB Cov. & $\Pr\{i^c \in \widehat{I}_n\}$  & $\widehat\E|\widehat{I}_n|$ &  $\widehat\E[\widehat{D}^+_{i^c,n}-\widehat{D}^-_{i^c,n}]$ & MCB Cov. & $\Pr\{i^c \in \widehat{I}_n\}$ & $\widehat\E|\widehat{I}_n|$ & $\widehat\E[\widehat{D}^+_{i^c,n} -\widehat{D}^-_{i^c,n}]$
            \\\hline\hline
             \multirow{5}{*}{100}
             & (100, 25) & 0.941 & 0.978 & 5.639 & 8.7 & 0.570 & 0.769 & 2.973 & 3.1
 \\
             & (100, 100) & 0.966 & 0.993 & 5.690 & 8.4 & 0.628 & 0.830 & 2.990 & 2.9
 \\
             & (400, 25) & 0.994 & 0.997 & 7.255 & 13.7 & 0.737 & 0.890 & 3.826 & 4.5
 \\
             & (400, 100) & 0.996 & 0.999 & 7.326 & 13.3 & 0.792 & 0.924 & 3.787 & 4.2
 \\\cline{2-10}
             & Benchmark & 0.999 & 1.000 & 8.488 & 17.3 & 0.907 & 0.968 & 4.422 & 5.4
\\\hline\hline
             \multirow{5}{*}{400}
             & (400, 100) & 0.924 & 0.991 & 3.300 & 2.8 & 0.589 & 0.890 & 1.991 & 1.2
 \\
             & (400, 400) & 0.959 & 0.994 & 3.287 & 2.8 & 0.640 & 0.923 & 1.986 & 1.1
 \\
             & (1600, 100) & 0.988 & 0.997 & 4.264 & 4.4 & 0.753 & 0.938 & 2.409 & 1.7
 \\
             & (1600, 400) & 0.996 & 0.999 & 4.258 & 4.3 & 0.813 & 0.976 & 2.412 & 1.6
 \\\cline{2-10}
             & Benchmark &  0.999 & 1.000 & 4.990 & 5.6 & 0.908 & 0.989 & 2.702 & 2.0
 \\\hline\hline
             \multirow{5}{*}{1600}
             & (1600, 400) & 0.947 & 0.999 & 2.149 & 1.3 & 0.597 & 0.979 & 1.452 & 0.6
 \\
             & (1600, 1600) & 0.971 & 1.000 & 2.189 & 1.3 & 0.659 & 0.990 & 1.392 & 0.6
 \\
             & (6400, 400) & 0.995 & 1.000 & 2.673 & 1.8 & 0.784 & 0.990 & 1.678 & 0.8
 \\
             & (6400, 1600) & 0.997 & 1.000 & 2.727 & 1.8 & 0.811 & 0.993 & 1.622 & 0.8
 \\\cline{2-10}
             & Benchmark &  1.000 & 1.000 & 2.987 & 2.2 & 0.923 & 0.999 & 1.809 & 1.0
 \\\hline
        \end{tabular}
    }}
\end{table}

\subsection{Queueing System}\label{subsec:empirical-queue-system}

We compare  IOU-C and NIOU-C using a tandem queue simulation with three first-in-first-out servers. Arrivals follow a Poisson process with $\lambda=6.67$, and independent mean service times  are $\mu_1=0.73$, $\mu_2=0.7$, and $\mu_3 = 0.8$.
The first server has infinite queue capacity; the second and third have capacities of 2 and 3, introducing potential blocking.
Each server starts with a base capacity of 4, and extra capacity can be added at costs $\vect{c} = (2,5,6)$ within a budget of 9, producing $k=9$ feasible solutions. 
The objective is to minimize the average total waiting time of the first 100 customers. 
We estimate  service time distributions from data. For each macro-run and  $p \in [3]$, we generate i.i.d.\ samples of size $n_p=n \in \{100,400\}$  from the true service time distribution. IOU-C assumes exponential distributions and obtains rates via maximum likelihood estimation (MLE).
We chose $1 - \alpha = 0.9$, and NIOU-C and IOU-C use CRNs. Each replication generates $100$ variates per input, as each customer visits all three servers, i.e., $T_{ip}=100$ for $i \in [k]$, $p \in [3]$.

We test two  IOU-C variants: \emph{IOU-C: All-in} (IOU-C:A) and \emph{IOU-C: Plug-in} (IOU-C:P). Both fit a linear regression model for $\eta_i$ in the parameter vector of the input models, but IOU-C:A accounts for coefficient estimation error in MCB CIs, making it more conservative and yielding larger $\widehat{I}_n$ than IOU-C:P.

To ensure a fair comparison, all methods use the same total simulation runs. For IOU-C, \cite{Song2019} recommend running replications proportional to $n$ at  each solution with $\bFhat$ as the input model and sample $D\propto n^{1.1}$ design points from the MLEs' asymptotic distribution, with one replication per point to fit the regression model. Following their guidance, we run $4n$ replications of each solution with $\bFhat$ and $D = 4\lceil n^{1.1}\rceil$.
The total simulations per solution is $R=D+4n$, which gives $kR$ overall.  NIOU-C uses the same budget, split evenly:
$R/2$ to estimate the influence functions and $R/2$ to estimate the bounds $\set{\widehat{U}_{i\ell,n}}_{i \neq \ell}$.

\begin{table}[t]
\centering
    \caption{Comparison of IOU-C and NIOU-C procedures when service times are known to follow exponential distributions. Results are computed from $1000$ macro-runs. Each row presents  $\Pr\{i \in \widehat{I}_n\}$ for the corresponding $i$; $i^c$ is marked bold.\label{tab:queue-exp-results}}
{\resizebox{0.9\textwidth}{!}{%
\begin{tabular}{|c|c|c|cccc|cccc|}
\hline
\multirow{2}{*}{$i$} & \multirow{2}{*}{\textbf{Solution}} & \multirow{2}{*}{$\eta_i(\bF^c)$} & \multicolumn{4}{c|}{$n=100, R=1036$} & \multicolumn{4}{c|}{$n=400, R=4516$}\\
\cline{4-11}
& & & IOU-C:A & IOU-C:P & {NIOU-C} & {NIOU-C:E} & IOU-C:A & IOU-C:P & {NIOU-C} & {NIOU-C:E} \\
\hline
1 & (0, 0, 0) & 3.73 & 0.005 & 0.001 & 0 & 0 & 0 & 0 & 0 & 0 \\ 
2 & (0, 0, 1) & 3.06 & 0.543 & 0.544 & 0 & 0 & 0.049 & 0.028 & 0 & 0 \\ 
3 & (0, 1, 0) & 3.28 & 0.406 & 0.139 & 0 & 0 & 0.035 & 0.004 & 0 & 0 \\ 
4 & (1, 0, 0) & 3.25 & 0.140 & 0.002 & 0 & 0 & 0 & 0 & 0 & 0 \\ 
\textbf{5} & \textbf{(1, 0, 1)} & \textbf{2.36} & \textbf{0.992} & \textbf{0.903} & \textbf{0.997} & \textbf{0.968} & \textbf{0.996} & \textbf{0.959} & \textbf{1.000} & \textbf{0.989} \\ 
6 & (1, 1, 0) & 2.76 & 0.861 & 0.491 & 0 & 0 & 0.323 & 0.228 & 0 & 0 \\ 
7 & (2, 0, 0) & 3.12 & 0.439 & 0.001 & 0 & 0 & 0.001 & 0 & 0 & 0 \\ 
8 & (2, 1, 0) & 2.61 & 0.964 & 0.591 & 0.975 & 0.860 & 0.898 & 0.431 & 0.944 & 0.743 \\ 
9 & (3, 0, 0) & 3.08 & 0.551 & 0.062 & 0.063 & 0.019 & 0.026 & 0 & 0 & 0 \\ 

\hline
\multicolumn{3}{|r|}{MCB Coverage} & 0.965 & 0.752 & 0.987 & 0.874 & 0.965 & 0.749 & 0.987 & 0.854
 \\
\multicolumn{3}{|r|}{$\widehat\E|\widehat{I}_n|$} & 4.901 & 2.734 & 2.035 & 1.847 & 2.328 & 1.650 & 1.944 & 1.732
 \\
\hline
\end{tabular}
}}
\end{table}

In the first scenario, all three service times follow exponential distributions, and IOU-C correctly identifies the family.
\Cref{tab:queue-exp-results} shows the performance metrics averaged over 1000 macro-runs of all four methods with $n=100$ and $n=400$. Each row shows the probability of including the corresponding solution in the subset $\widehat{I}_n$ for each algorithm,
with the row for $i^c=5$ highlighted in bold. 
For $n=100$, all four algorithms include $i^c$ in  $\widehat{I}_n$ 90\% of the cases or more. However, IOU-C:A and IOU-C:P include suboptimal solutions more frequently than NIOU-C \& NIOU-C:E. The latter two exclude all suboptimal solutions from $\widehat{I}_n$ except for $i=8$ and $9$ in all $1000$ macro-runs.
While IOU-C:A has the MCB coverage of 0.965, IOU-C:P is more aggressive at 0.752. 
Both NIOU-C and NIOU-C:E have higher MCB coverage than IOU-C:P, with values of 0.987 and 0.874, respectively, the latter being the closest to the target among the four methods. 
From the results in Section~\ref{subsec:empirical-sens-analysis}, we can deduce that increasing $R$ can improve MCB coverage of NIOU-C:E to attain the target. 
For $n=400$, all procedures yield higher $\Pr\{i^c \in \widehat{I}_n\}$ and smaller $\widehat{I}_n$. Notably, both NIOU-C and NIOU-C:E have higher MCB coverage and $\Pr\{i^c \in \widehat{I}_n\}$ 
than IOU-C:P despite not leveraging the knowledge of the parametric distribution family.

\begin{table}[t]
\centering
    \caption{Comparison of IOU-C and NIOU-C procedures when service times follow unknown bimodal distributions. Results are computed from $1000$ macro-runs. Each row presents  $\Pr\{i \in \widehat{I}_n\}$ for the corresponding $i$; $i^c$ is marked bold.\label{tab:queue-bim-results}}
{\resizebox{0.9\textwidth}{!}{%
\begin{tabular}{|c|c|c|cccc|cccc|}
\hline
\multirow{2}{*}{$i$} & \multirow{2}{*}{\textbf{Solution}} & \multirow{2}{*}{$\eta_i(\bF^c)$} & \multicolumn{4}{c|}{$n=100, R=1036$} & \multicolumn{4}{c|}{$n=400, R=4519$}\\
\cline{4-11}
& & & IOU-C:A & IOU-C:P & {NIOU-C} & {NIOU-C:E} & IOU-C:A & IOU-C:P & {NIOU-C} & {NIOU-C:E} \\ \hline
1 & (0, 0, 0) & 3.64 & 0.010 & 0 & 0 & 0 & 0 & 0 & 0 & 0 \\ 
2 & (0, 0, 1) & 3.26 & 0.530 & 0.552 & 0 & 0 & 0.056 & 0.032 & 0 & 0 \\ 
3 & (0, 1, 0) & 3.12 & 0.445 & 0.159 & 0 & 0 & 0.030 & 0 & 0 & 0 \\ 
4 & (1, 0, 0) & 2.98 & 0.186 & 0.004 & 0 & 0 & 0 & 0 & 0 & 0 \\ 
5 & (1, 0, 1) & 2.38 & 1.000 & 0.972 & 0.995 & 0.933 & 1.000 & 0.989 & 0.994 & 0.933 \\ 
6 & (1, 1, 0) & 2.47 & 0.874 & 0.513 & 0 & 0 & 0.341 & 0.192 & 0 & 0 \\ 
7 & (2, 0, 0) & 2.77 & 0.483 & 0.003 & 0 & 0 & 0.009 & 0 & 0 & 0 \\ 
\textbf{8} & \textbf{(2, 1, 0)} & \textbf{2.27} & \textbf{0.985} & \textbf{0.627} & \textbf{0.993} & \textbf{0.927} & \textbf{0.970} & \textbf{0.446} & \textbf{0.994} & \textbf{0.931} \\ 
9 & (3, 0, 0) & 2.70 & 0.605 & 0.069 & 0.222 & 0.069 & 0.034 & 0 & 0.001 & 0 \\ 
\hline
\multicolumn{3}{|r|}{MCB Coverage} & 0.939 & 0.447 & 0.971 & 0.822 & 0.844 & 0.188 & 0.976 & 0.802
\\
\multicolumn{3}{|r|}{$\widehat\E|\widehat{I}_n|$} & 5.118 & 2.899 & 2.210 & 1.929 & 2.440 & 1.659 & 1.989 & 1.864
 \\
\hline
\end{tabular}
}}
\end{table}

The next scenario compares  IOU-C and NIOU-C when the former  incorrectly assumes the bimodal service time distributions to be exponential. For the $s$th server, the service time follows
$
\one{\xi=1} \cdot b_1^s \cdot \mathrm{Beta}(b_2^s, b_3^s) + \one{\xi=0} \cdot b_4^s \cdot \mathrm{Beta}(b_5^s, b_6^s),$ where $\xi \sim \mathrm{Bernoulli(\gamma_s)}$, $\vect{b}^1 = (1, 2, 6, 3, 10, 2)$, $\vect{b}^2 = (1, 2, 6, 2.3, 6, 2)$, $\vect{b}^3 = (1, 2, 6, 1, 12, 2)$, and $\vect{\gamma} = (0.785, 0.7, 0.13)$. The mean service times are  close to the first scenario, but  $i^c=8$. Moreover, this scenario is more difficult, as the gap between the optimum and the second best is smaller.
\Cref{tab:queue-bim-results} presents the results for this scenario.  For $n=100$, the probability of including  $i^c$ is above 0.9 for all algorithms other than IOU-C:P, which drops to $0.627$. In fact, IOU-C:A and IOU-C:P include $i=5$ more frequently (with statistical significance) in $\widehat{I}_n$, presuming that the problem configuration is closer to that in~\Cref{tab:queue-exp-results}.
The MCB coverage of IOU-C:P is clearly below 0.9, while NIOU-C:E also undershoots as observed in \Cref{tab:queue-exp-results}.
Although the MCB coverage of IOU-C:A meets the target, this does not mean that it is robust to an incorrect parametric assumption. Rather, it is caused by its inherent conservatism as can be seen from large $\widehat\E|\widehat{I}_n|$. 
Overall, NIOU-C shows the best performance in all metrics. 
For $n=400$, NIOU-C and NIOU-C:E excludes $i=9$ from $\widehat{I}_n$ most of the time while keeping $i=5$ and $i^c=8$ in $\widehat{I}_n$.  IOU-C:P's performance further degrades.  IOU-C:A continues to include $i^c$ in $\widehat{I}_n$ with probability 0.970, however, NIOU-C performs better than IOU-C:A by delivering a higher MCB coverage with smaller $\widehat{I}_n$.
Noticeably, the MCB coverage of IOU-C:A drops below 0.9 hinting that even with the conservatism, the incorrect parametric assumption starts to hurt its performance as IU is reduced with larger $n$. To further confirm this intuition, we increase $n$ to $1600$ and ran IOU-C:A for $1000$ macro-runs and confirm that the MCB coverage drops to 0.464.

In summary, the first scenario shows that NIOU-C performs competitively across all metrics without requiring knowledge of the correct parametric distribution families. The second scenario highlights NIOU-C’s advantage in avoiding parametric assumptions.

\section{Conclusion}
\label{sec:conclusion}

We propose the NIOU-C framework to account for IU in the R\&S problem under limited input data. 
NIOU-C constructs a confidence set of solutions with an asymptotic guarantee that the optimum is contained in the set with the user-defined confidence level. We  provide guidance on the simulation sample sizes to facilitate the asymptotic consistency and demonstrate that employing  CRNs can significantly improve the finite-sample performance.  
To mitigate the conservatism exhibited by NIOU-C in large-scale problems, we propose an alternative procedure,  NIOU-C:E\@. 
Empirical results suggest that both NIOU-C and NIOU-C:E correctly include the optimum in the confidence set  with adequate simulation effort and show robust performance compared to methods that rely on parametric assumptions.

Several extensions remain open. First, considering performance measures other than simulation output means such as quantiles or conditional value-at-risk could offer a more comprehensive risk assessment introduced by IU\@. A key challenge in this direction is the development of consistent influence function estimators tailored to these alternative risk metrics.
Extending the nonparametric treatment to accommodate more complex input distributions, such as time-varying arrival processes, could be another research direction.
Lastly, the NIOU-C framework may be extended to the case with streaming input data, where the decision maker can leverage the confidence set to stop additional data collection and return the optimum.

\bibliographystyle{informs2014} 
\bibliography{IUbib} 

\clearpage 

\appendix

\section{Proof for \Cref{sec:problem_desc}}

\begin{proof}{Proof of \Cref{lem:asymptotic-mcb-ci}}
    In the proof of Theorem~1.1 of \cite{chang1992optimal} it is shown that, if $\widehat{\eta}_i$ is a point estimator of $\eta_i(\bF^c)$ for all $i \in [k]$, then $E_1 \subseteq E_2$, where $E_1 \triangleq \{\widehat{\eta}_{i^c} - \widehat{\eta}_\ell - (\eta_{i^c}(\bF^c) - \eta_\ell(\bF^c)) \geq -q_{i\ell}, \forall \ell \neq i\} $, $E_2 \triangleq \{ \eta_i(\bF^c) - 
    \max\nolimits_{\ell  \neq i} {\eta_\ell(\bF^c)} \in 
    [D_{i}^-, D_{i}^+], \forall  i \in [k] \}$, and
    $D_{i}^+ \triangleq \big(\min_{\ell  \neq i}\set{\widehat{\eta}_{i} - \widehat{\eta}_\ell + q_{i\ell}}\big)^+$.
    Moreover
    $I \triangleq \set{i \in [k]: D_{i}^+ > 0}$,  
    $D_{i}^- = 0$ if $I = \set{i}$, and 
    $D_{i}^- = -\big(\min_{\ell  \in I: \ell \neq i}\set{\widehat{\eta}_{i} - \widehat{\eta}_\ell - q_{\ell i}}\big)^-$ otherwise, where $\{q_{i\ell}\}_{i\neq\ell}$ are fixed quantities.
    We observe that $E_1 = \{\eta_{i^c}(\bF^c) - \eta_\ell(\bF^c) \leq \widehat{\eta}_{i^c} - \widehat{\eta}_\ell + q_{i\ell}, \forall \ell \neq i\}$. 
    Below, we apply this result to our framework.
    
    If we denote $U_{i\ell,n} = \widehat{\eta}_{i} - \widehat{\eta}_\ell + q_{i\ell}$, then $E_1^n \triangleq \{\eta_{i^c}(\bF^c) - \eta_\ell(\bF^c) \leq 
    U_{i^c\ell,n}, \forall \ell \neq i\}$ and $E_2^n \triangleq \{ \eta_i(\bF^c) - 
    \max\nolimits_{\ell  \neq i} {\eta_\ell(\bF^c)} \in 
    [D_{i,n}^-, D_{i,n}^+], \forall  i \in [k] \}$. Therefore $E_1^n \subseteq E_2^n$. The lemma's hypotheses ensure $1-\alpha \leq \liminf_{n\to\infty} \Pr\{E_1^n\}$. Hence  $1-\alpha \leq \liminf_{n\to\infty}\Pr\{E_2^n\}$.
    The event $E_2^n$ implies that no solution satisfying $D_{i,n}^+ = 0$ is optimal, yielding $i^c \in I_n$. Hence $\Pr\{E_2^n\} \leq \Pr\{i^c \in I_n\}$.
    Taking limits, we obtain $1-\alpha \leq \liminf_{n\to\infty} \Pr\{i^c \in I_n\}$.
    
\end{proof}

\section{Proofs for \Cref{sec:as}}

We first establish properties of the optimization problem in \eqref{eq:R}. First, $-\log R(\vect{0})$ is equal to the optimal value of
\begin{equation}\label{eq:pf-opt}
    \min_{\substack{\vect{w}_p, \, \bmu_p \\ p \in [m]}}  \;  -\sum\nolimits_{p=1}^m \sum\nolimits_{j=1}^{n_p} \log(n_pw_{pj})  \quad 
    \text{s.t.} \quad  \sum\nolimits_{j=1}^{n_p} \vect{Y}_{pj} w_{pj} = \bmu_p, \;\sum\nolimits_{j=1}^{n_p} w_{pj} = 1, \; p\in [m], \; 
    \sum\nolimits_{p=1}^{m} \bmu_p = \vect{0}.
\end{equation}
In \eqref{eq:pf-opt}, as opposed to \eqref{eq:R}, we introduce slack variables $\bmu_p \in \mathbb{R}^q$, and drop the nonnegativity constraints on $\vect{w}_p$  because they are implicitly imposed in the objective function.

Let us introduce some notation. 
Let $\{\vect{w}^*_{p}\}_{p \in [m]}$ and $\{\bmu_p^*\}_{p \in [m]}$ be minimizers of \eqref{eq:pf-opt}, and let $\lambst$ be the Lagrangian multiplier associated with the constraint $\sum\nolimits_{p=1}^m \bmu_p = \vect{0}$.
Let $\overline{\vect{Y}}_{p} \triangleq (1/n_p)\sum\nolimits_{j=1}^{n_p} \vect{Y}_{pj}$ be the sample mean of the observations of each input distribution $p \in [m]$. In addition, we define $v_{pj} \triangleq \big( \lambst/n_p \big) ^\top(\vect{Y}_{pj} - \bmu_p^*)$, $ \matr{S}_p \triangleq (1/n_p) \sum\nolimits_{j=1}^{n_p} (\vect{Y}_{pj} - \bmu_p^*)(\vect{Y}_{pj} - \bmu_p^*)^\top$ and $\vect{\nu}_p \triangleq (1/n_p) \sum\nolimits_{j=1}^{n_p} \frac{(\vect{Y}_{pj} - \bmu_p^*) v_{pj}^2}{1 + v_{pj}}$.
To compute $-\log R(\vect{0})$, we use the following proposition that characterizes minimizers of \eqref{eq:pf-opt}, along with the convergence rate of the Lagrangian multipliers.
\begin{proposition}\label{prop:opt_solution}
    Each minimizer  $\bw^* \triangleq \{\vect{w}^*_{p}\}_{p \in [m]}$ and $\bmu^* \triangleq \{\bmu_p^*\}_{p \in [m]}$
    of \eqref{eq:pf-opt} satisfies $w_{pj}^* = (1/n_p) \cdot 1/(1 + v_{pj})$. 
    In addition,
    $\bmu_p^* \xrightarrow{P} \vect{0}$, $\norm{\lambst}/n = O_p(n^{-1/2})$,
    $\lambst = \big( \sum\nolimits_{p=1}^{m} (\matr{S}_p/n_p) \big)^{-1} \sum\nolimits_{p=1}^{m} (\overline{\vect{Y}}_{p} + \vect{\nu}_p)$,
    and $\vect{\nu}_p = o_p(n^{-1/2})$
    as $n \to\infty$.
\end{proposition}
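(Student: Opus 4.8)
The plan is to read \eqref{eq:pf-opt} as a convex program in $(\bw,\{\bmu_p\})$ and extract its optimality conditions, then run an Owen-type self-bounding argument to get the stated rates. First I would record that, under the hypotheses, with probability tending to one the origin lies in the interior of the Minkowski sum $\sum_p\mathrm{conv}\{\vect{Y}_{pj}\}_{j}$ (this uses $\matr{V}_p\succ0$ and $\E[\vect{Y}_p]=\vect{0}$), so \eqref{eq:pf-opt} is strictly feasible; since its objective is strictly convex in $\bw$ and diverges as any $w_{pj}\to0^{+}$, a minimizer exists and the KKT conditions are necessary and sufficient. Introducing multipliers $\vect{\gamma}_p$ for $\sum_j\vect{Y}_{pj}w_{pj}=\bmu_p$, $\theta_p$ for $\sum_j w_{pj}=1$, and $\lambst$ for $\sum_p\bmu_p=\vect{0}$ (with the sign chosen to match the statement), stationarity in $\bmu_p$ forces $\vect{\gamma}_p=\lambst$ for every $p$, so all per-source multipliers coincide, and stationarity in $w_{pj}$ gives $1/w_{pj}=\lambst^\top\vect{Y}_{pj}+\theta_p$. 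Multiplying by $w_{pj}$ and summing over $j$ solves $\theta_p=n_p-\lambst^\top\bmu_p^*$, which yields $w_{pj}^*=(1/n_p)/(1+v_{pj})$.

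Next I would substitute $w_{pj}^*$ into $\bmu_p^*=\sum_j\vect{Y}_{pj}w_{pj}^*$ together with $\sum_j w_{pj}^*=1$ to obtain the per-source estimating equation $\sum_j(\vect{Y}_{pj}-\bmu_p^*)/(1+v_{pj})=\vect{0}$. Using the algebraic identity $1/(1+v)=1-v+v^2/(1+v)$ and the definitions of $\matr{S}_p$ and $\vect{\nu}_p$, this becomes $\bmu_p^*=\overline{\vect{Y}}_p+\vect{\nu}_p-(\matr{S}_p/n_p)\lambst$. Summing over $p$ and imposing the coupling constraint $\sum_p\bmu_p^*=\vect{0}$ gives $\matr{M}\lambst=\sum_p(\overline{\vect{Y}}_p+\vect{\nu}_p)$ with $\matr{M}\triangleq\sum_p\matr{S}_p/n_p$; since $\matr{S}_p$ is bounded below by the sample covariance (parallel-axis), which converges to $\matr{V}_p\succ0$, and $n_p/n\to\beta_p$ by \Cref{assump:data_size}, $\matr{M}$ is invertible with $\lambda_{\min}(\matr{M})\gtrsim n^{-1}$ with probability tending to one, delivering the closed form $\lambst=\matr{M}^{-1}\sum_p(\overline{\vect{Y}}_p+\vect{\nu}_p)$. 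This is the algebraic content of the proposition and is essentially bookkeeping.

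The quantitative claims need two standard ingredients: Owen's maximal inequality $\max_{p,j}\norm{\vect{Y}_{pj}-\bmu_p^*}=o_p(n^{1/2})$, which follows from $\E\norm{\vect{Y}_p}^2<\infty$, and the central limit bound $\norm{\sum_p\overline{\vect{Y}}_p}=O_p(n^{-1/2})$. The crux is a self-bounding inequality for $\lambst$. Dividing each per-source estimating equation by $n_p$, summing over $p$, and using $\sum_p\bmu_p^*=\vect{0}$ to telescope the $\bmu_p^*$ terms gives $\sum_p\overline{\vect{Y}}_p=\sum_p n_p^{-1}\sum_j(\vect{Y}_{pj}-\bmu_p^*)\,v_{pj}/(1+v_{pj})$; dotting with $\lambst$ and using $\lambst^\top(\vect{Y}_{pj}-\bmu_p^*)=n_p v_{pj}$ collapses the right-hand side to $\sum_{p,j} v_{pj}^2/(1+v_{pj})=\lambst^\top\sum_p\overline{\vect{Y}}_p$. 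Lower-bounding the left side by $(1+V^*)^{-1}\lambst^\top\matr{M}\lambst\ge(1+V^*)^{-1}\lambda_{\min}(\matr{M})\norm{\lambst}^2$, where $V^*\triangleq\max_{p,j}|v_{pj}|\le(\norm{\lambst}/\min_p n_p)\max_{p,j}\norm{\vect{Y}_{pj}-\bmu_p^*}$, and upper-bounding the right side by $\norm{\lambst}\,\norm{\sum_p\overline{\vect{Y}}_p}$, I obtain an inequality of the form $\norm{\lambst}\le a+b\norm{\lambst}$ with $a=O_p(n^{1/2})$ and $b=o_p(1)$ (the maximal inequality and CLT bound make $b$ vanish without reintroducing $\norm{\lambst}$). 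Hence $\norm{\lambst}=O_p(n^{1/2})$, i.e.\ $\norm{\lambst}/n=O_p(n^{-1/2})$, and the same estimate forces $V^*=o_p(1)$, so $1+v_{pj}\ge1/2$ uniformly with high probability.

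Finally, with $1+v_{pj}$ bounded below, $\norm{\vect{\nu}_p}\le 2\,\norm{\lambst/n_p}^2\, n_p^{-1}\sum_j\norm{\vect{Y}_{pj}-\bmu_p^*}^3$; pulling out $\max_{p,j}\norm{\vect{Y}_{pj}-\bmu_p^*}=o_p(n^{1/2})$ and using $n_p^{-1}\sum_j\norm{\vect{Y}_{pj}-\bmu_p^*}^2=O_p(1)$ gives $\norm{\vect{\nu}_p}=O_p(n^{-1})\cdot o_p(n^{1/2})=o_p(n^{-1/2})$. Substituting these rates into $\bmu_p^*=\overline{\vect{Y}}_p+\vect{\nu}_p-(\matr{S}_p/n_p)\lambst$, each term tends to $\vect{0}$: $\overline{\vect{Y}}_p\xrightarrow{P}\vect{0}$ by the law of large numbers (the data are centered per source in the reduction of $R(\bar\bmu)$ to $R(\vect{0})$), $\vect{\nu}_p=o_p(n^{-1/2})$, and $\norm{(\matr{S}_p/n_p)\lambst}=O_p(n^{-1/2})$, so $\bmu_p^*\xrightarrow{P}\vect{0}$. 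I expect the self-bounding step to be the main obstacle: unlike the single-sample case, the per-source means $\bmu_p^*$ are themselves decision variables, so no source can be controlled in isolation; it is precisely the telescoping that exploits the shared multiplier $\lambst$ together with the coupling constraint $\sum_p\bmu_p^*=\vect{0}$ that closes the bound, and it must be coupled with the maximal inequality, which is available only under second moments.
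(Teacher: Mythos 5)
Your argument follows essentially the same route as the paper's proof: the KKT conditions force the per-source multipliers to coincide with $\lambst$ and yield $w_{pj}^*=(1/n_p)/(1+v_{pj})$; the estimating equation is expanded with $1/(1+v)=1-v+v^2/(1+v)$; the coupling constraint $\sum_p\bmu_p^*=\vect{0}$ is used to telescope into a self-bounding inequality for $\norm{\lambst}$; and Owen's maximal and moment lemmas supply the rates. Your self-bounding step (pairing the summed estimating equation with $\lambst$ and lower-bounding $\sum_{p,j}v_{pj}^2/(1+v_{pj})$ by $(1+V^*)^{-1}\lambst^\top\matr{M}\lambst$ with $\matr{M}=\sum_p\matr{S}_p/n_p$ bounded below via the parallel-axis identity) is a clean variant of the paper's projection onto the unit direction $\vect{\theta}=\lambst/\norm{\lambst}$, and if anything avoids the paper's need to control the cross term $\bmu_p^*(\overline{\vect{Y}}_p)^\top$ at that stage.

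The one genuine issue is the order of your final estimates. You obtain $\bmu_p^*\xrightarrow{P}\vect{0}$ \emph{last}, from the decomposition $\bmu_p^*=\overline{\vect{Y}}_p+\vect{\nu}_p-(\matr{S}_p/n_p)\lambst$, but the two bounds you invoke there --- $n_p^{-1}\sum_j\norm{\vect{Y}_{pj}-\bmu_p^*}^2=O_p(1)$ inside the $\vect{\nu}_p$ estimate, and $\norm{(\matr{S}_p/n_p)\lambst}=O_p(n^{-1/2})$ --- both presuppose $\norm{\bmu_p^*}=O_p(1)$, which at that point you have not shown. A priori you only know $\norm{\bmu_p^*}\le\max_j\norm{\vect{Y}_{pj}}=o(n^{1/2})$ (a convex combination of the data, plus the maximal inequality), which gives only $\norm{\matr{S}_p}=o_p(n)$ and is not enough to close the loop. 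The fix is the paper's ordering: once $V^*=\max_{p,j}|v_{pj}|=o_p(1)$ is in hand, write $\overline{\vect{Y}}_p-\bmu_p^*=n_p^{-1}\sum_j\big(v_{pj}/(1+v_{pj})\big)\vect{Y}_{pj}$ and bound its norm by $\big(V^*/(1+\min_j v_{pj})\big)\,n_p^{-1}\sum_j\norm{\vect{Y}_{pj}}=o_p(1)O_p(1)$, which establishes consistency of $\bmu_p^*$ \emph{before} any moment bound centered at $\bmu_p^*$ is needed; the $\vect{\nu}_p$ and $(\matr{S}_p/n_p)\lambst$ estimates then go through exactly as you state them. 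With that reordering the proof is complete.
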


\begin{proof}{Proof}
Let $\vect{w}^*\triangleq \{w^*_{pj}\}_{j \in [n_p], p \in [m]}$ and $\bmu^* \triangleq \{\bmu_p^*\}_{p \in [m]}$ be the minimizers of \eqref{eq:pf-opt}, and $\{\lambst_{1p}, \lambda_{2p}^*\}_{p \in [m]}$ and $\lambst$ be the corresponding Lagrangian multipliers.
This proof consists of four steps. In Step 1, we obtain an expression for $\vect{w}^*$. In Step 2, we establish the convergence rate of the Lagrangian multiplier $\lambst$, while in Step 3, we determine the convergence rate of $\bmu_p^*$ to the true mean of each input distribution $p$. Lastly, in Step 4, we obtain an expression for $\lambst$.

\emph{Step 1}: Since the problem in \eqref{eq:pf-opt} has linear constraints and the objective function is convex, the Karush--Kuhn--Tucker (KKT) conditions are necessary and sufficient optimality conditions. The KKT conditions  are given by:

\begin{scriptsize}
\begin{equation}\label{eq:kkt}
     - \frac{1}{w^*_{pj}} + (\lambst_{1p})^\top\vect{Y}_{pj} + \lambda^*_{2p} = 0,\,   j \in [n_p]; \,
     -\lambst_{1p} + \lambst  = \vect{0}, \,
     \sum\nolimits_{j=1}^{n_p} \vect{Y}_{pj} w^*_{pj} = \bmu^*_p, \,
     \sum\nolimits_{j=1}^{n_p} w^*_{pj}  = 1, \, p \in [m];
     \sum\nolimits_{p=1}^{m} \bmu^*_p  = \vect{0},
\end{equation}%
\end{scriptsize}%
where $(\lambda^*_{1p},\lambda^*_{2p})$ are the  multipliers corresponding to the first two constraints of~\eqref{eq:pf-opt}, and $\lambst$ is that of the last constraint.

The values for $\vect{w}^*, \bmu^*$ and $\lambst$ can be obtained by solving \eqref{eq:kkt}.
From the second equation in \eqref{eq:kkt}, $\lambst_{1p} = \lambst$ and plugging this into the first KKT condition and 
multiplying both sides by $w^*_{pj}$ gives
\begin{equation} \label{eq:kkt.21}
    \lambda_{2p}^*w^*_{pj} = 1 - (\lambst)^\top\vect{Y}_{pj}w^*_{pj}, 
    \quad p \in [m],j \in [n_p].
\end{equation}
Summing~\eqref{eq:kkt.21} over all $j \in [n_p]$ and using the fourth KKT equation, we obtain
$\lambda_{2p}^* = n_p - (\lambst)^\top\bmu_p^*$, $p \in [m]$.
We plug $\lambst_{1p}$ and $\lambda_{2p}^*$ into the first KKT condition
to obtain
$w_{pj}^*  = 1/\big(n_p + (\lambst)^\top(\vect{Y}_{pj} - \bmu_p^*)\big)
= (1/n_p) \cdot 1/(1 + v_{pj}).$

\emph{Step 2:} We characterize the convergence rate of $\norm{\lambst}/n$ as $n \to \infty$.
Let $\Theta$ be the set of all unit vectors in $\mathbb{R}^q$.
Let $\lambst = \norm{\lambst} \vect{\theta}$ for some
$\vect{\theta} \in \Theta$.
Note that we have $w_{pj}^* = (1/n_p)\cdot1/(1+v_{pj})$ and $1/(1+v_{pj}) = 1 - v_{pj}/(1+v_{pj})$. 
Inserting $w_{pj}^*$ into the third KKT condition and using the definition of $v_{pj}$, we get
\begin{equation}\label{eq:g_lambda_zero}
    \frac{1}{n_p} \sum\nolimits_{j=1}^{n_p} \frac{\vect{Y}_{pj} - \bmu_p^*}{1 + (\frac{\lambst}{n_p})^\top(\vect{Y}_{pj} - \bmu_p^*)} = \vect{0}.
\end{equation}
Rewriting \eqref{eq:g_lambda_zero} yields
\begin{equation*}
    \overline{\vect{Y}}_{p} - \bmu_p^* = \frac{1}{n_p} \sum\nolimits_{j=1}^{n_p} \frac{(\vect{Y}_{pj} - \bmu_p^*)(\vect{Y}_{pj} - \bmu_p^*)^\top}{1 + v_{pj}} \bigg( \frac{\lambst}{n_p} \bigg).
\end{equation*} 
Taking the inner product with $\vect{\theta}$ and defining $\widetilde{\matr{S}}_p \triangleq (1/n_p) \sum\nolimits_{j=1}^{n_p} \frac{(\vect{Y}_{pj} - \bmu_p^*)(\vect{Y}_{pj} - \bmu_p^*)^\top}{1 + v_{pj}}$, we have
\begin{equation}\label{eq:theta_both_sides}
    \vect{\theta}^\top(\overline{\vect{Y}}_{p} - \bmu_p^*) = \frac{\norm{\lambst}}{n_p} \vect{\theta}^\top \widetilde{\matr{S}}_p \vect{\theta}.
\end{equation}

Let $K^* \triangleq \max_{p \in [m], j \in [n_p]}\set{\norm{\vect{Y}_{pj} - \bmu_p^*}}$. Then,
\begin{equation*}
     \frac{\norm{\lambst}}{n_p} \vect{\theta}^\top \matr{S}_p \vect{\theta} \leq  \frac{\norm{\lambst}}{n_p} \vect{\theta}^\top \widetilde{\matr{S}}_p \vect{\theta} (1 + \max_{j \in [n_p]}|v_{pj}|)
     \leq \frac{\norm{\lambst}}{n_p} \vect{\theta}^\top \widetilde{\matr{S}}_p \vect{\theta} \bigg(1 + \frac{\norm{\lambst}}{n_p} K^*\bigg)
     = \vect{\theta}^\top(\overline{\vect{Y}}_{p} - \bmu_p^*) \bigg(1 + \frac{\norm{\lambst}}{n_p} K^*\bigg).
\end{equation*}
We have $(\norm{\lambst}/n_p) \, \vect{\theta}^\top \matr{S}_p \vect{\theta} \geq (\norm{\lambst}/\overline{c}n) \, \vect{\theta}^\top \matr{S}_p \vect{\theta}$ for sufficiently large $n$. From $n_p \geq \underline{c}n$, we also have
$
    \frac{\norm{\lambst}}{\overline{c}n} \vect{\theta}^\top \matr{S}_p \vect{\theta} \leq \vect{\theta}^\top(\overline{\vect{Y}}_{p} - \bmu_p^*) \Big(1 + \frac{\norm{\lambst}}{\underline{c}n} K^*\Big),
$ 
which can be rewritten as
$
    \frac{\norm{\lambst}}{\overline{c}n} \bigg[ \vect{\theta}^\top \matr{S}_p \vect{\theta} - \frac{\overline{c}}{\underline{c}} K^* \vect{\theta}^\top(\overline{\vect{Y}}_{p} - \bmu_p^*) \bigg]  \leq \vect{\theta}^\top(\overline{\vect{Y}}_{p} - \bmu_p^*).
$ 
Summing over $p \in [m]$ and using the last equation in \eqref{eq:kkt} yields
\begin{equation}
\label{eq:lambda-order-1}
    \frac{\norm{\lambst}}{\overline{c}n} \Bigg[ \vect{\theta}^\top \bigg(\sum\nolimits_{p=1}^{m} \matr{S}_p\bigg) \vect{\theta} - \frac{\overline{c}}{\underline{c}} K^* \vect{\theta}^\top \bigg(\sum\nolimits_{p=1}^{m} \overline{\vect{Y}}_{p}\bigg) \Bigg]  \leq \vect{\theta}^\top \bigg(\sum\nolimits_{p=1}^{m} \overline{\vect{Y}}_{p}\bigg).
\end{equation}

Observe that, for each $p \in [m]$,
$
    \vect{\theta}^\top \matr{S}_p \vect{\theta} = \vect{\theta}^\top \big(\widehat{\matr{V}}_p - 2 \bmu_p^* (\overline{\vect{Y}}_{p})^\top + \bmu_p^* {\bmu_p^*}^\top \big) \vect{\theta} \geq \vect{\theta}^\top \big(\widehat{\matr{V}}_p - 2 \bmu_p^* (\overline{\vect{Y}}_{p})^\top \big) \vect{\theta}
$, 
where $\widehat{\matr{V}}_p \triangleq (1/n_p) \sum\nolimits_{j=1}^{n_p} \vect{Y}_{pj} \vect{Y}_{pj}^\top$.
Therefore, 
\begin{equation*}
    \frac{\norm{\lambst}}{\overline{c}n} \Bigg[ \vect{\theta}^\top \bigg(\sum\nolimits_{p=1}^{m}\widehat{\matr{V}}_p \bigg) \vect{\theta} - 2 \vect{\theta}^\top \bigg( \sum\nolimits_{p=1}^{m} \bmu_p^* (\overline{\vect{Y}}_{p})^\top \bigg) \vect{\theta} - \frac{\overline{c}}{\underline{c}} K^* \vect{\theta}^\top \bigg(\sum\nolimits_{p=1}^{m} \overline{\vect{Y}}_{p} \bigg) \Bigg] \leq \vect{\theta}^\top \bigg(\sum\nolimits_{p=1}^{m} \overline{\vect{Y}}_{p}\bigg).
\end{equation*}

By the law of large numbers and $\E[\vect{Y}_p] = \vect{0}$, we get
$\widehat{\matr{V}}_p \xrightarrow{\text{a.s.}} \matr{V}_p$. Also, since $m$ is finite and $\norm{\vect{\theta}}=1$,
$\sum\nolimits_{p=1}^{m}\tau^p_q + o_p(1) \leq \vect{\theta}^\top  \big(\sum\nolimits_{p=1}^{m} \widehat{\matr{V}}_p \big) \vect{\theta} \leq \sum\nolimits_{p=1}^{m} \tau^p_1 + o_p(1)$,
where $\tau^p_1$ and $\tau^p_q$ are the largest and smallest eigenvalues of $\matr{V}_p$. We also have $K^* = o(n^{1/2})$ almost surely as a result of Lemma~11.2 in \cite{owenEL}. In addition, a CLT applied to each $\overline{\vect{Y}}_{p}$ implies that $\vect{\theta}^\top \big(\sum\nolimits_{p=1}^{m} \overline{\vect{Y}}_{p} \big) = O_p(n_p^{-1/2}) = O_p(n^{-1/2})$. Also, $\norm{\bmu_p^*} 
\leq \max_{p\in [m],j\in [n_p]} \norm{\vect{Y}_{pj}} = o(n^{1/2})$ almost surely by Lemma~11.2 in \cite{owenEL}.
Thus,
\begin{equation*}
    \frac{\norm{\lambst}}{\overline{c}n} \bigg(\vect{\theta}^\top \Big(\sum\nolimits_{p=1}^{m} \widehat{\matr{V}}_p\Big) \vect{\theta} - o(n^{1/2}) O_p(n^{-1/2}) - o(n^{1/2}) O_p(n^{-1/2}) \bigg)  = O_p(n^{-1/2}),
\end{equation*}
and $\norm{\lambst}/n = O_p(n^{-1/2})$ follows.

\emph{Step 3:} We  show that $\bmu_p^* \xrightarrow{P} \vect{0}$ as $n\to\infty$. By the law of large numbers, $\overline{\vect{Y}}_{p} \xrightarrow{\text{a.s.}} \E[\vect{Y}_p] = \vect{0}$. Hence, it suffices to show that $\norm{\overline{\vect{Y}}_{p} - \bmu_p^*} \xrightarrow{P} \vect{0}$ as $n\to\infty$. 
Observe that $\overline{\vect{Y}}_{p} - \bmu_p^* = \sum\nolimits_{j=1}^{n_p}(1/n_p - w_{pj}^*)\vect{Y}_{pj} = (1/n_p) \sum\nolimits_{j=1}^{n_p} \big(v_{pj}/(1 + v_{pj})\big) \vect{Y}_{pj}$. We can bound $\norm{\overline{\vect{Y}}_{p} - \bmu_p^*} $ from above with:
\begin{footnotesize}
\begin{equation*}
    \frac{1}{n_p} \sum_{j=1}^{n_p} \frac{|v_{pj}|}{1 + v_{pj}} \norm{\vect{Y}_{pj}} 
    \leq \frac{1}{n_p} \sum_{j=1}^{n_p} \frac{\max_{j \in [n_p]}|v_{pj}|}{\min_{j \in [n_p]} 1 + v_{pj}} \norm{\vect{Y}_{pj}} 
      = \frac{\max_{j \in [n_p]}|v_{pj}|}{1 + \min_{j \in [n_p]} v_{pj}} \cdot \frac{1}{n_p} \sum_{j=1}^{n_p} \norm{\vect{Y}_{pj}} 
    = \frac{o_p(1) O_p(1)}{1 + o_p(1)}  = o_p(1).
\end{equation*}
\end{footnotesize}%

\emph{Step 4:} We derive $\lambst$ and show that $\vect{\nu}_p = o_p(n^{-1/2})$. Using $(1 + v_{pj})^{-1} = 1 - v_{pj} + v_{pj}^2/(1 + v_{pj})$, we obtain
\begin{equation*}
    \frac{1}{n_p} \sum\nolimits_{j=1}^{n_p} \frac{\vect{Y}_{pj} - \bmu_p^*}{1 + v_{pj}} 
    = \overline{\vect{Y}}_{p} - \bmu_p^* - \frac{1}{n_p} \sum\nolimits_{j=1}^{n_p} (\vect{Y}_{pj} - \bmu_p^*) (\vect{Y}_{pj} - \bmu_p^*)^\top \bigg(\frac{\lambst}{n_p}\bigg) + \frac{1}{n_p} \sum\nolimits_{j=1}^{n_p} \frac{(\vect{Y}_{pj} - \bmu_p^*)v_{pj}^2}{1 + v_{pj}}.
\end{equation*}
Combined with \eqref{eq:g_lambda_zero},
\begin{equation}\label{eq:g_lambda_zero_2}
    \overline{\vect{Y}}_{p} - \bmu_p^* = \matr{S}_p \bigg(\frac{\lambst}{n_p}\bigg) - \frac{1}{n_p} \sum\nolimits_{j=1}^{n_p} \frac{(\vect{Y}_{pj} - \bmu_p^*) v_{pj}^2}{1 + v_{pj}} = \bigg( \frac{\matr{S}_p}{n_p} \bigg) \lambst - \vect{\nu}_p.
\end{equation}
Then, the right-most term in \eqref{eq:g_lambda_zero_2} has a norm that can be bounded as
{
\setlength{\abovedisplayskip}{5pt}
\setlength{\belowdisplayskip}{0pt}
\begin{align*} 
    \norm{\vect{\nu}_p}
    & \leq \frac{1}{n_p} \sum\nolimits_{j=1}^{n_p} v_{pj}^2\bignorm{ \frac{(\vect{Y}_{pj} - \bmu_p^*) }{1 + v_{pj}} } 
    \leq \bigg(\frac{\norm{\lambst}}{n_p}\bigg)^2 \frac{1}{n_p} \sum\nolimits_{j=1}^{n_p} \norm{\vect{Y}_{pj} - \bmu_p^*}^3 \bigg|\frac{1}{1 + v_{pj}}\bigg| \\
    & \leq \bigg(\frac{\norm{\lambst}}{n_p}\bigg)^2 \frac{1}{\min_{j\in [n_p]} 1 + v_{pj}} \cdot \frac{1}{n_p} \sum\nolimits_{j=1}^{n_p} \norm{\vect{Y}_{pj} - \bmu_p^*}^3 
    = O_p(n^{-1}) \frac{1}{1 + o_p(1)} o(n^{1/2}) = o_p(n^{-1/2}).
\end{align*}
}%
Summing over $p \in [m]$ and rearranging terms,
\begin{equation*}
    \lambst = \bigg( \sum\nolimits_{p=1}^{m} \frac{\matr{S}_p}{n_p} \bigg)^{-1} \sum\nolimits_{p=1}^{m} \big(\overline{\vect{Y}}_{p} + \vect{\nu}_p\big).\hfill 
\end{equation*}
\end{proof}

\begin{proof}{Proof of \Cref{thm:multisample-el}}
This proof is inspired by that of Theorem~4 in \cite{lam2017optimization}.  
We have
\begin{equation*}
    R(\bar{\bmu})  = \max\Big\{\prod\nolimits_{p=1}^m\prod\nolimits_{j=1}^{n_p} n_p {w_{pj}} \colon \sum\nolimits_{p=1}^m \sum\nolimits_{j=1}^{n_p} (\vect{Y}_{pj} - \E[\vect{Y}_p]) w_{pj} = \vect{0}, 
    \quad \vect{w}_p \in \Delta_{n_p}, \quad  p \in [m]  \Big\},
\end{equation*}
where this expression is the log likelihood ratio defined for the translated observations.
Then, without loss of generality, we assume $\E[\vect{Y}_p] = \vect{0}$, so that $R(\bar{\bmu}) = R(\vect{0})$.

From \Cref{prop:opt_solution}, we have $-2\log{R(\vect{0})}  = -2 \sum\nolimits_{p=1}^{m} \sum\nolimits_{j=1}^{n_p} \log (n_p w_{pj}^*) 
    = 2 \sum\nolimits_{p=1}^{m} \sum\nolimits_{j=1}^{n_p} \log (1 + v_{pj})$.
By the mean value theorem and a second-order Taylor's expansion, for each $p \in [m]$ and $j \in [n_p]$, we have
$\log (1 + v_{pj}) = v_{pj} - (1/2) v_{pj}^2 + \phi_{pj}$,
where $\phi_{pj} \triangleq \frac{v_{pj}^3}{3(1 + \xi_{pj}v_{pj})^3}$ for some $\xi_{pj} \in (0,1)$. Then, using the definition of $v_{pj}$ and $\lambst$, 
{
\setlength{\abovedisplayskip}{5pt}
\setlength{\belowdisplayskip}{0pt}
\begin{align}
    -2\log{R(\vect{0})} 
    & =  2 (\lambst)^\top \bigg(\sum_{p=1}^{m} \overline{\vect{Y}}_{p}\bigg) - (\lambst)^\top \bigg(\sum_{p=1}^{m} \frac{\matr{S}_p}{n_p}\bigg) \lambst + 2 \sum_{p=1}^{m} \sum_{j=1}^{n_p} \phi_{pj} \notag\\
    & = \bigg(\sum_{p=1}^{m} \overline{\vect{Y}}_{p}\bigg)^\top \bigg( \sum_{p=1}^{m} \frac{\matr{S}_p}{n_p} \bigg)^{-1} \bigg(\sum_{p=1}^{m} \overline{\vect{Y}}_{p} \bigg) - \bigg(\sum_{p=1}^{m} \vect{\nu}_p\bigg)^\top  \bigg(\sum_{p=1}^{m} \frac{\matr{S}_p}{n_p}\bigg)^{-1} \bigg( \sum_{p=1}^{m} \vect{\nu}_p \bigg) + 2 \sum_{p=1}^{m} \sum_{j=1}^{n_p} \phi_{pj}. \label{eq:converge_equation_1}
\end{align}
}

Next, we show that the first term in \eqref{eq:converge_equation_1} converges to a $\chi^2_{q}$ random variable in distribution, and the remaining terms are $o_p(1)$.  To establish this for the first term in \eqref{eq:converge_equation_1}, it suffices to show that
\begin{equation}\label{eq:converge_equation_2}
    \bigg( \sum\nolimits_{p=1}^{m} \frac{\matr{S}_p}{n_p} \bigg)^{-1/2} \sum\nolimits_{p=1}^{m} \overline{\vect{Y}}_{p} \, \xrightarrow{D} \, \N(\vect{0}, \matr{I})
    \quad \text{as} \quad n\to\infty.
\end{equation}
We show this by verifying the conditions needed to apply Theorem~8.6.1 in \cite{Borovkov2013},
which presents a multivariate triangular array version of the Lindeberg--Feller CLT. Let $N \triangleq \sum\nolimits_{p=1}^m n_p$, $\overline{\matr{V}}_n \triangleq \sum\nolimits_{p=1}^{m} (1/n_p)\matr{V}_p$ and
\[ (\vect{x}_{N1}, \dots , \vect{x}_{NN}) \triangleq (\overline{\matr{V}}_n)^{-1/2}(\vect{Y}_{11}/n_1, \dots, \vect{Y}_{1n_1}/n_1, \dots, \vect{Y}_{m1}/n_m, \dots,  \vect{Y}_{mn_m}/n_m). \]
We have
$    \vect{z}_N = \sum\nolimits_{i=1}^N \vect{x}_{Ni} 
    = (\overline{\matr{V}}_n)^{-1/2} \sum\nolimits_{p=1}^{m} \overline{\vect{Y}}_{p}$ and $\matr{V}_N 
    = \matr{I}.$
From \Cref{assump:data_size} we can derive that there exists constants $0 < \underline{c}, \overline{c} < \infty$ such that $\underline{c} < n_p/n < \overline{c}$ holds for $p\in [m]$ and for all sufficiently large $n \in \mathbb{N}$. Therefore, 
using the fact that $N = n\cdot m$ and $\underline{c}N/m \leq n_p$ for all sufficiently large $n$, we verify the conditions: For each $\epsilon >0$,
{
\setlength{\abovedisplayskip}{5pt}
\setlength{\belowdisplayskip}{0pt}
\begin{align*}
    \sum\nolimits_{i=1}^N \E\Big[\norm{\vect{x}_{Ni}}^2 \one{\norm{\vect{x}_{Ni}} \geq \epsilon}\Big] & 
    = \sum\nolimits_{p=1}^{m} \sum\nolimits_{j=1}^{n_p} \E\Bigg[ \bignorm{\frac{(\overline{\matr{V}}_n)^{-1/2}\vect{Y}_{pj}}{n_p}}^2 \mathbf{1}\Big\{\big\|(\overline{\matr{V}}_n)^{-1/2}\vect{Y}_{pj}\big\| \geq \epsilon  n_p\Big\}\Bigg]
    \\ & \leq \sum\nolimits_{p=1}^{m} \frac{\big|\big|(\overline{\matr{V}}_n)^{-1/2}\big|\big|^2_F}{n_p} \E\Big[\norm{\vect{Y}_{p1}}^2 \mathbf{1}\Big\{\norm{\vect{Y}_{p1}} \geq \epsilon \big\|(\overline{\matr{V}}_n)^{-1/2}\big\|^{-1}_F n_p\Big\}\Big] \\ 
    & \leq \sum\nolimits_{p=1}^{m} C_1 \E\Big[\norm{\vect{Y}_{p1}}^2 \mathbf{1}\big\{\norm{\vect{Y}_{p1}} \geq \epsilon C_2 N^{1/2}\big\}\Big] \to 0.
\end{align*}
}%
Here,  $\norm{\cdot}_F$ is the Frobenius norm, so the first inequality comes from the inequality $\norm{(\overline{\matr{V}}_n)^{-1/2}\vect{Y}_{pj}} \leq \norm{(\overline{\matr{V}}_n)^{-1/2}}_F\norm{\vect{Y}_{pj}}$.
The second inequality results from $\overline{\matr{V}}_n = O(1/n)$, so there exist constants $C_1$ and $C_2$ such that $(1/n)\norm{(\overline{\matr{V}}_n)^{-1/2}}^2_F \leq C_1$ and $\norm{(\overline{\matr{V}}_n)^{-1/2}}^{-1}_F n_p \leq C_2 (mn)^{1/2} = C_2 N^{1/2}$ for all large enough $N$.
Finally, the convergence to zero is justified by the dominated convergence theorem since $\E\big[\norm{\vect{Y}_{p1}}^2 \one{\norm{\vect{Y}_{p1}} \geq \epsilon C_2 N^{1/2}}\big] \leq \E[\norm{\vect{Y}_{p1}}^2] < \infty$ and $\one{\norm{\vect{Y}_{p1}} \geq \epsilon C_2 N^{1/2}} \xrightarrow{P} 0$ as $N \to\infty$.
Having verified the Lindeberg--Feller conditions, and since $\matr{S}_p \xrightarrow{P} \matr{V}_p$ as $n\to\infty$ from \Cref{prop:opt_solution},
we obtain \eqref{eq:converge_equation_2}.
Consequently, the first term in \eqref{eq:converge_equation_1} converges in distribution to $\chi^2_q$.

For the second term in \eqref{eq:converge_equation_1}, first note that $(\sum\nolimits_{p=1}^{m} \matr{S}_p/n_p)^{-1} = O_p(n)$
as $n \to \infty$ because
$\big(\sum\nolimits_{p=1}^{m} \matr{S}_p/n_p\big)^{-1} 
= n \big(\sum\nolimits_{p=1}^{m} \matr{S}_p n/n_p\big)^{-1},$
 $\matr{S}_p \xrightarrow{P} \matr{V}_p$ according to \Cref{prop:opt_solution}, and $n_p/n \to \beta_p$. Hence 
\begin{equation*}
     \bigg(\sum\nolimits_{p=1}^{m} \vect{\nu}_p\bigg)^\top  \bigg(\sum\nolimits_{p=1}^{m} \frac{\matr{S}_p}{n_p}\bigg)^{-1}  \bigg(\sum\nolimits_{p=1}^{m} \vect{\nu}_p\bigg)= o_p\big(n^{-1/2}\big)O_p(n)o_p\big(n^{-1/2}\big) = o_p(1).
\end{equation*}

To bound the last term in \eqref{eq:converge_equation_1}, we first note that $(1/n_p) \sum\nolimits_{j=1}^{n_p}\norm{\vect{Y}_{pj}-\bmu_p^*}^3 = o(n^{1/2})$ almost surely by Lemma~11.3 in \cite{owenEL}. Applying Lemma~11.2 in \cite{owenEL}, we obtain $\max_{j \in [n_p]} |v_{pj}| = \max_{j \in [n_p]} |(\lambst/n_p)^\top (\vect{Y}_{pj} - \bmu_p^*)| \leq  (\norm{\lambst}/n_p) \max_{j \in [n_p]} \norm{\vect{Y}_{pj} - \bmu_p^*} = O_p(n^{-1/2})o(n^{1/2}) = o_p(1)$. Also, the optimality of $w_{pj}^*$ yields $w_{pj}^*> 0$, which implies $1 + v_{pj} > 0$. This ensures $\min_{j \in [n_p]} |1 + v_{pj}| 
= 1 + \min_{j \in [n_p]} v_{pj}$ and $|\min_{j \in [n_p]} v_{pj}| \leq \max_{j \in [n_p]} |v_{pj}| = o_p(1)$. Hence, $\min_{p,j} \xi_{pj}v_{pj} = o_p(1)$ since $\xi_{pj}\in(0,1)$. In addition, from \Cref{prop:opt_solution}, $\|\lambst/(\underline{c}n)\|^3 = O_p(n^{-3/2})$. Using these results, we obtain the bounds
{
\setlength{\abovedisplayskip}{5pt}
\setlength{\belowdisplayskip}{0pt}
\begin{align*}
    \bigg| 2 \sum\nolimits_{p=1}^{m}\sum\nolimits_{j=1}^{n_p} \phi_{pj} \bigg| & \leq  \frac{2}{3} \sum\nolimits_{p=1}^{m}\sum\nolimits_{j=1}^{n_p} \bigg| \frac{v_{pj}^3}{(1 + \xi_{pj}v_{pj})^3} \bigg| \\
    & \leq \frac{2}{3}  \bignorm{\frac{\lambst}{\underline{c}n}}^3 \frac{1}{(1 + \min_{p,j} \xi_{pj}v_{pj})^3} \sum\nolimits_{p=1}^{m} n_p \cdot \frac{1}{n_p} \sum\nolimits_{j=1}^{n_p} \norm{\vect{Y}_{pj} - \bmu_p^*}^3  \\
    & = O_p(n^{-3/2}) \frac{1}{1 + o_p(1)} O(n) o(n^{1/2}) = o_p(1).
\end{align*}
}
Combining all pieces via Slutsky's theorem, $-2\log{R(\vect{0})}  \xrightarrow{D} \chi^2_q$. \hfill
\end{proof}

In \Cref{sec:as} we introduce $\vect{B}$ defined in \eqref{eq:generic_opt}. We present a proof of \Cref{lem:mean.CR.coverage} that characterizes the asymptotic coverage of $\vect{B}$ and other results that allow us to state \Cref{prop:upper_bound.cov.opt}, which characterizes an asymptotic bound on  $\vect{B}$.

\begin{proof}{Proof of \Cref{lem:mean.CR.coverage}}
    \Cref{thm:multisample-el} implies $\Pr\{\bar{\bmu} \in C_\alpha\} \to 1 - \alpha$ as $n \to \infty$.
    Defining $C_B \triangleq\set{\bmu \in \mathbb{R}^q: \bmu \leq \vect{B}}$, we have $C_\alpha \subset C_B$. Hence $\liminf_{n \to \infty} \Pr\{\bar{\bmu} \leq \vect{B}\} \geq 1 - \alpha$. \hfill   
\end{proof}

\begin{corollary}[{of \Cref{thm:multisample-el}}]
\label{cor:thm3}
    If the hypotheses of \Cref{thm:multisample-el} are satisfied, and $\vect{z}_n \to \vect{z} \in \mathbb{R}^q$, then
    $-2 \log R\big( \sum\nolimits_{p=1}^{m} \overline{\vect{Y}}_{p} - (\sum\nolimits_{p=1}^m \matr{V}_p/n_p)^{1/2} \vect{z}_n\big) \xrightarrow{P} \norm{\vect{z}}^2$.
\end{corollary}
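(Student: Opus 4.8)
The plan is to adapt the proofs of \Cref{prop:opt_solution,thm:multisample-el} almost verbatim, the only change being that the mean constraint $\sum_{p=1}^m\bmu_p=\vect{0}$ in \eqref{eq:pf-opt} is replaced by $\sum_{p=1}^m\bmu_p=\bmu$, where $\bmu\triangleq\sum_{p=1}^m\overline{\vect{Y}}_p-\overline{\matr{V}}_n^{1/2}\vect{z}_n$ is the argument of $R$ and $\overline{\matr{V}}_n\triangleq\sum_{p=1}^m\matr{V}_p/n_p$. As in \Cref{thm:multisample-el}, I would first center so that $\E[\vect{Y}_p]=\vect{0}$: translating each $\vect{Y}_{pj}$ shifts $\sum_p\overline{\vect{Y}}_p$ and the constraint target by the same vector, leaving $R$ and every $\matr{V}_p$ unchanged. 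The decisive observation is that, after centering, the offset $\vect{b}\triangleq\sum_{p=1}^m\overline{\vect{Y}}_p-\bmu=\overline{\matr{V}}_n^{1/2}\vect{z}_n$ is \emph{deterministic} and of order $O(n^{-1/2})$, since $\overline{\matr{V}}_n^{1/2}=O(n^{-1/2})$ and $\vect{z}_n\to\vect{z}$. This $\vect{b}$ will play precisely the role that the random term $\sum_p\overline{\vect{Y}}_p=O_p(n^{-1/2})$ played in \Cref{thm:multisample-el}.

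First I would rerun the KKT analysis of \Cref{prop:opt_solution}. The stationarity conditions are unaffected by the constraint shift, so $w_{pj}^*=(1/n_p)\,1/(1+v_{pj})$ and the per-$p$ identity \eqref{eq:g_lambda_zero_2} continue to hold; summing it over $p$ and using $\sum_p\bmu_p^*=\bmu$ now gives $\vect{b}=\matr{A}\lambst-\sum_p\vect{\nu}_p$, hence $\lambst=\matr{A}^{-1}\big(\vect{b}+\sum_p\vect{\nu}_p\big)$ with $\matr{A}\triangleq\sum_{p=1}^m\matr{S}_p/n_p$. Because $\vect{b}$ has the same order $O(n^{-1/2})$ that $\sum_p\overline{\vect{Y}}_p$ had, every rate estimate transfers: the summed inequality \eqref{eq:lambda-order-1} reads with $\vect{b}$ in place of $\sum_p\overline{\vect{Y}}_p$ and still yields $\norm{\lambst}/n=O_p(n^{-1/2})$; Step~3 of \Cref{prop:opt_solution} still gives $\bmu_p^*\xrightarrow{P}\vect{0}$, as it relies only on $\overline{\vect{Y}}_p\xrightarrow{\text{a.s.}}\vect{0}$ and $\max_j|v_{pj}|=o_p(1)$; and $\vect{\nu}_p=o_p(n^{-1/2})$.

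Next I would reproduce the Taylor expansion of \Cref{thm:multisample-el}. Using $\log(1+v_{pj})=v_{pj}-\tfrac12 v_{pj}^2+\phi_{pj}$ together with $\sum_{p,j}v_{pj}=\lambst^\top\vect{b}$ and $\sum_{p,j}v_{pj}^2=\lambst^\top\matr{A}\lambst$, substituting $\lambst=\matr{A}^{-1}(\vect{b}+\sum_p\vect{\nu}_p)$ collapses the first two terms and leaves
\begin{equation*}
-2\log R(\bmu)=\vect{b}^\top\matr{A}^{-1}\vect{b}-\Big(\sum\nolimits_{p=1}^m\vect{\nu}_p\Big)^\top\matr{A}^{-1}\Big(\sum\nolimits_{p=1}^m\vect{\nu}_p\Big)+2\sum\nolimits_{p=1}^m\sum\nolimits_{j=1}^{n_p}\phi_{pj}.
\end{equation*}
The last two terms are $o_p(1)$ by the identical estimates in \Cref{thm:multisample-el}, since $\matr{A}^{-1}=O_p(n)$, $\sum_p\vect{\nu}_p=o_p(n^{-1/2})$, and $|2\sum_{p,j}\phi_{pj}|=O_p(n^{-3/2})\,O(n)\,o(n^{1/2})=o_p(1)$. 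For the leading term I would write $\vect{b}^\top\matr{A}^{-1}\vect{b}=\vect{z}_n^\top\matr{M}_n\vect{z}_n$ with $\matr{M}_n\triangleq(n\overline{\matr{V}}_n)^{1/2}(n\matr{A})^{-1}(n\overline{\matr{V}}_n)^{1/2}$; since $n\overline{\matr{V}}_n\to\matr{W}\triangleq\sum_{p=1}^m\beta_p^{-1}\matr{V}_p$ deterministically and $n\matr{A}\xrightarrow{P}\matr{W}$ (using $\matr{S}_p\xrightarrow{P}\matr{V}_p$ and $n/n_p\to\beta_p^{-1}$), the continuous mapping theorem gives $\matr{M}_n\xrightarrow{P}\matr{W}^{1/2}\matr{W}^{-1}\matr{W}^{1/2}=\matr{I}$; then $\vect{z}_n\to\vect{z}$ and Slutsky's theorem yield $\vect{z}_n^\top\matr{M}_n\vect{z}_n\xrightarrow{P}\norm{\vect{z}}^2$, so $-2\log R(\bmu)\xrightarrow{P}\norm{\vect{z}}^2$.

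The hard part is conceptual rather than computational. In \Cref{thm:multisample-el} the quadratic form was driven by the \emph{random} average $\sum_p\overline{\vect{Y}}_p$ and produced a $\chi^2_q$ limit via the CLT \eqref{eq:converge_equation_2}; here the offset $\vect{b}$ is deterministic, so the same quadratic form must instead be shown to converge \emph{in probability} to the constant $\norm{\vect{z}}^2$. I therefore replace the Lindeberg--Feller step by the deterministic-limit-plus-Slutsky argument $\matr{M}_n\xrightarrow{P}\matr{I}$, and I must check that each $O_p$/$o_p$ bound borrowed from \Cref{prop:opt_solution} genuinely depended only on the \emph{order} $O(n^{-1/2})$ of the offset, not on its being a sample mean. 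A subtler point is that $\bmu$ is itself random and data-dependent through $\sum_p\overline{\vect{Y}}_p$, so I cannot invoke \Cref{prop:opt_solution} as a black box on translated observations whose population mean would be random and nonzero; re-deriving the handful of affected estimates directly under the constraint $\sum_p\bmu_p=\bmu$ is the cleaner route.
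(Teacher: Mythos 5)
Your proposal is correct and follows essentially the same route as the paper's proof: replace the last constraint of \eqref{eq:pf-opt} by $\sum_p \bmu_p = \sum_p \overline{\vect{Y}}_p - \overline{\matr{V}}_n^{1/2}\vect{z}_n$, observe that the resulting deterministic offset $\overline{\matr{V}}_n^{1/2}\vect{z}_n = O(n^{-1/2})$ plays the role of $\sum_p\overline{\vect{Y}}_p$ in \Cref{prop:opt_solution,thm:multisample-el}, and show the surviving quadratic form converges in probability to $\norm{\vect{z}}^2$. Your final step via $\matr{M}_n \xrightarrow{P} \matr{I}$ and Slutsky is a slightly tidier packaging of the paper's expansion in \eqref{eq:cor_lambda_star}--\eqref{eq:cor_expansion}, but it is the same argument.
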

\begin{proof}{Proof}
    The proof follows the same structure as that of \Cref{prop:opt_solution} and \Cref{thm:multisample-el}, with only minor modifications. We highlight the necessary adjustments to the original argument to establish this corollary.
    In \eqref{eq:pf-opt}, we replace the last constraint by 
    \[ \sum\nolimits_{p=1}^{m} \bmu_p = \sum\nolimits_{p=1}^{m} \overline{\vect{Y}}_{p} - \Big(\sum\nolimits_{p=1}^m (1/n_p)\matr{V}_p\Big)^{1/2} \vect{z}_n. \] This modification also applies to the last equation of the KKT conditions. 
    
    In \emph{Step 1}, we change \eqref{eq:lambda-order-1} to
    \begin{equation*}
       \frac{\norm{\lambst}}{\overline{c}n} \bigg( \vect{\theta}^\top \Big(\sum\nolimits_{p=1}^{m} \matr{S}_p\Big) \vect{\theta} - \frac{\overline{c}}{\underline{c}} K^* \vect{\theta}^\top \Big(\sum\nolimits_{p=1}^m (1/n_p) \matr{V}_p\Big)^{1/2}\vect{z}_n \bigg)  \leq \vect{\theta}^\top  \Big(\sum\nolimits_{p=1}^m (1/n_p) \matr{V}_p\Big)^{1/2} \vect{z}_n.   
    \end{equation*}
    Since $\sum\nolimits_{p=1}^m (1/n_p)\matr{V}_p = O(n^{-1})$, we can show that $(1/n)\norm{\lambst} = O_p(n^{-1/2})$.

    In \emph{Step 4}, we obtain 
    $
        \big(\sum\nolimits_{p=1}^m (1/n_p)\matr{V}_p\big)^{1/2} \vect{z}_n = \big(\sum\nolimits_{p=1}^m (1/n_p) \matr{S}_p\big) \lambst + \sum\nolimits_{p=1}^m \vect{\nu}_p. 
    $ 
    Then, considering $\matr{S}_p = \matr{V}_p + o_p(1)$ and rearranging terms, we have
    \begin{equation}\label{eq:cor_lambda_star}
        \lambst = \bigg( \Big( \sum\nolimits_{p=1}^{m} (1/n_p) \matr{S}_p \Big)^{-1/2} + \Big(\sum\nolimits_{p=1}^m (1/n_p) \matr{S}_p\Big)^{-1} o_p(n^{-1/2}) \bigg) \vect{z}_n - \vect{\nu}.
    \end{equation}

    Finally, in the expansion of $-2\log R(\bar{\bmu})$, replace the value of $\bar{\bmu}$ to get 
    \begin{equation} \label{eq:cor_expansion}
        2 (\lambst)^\top  \Big(\sum\nolimits_{p=1}^m (1/n_p) \matr{V}_p\Big)^{1/2}\vect{z}_n - (\lambst)^\top \Big(\sum\nolimits_{p=1}^{m} (1/n_p) \matr{S}_p \Big) \lambst + 2 \sum\nolimits_{p=1}^{m} \sum\nolimits_{j=1}^{n_p} \phi_{pj}.
    \end{equation}
    Inserting the value of $\lambst$ obtained in \eqref{eq:cor_lambda_star} into \eqref{eq:cor_expansion} yields
    {
    \setlength{\abovedisplayskip}{5pt}
    \setlength{\belowdisplayskip}{0pt}
    \begin{align*}
        & 2 \bigg( \Big(\sum\nolimits_{p=1}^m (1/n_p) \matr{S}_p\Big)^{-1/2} \vect{z} + \Big(\sum\nolimits_{p=1}^m (1/n_p) \matr{S}_p\Big)^{-1} o_p(n^{-1/2}) \vect{z} \bigg)^\top  \bigg( \Big(\sum\nolimits_{p=1}^m (1/n_p) \matr{S}_p\Big)^{1/2} o_p(n^{-1/2}))\bigg)\vect{z} -  \|\vect{z}\|^2 \\
        & \quad + 2\vect{\nu}^\top \Big(\sum\nolimits_{p=1}^m (1/n_p) \matr{S}_p\Big)^{1/2}\vect{z} 
        - 2 \vect{\nu}^\top \Big(\sum\nolimits_{p=1}^m (1/n_p) \matr{S}_p\Big)^{1/2}\vect{z} - \vect{\nu}^\top \Big(\sum\nolimits_{p=1}^m (1/n_p) \matr{S}_p\Big) \vect{\nu} + 2 \sum\nolimits_{p=1}^{m} \sum\nolimits_{j=1}^{n_p} \phi_{pj} \\
        & = 2 \norm{\vect{z}}^2 - \norm{\vect{z}}^2
        + 2 \vect{z}^\top \Big(\sum\nolimits_{p=1}^m (1/n_p) \matr{S}_p\Big)^{-1/2} o_p(n^{-1/2}) \vect{z} + \vect{z}^\top \Big(\sum\nolimits_{p=1}^m (1/n_p) \matr{S}_p\Big)^{-1} o_p(1/n) \vect{z}\\
        & \quad - \vect{\nu}^\top \Big(\sum\nolimits_{p=1}^m (1/n_p) \matr{S}_p\Big) \vect{\nu} + 2 \sum\nolimits_{p=1}^{m} \sum\nolimits_{j=1}^{n_p} \phi_{pj} 
         = \norm{\vect{z}}^2 + o_p(1),
    \end{align*}
    }
    and the conclusion follows. \hfill 
\end{proof}

\begin{proposition}\label{prop:upper_bound.cov.opt}
    Under the hypotheses of \Cref{thm:multisample-el}, 
    \begin{equation}\label{eq:B-lower-bound}
        \vect{B} \geq \sum\nolimits_{p=1}^{m} \overline{\vect{Y}}_{p} + (\chi^2_{q,1-\alpha})^{1/2}
        \big\{(\vect{e}_\ell^\top \overline{\matr{V}}_n \vect{e}_\ell)^{1/2}\big\}_{\ell \in [q]}
        + o_p(n^{-1/2}) \quad \mbox{as} \quad  n\to\infty.
    \end{equation}
\end{proposition}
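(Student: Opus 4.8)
The plan is to bound each coordinate $B_\ell$ from below by producing a single feasible mean vector in $C_\alpha$ whose $\ell$th entry attains the claimed value up to $o_p(n^{-1/2})$. The key preliminary observation is that, because $R(\bmu)$ carries a maximization over weights while $\mathcal{U}_\alpha$ drops it, the confidence region is exactly a level set of the empirical-likelihood statistic: since $-2\log R(\bmu)=\min\{-2\sum_{p,j}\log(n_pw_{pj})\colon \sum_{p,j}\vect{Y}_{pj}w_{pj}=\bmu,\ \vect{w}_p\in\Delta_{n_p}\}$, a mean $\bmu$ lies in $C_\alpha$ if and only if $-2\log R(\bmu)\leq \chi^2_{q,1-\alpha}$. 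Consequently $B_\ell=\max\{\mu_\ell\colon -2\log R(\bmu)\leq\chi^2_{q,1-\alpha}\}$, and any admissible $\bmu$ yields $B_\ell\geq\mu_\ell$.

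Next I would apply \Cref{cor:thm3}, which evaluates the statistic along the deterministic curve $\bmu^{(n)}=\sum_{p=1}^m\overline{\vect{Y}}_p-\overline{\matr{V}}_n^{1/2}\vect{z}_n$. To push the $\ell$th coordinate as high as possible over a sphere of radius $r$, I would choose $\vect{z}_n=-r\,\overline{\matr{V}}_n^{1/2}\vect{e}_\ell/\|\overline{\matr{V}}_n^{1/2}\vect{e}_\ell\|$, which by Cauchy--Schwarz gives
\[
\mu_\ell^{(n)}=\vect{e}_\ell^\top\sum\nolimits_{p=1}^m\overline{\vect{Y}}_p+r\,(\vect{e}_\ell^\top\overline{\matr{V}}_n\vect{e}_\ell)^{1/2}.
\]
Because the $\matr{V}_p$ are deterministic and $n\overline{\matr{V}}_n\to\sum_{p=1}^m\matr{V}_p/\beta_p$ is positive definite, $\vect{z}_n$ is deterministic and converges to a limit $\vect{z}$ with $\|\vect{z}\|=r$; hence \Cref{cor:thm3} yields $-2\log R(\bmu^{(n)})\xrightarrow{P}r^2$. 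For any $r<(\chi^2_{q,1-\alpha})^{1/2}$ this forces $\Pr\{\bmu^{(n)}\in C_\alpha\}\to1$, so with probability tending to one $B_\ell\geq\vect{e}_\ell^\top\sum_{p=1}^m\overline{\vect{Y}}_p+r\,(\vect{e}_\ell^\top\overline{\matr{V}}_n\vect{e}_\ell)^{1/2}$.

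The remaining step, which I expect to be the main obstacle, is upgrading the strict radius $r$ to the boundary value $(\chi^2_{q,1-\alpha})^{1/2}$: \Cref{cor:thm3} only delivers convergence in probability, so the interior point must stay strictly inside $C_\alpha$. Writing $A_{\ell,n}$ for the claimed right-hand side and $\sigma_{\ell,n}=(\vect{e}_\ell^\top\overline{\matr{V}}_n\vect{e}_\ell)^{1/2}$, the previous display gives $A_{\ell,n}-B_\ell\leq\big((\chi^2_{q,1-\alpha})^{1/2}-r\big)\sigma_{\ell,n}$ on the high-probability event. Since $n^{1/2}\sigma_{\ell,n}\to(\vect{e}_\ell^\top(\sum_{p=1}^m\matr{V}_p/\beta_p)\vect{e}_\ell)^{1/2}$ is a finite positive constant, multiplying by $n^{1/2}$ and choosing $r$ arbitrarily close to $(\chi^2_{q,1-\alpha})^{1/2}$ makes this gap smaller than any prescribed $\delta>0$ with probability approaching one; thus $n^{1/2}(A_{\ell,n}-B_\ell)^+\xrightarrow{P}0$, i.e.\ $B_\ell\geq A_{\ell,n}+o_p(n^{-1/2})$. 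Carrying out this argument for each of the finitely many $\ell\in[q]$ establishes the stated vector inequality \eqref{eq:B-lower-bound}.
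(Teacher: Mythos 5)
Your proposal is correct and follows essentially the same route as the paper: identify $C_\alpha$ as the level set $\{\bmu:-2\log R(\bmu)\le\chi^2_{q,1-\alpha}\}$, exhibit the near-boundary point $\sum_p\overline{\vect Y}_p+r\,\overline{\matr V}_n\vect e_\ell/(\vect e_\ell^\top\overline{\matr V}_n\vect e_\ell)^{1/2}$, invoke \Cref{cor:thm3} to place it in $C_\alpha$ with probability tending to one, and pass $r\uparrow(\chi^2_{q,1-\alpha})^{1/2}$. The only cosmetic difference is that you carry out the limit interchange by a direct $\delta$--$r$ argument showing $n^{1/2}(A_{\ell,n}-B_\ell)^+\xrightarrow{P}0$, whereas the paper invokes a diagonalization lemma to produce a sequence $\epsilon(n)\to 0$; both are valid.
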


\begin{proof}{Proof}
We recall that $C_\alpha = \set{\sum\nolimits_{p=1}^m \sum\nolimits_{j=1}^{n_p} \vect{Y}_{pj} w_{pj} \colon \, \vect{w} \in \mathcal{U}_\alpha }=
    \set{\bmu \in \mathbb{R}^{q} \colon -2 \log(R(\bmu)) \leq \chi^2_{q,1-\alpha}}$. 
    Let $\vect{x}_0^\ell \triangleq \sum\nolimits_{p=1}^{m}\overline{\vect{Y}}_{p} + (\chi^2_{q,1-\alpha})^{1/2} (\overline{\matr{V}}_n \vect{e}_\ell / (\vect{e}_\ell^\top \overline{\matr{V}}_n \vect{e}_\ell )^{-1/2} )$.
    Now, we show that $u_\ell \geq \vect{e}_\ell^\top \vect{x}_0^\ell + o_p(n^{-1/2})$.
    For $0 < \epsilon < (\chi^2_{q,1-\alpha})^{1/2}$,
    let $\vect{x}_{\! \epsilon}^\ell \triangleq \sum\nolimits_{p=1}^{m}\overline{\vect{Y}}_{p} + \big((\chi^2_{q,1-\alpha})^{1/2} - \epsilon \big) \big( \overline{\matr{V}}_n \vect{e}_\ell / (\vect{e}_\ell^\top \overline{\matr{V}}_n \vect{e}_\ell )^{-1/2} \big)$.
    \Cref{cor:thm3} ensures 
    $ -2\log R(\vect{x}_{\! \epsilon}^\ell) \xrightarrow{P} ((\chi^2_{q,1-\alpha})^{1/2} - \epsilon)^2$ as 
    $n\to\infty$. Hence, $\Pr\set{\vect{x}_{\! \epsilon}^\ell \in C_\alpha} \to 1$ as $n \to \infty$,
    which implies $\Pr\set{\vect{x}_{\! \epsilon}^\ell \in C_\alpha, \ell \in [q]} \to 1$. 
    Applying a diagonalization lemma (see, e.g., Corollary~1.18 in \cite{Attouch1984}), we obtain the existence of a sequence
    $\epsilon(n) \in (0,\infty)$ such that $\epsilon(n) \to 0$ and $\Pr\set{\vect{x}_{\! \epsilon(n)}^\ell \in C_\alpha, \ell \in [q]} \to 1$ as $n \to\infty$.
    We have that $\overline{\matr{V}}_n = O(1/n)$ implies $(\overline{\matr{V}}_n \vect{e}_\ell / (\vect{e}_\ell^\top \overline{\matr{V}}_n \vect{e}_\ell )^{-1/2} ) = O(n^{-1/2})$. Combining this fact with $\epsilon(n) = o(1)$,
    we have $\vect{e}_\ell^\top \vect{x}_{\! \epsilon(n)}^\ell= \vect{e}_\ell^\top \vect{x}_0^\ell - \epsilon(n) (\overline{\matr{V}}_n \vect{e}_\ell / (\vect{e}_\ell^\top \overline{\matr{V}}_n \vect{e}_\ell )^{-1/2} ) = \vect{e}_\ell^\top \vect{x}_0^\ell + o(n^{-1/2})$.

    If $\vect{x}_{\! \epsilon(n)}^\ell \in C_\alpha$, then
    $\mathrm{u}_\ell = \max_{\vect{\mu} \in C_\alpha}\, \vect{e}_\ell^\top \vect{\mu} \geq \vect{e}_\ell^\top \vect{x}_{\! \epsilon(n)}^\ell$.
    Therefore, $\vect{x}_{\! \epsilon(n)}^\ell \in C_\alpha$ ensures
    $\max_{\vect{\mu} \in C_\alpha}\, \vect{e}_\ell^\top \vect{\mu} \geq \vect{e}_\ell^\top \vect{x}_0^\ell + o(n^{-1/2})$.
    Since $\Pr\set{\vect{x}_{\! \epsilon(n)}^\ell \in C_\alpha, \ell \in [q]} \to 1$ as $n\to\infty$,
    the probability of the event
    $\{\max_{\vect{\mu} \in C_\alpha}\, \vect{e}_\ell^\top \vect{\mu} \geq \vect{e}_\ell^\top \vect{x}_0^\ell + o(n^{-1/2}), \ell \in [q]\}$
    approaches $1$ as $n\to\infty$.
    Hence, 
    $\Pr\set{\mathrm{u}_\ell 
    \geq \vect{e}_\ell^\top \vect{x}_0^\ell + o(n^{-1/2}), \ell \in [q]} \to 1$ as $n \to\infty$.
    \hfill 
\end{proof}

\section{Proof for \Cref{sec:model}}

\begin{proof}{Proof of \Cref{thm:worst-case-expansion-error}}
For a fixed Solution $i$,
\Cref{thm:worst-case-expansion-error} differs from Proposition~3 in \cite{lam2017optimization} only in the choice of chi-square quantile for the ambiguity set in \eqref{eq:uncertaintyset}: we use $\chi^2_{k-1,1-\alpha}$ instead of $\chi^2_{1,1-\alpha}$. 
This change affects only a constant and therefore does not alter the asymptotic behavior.

We observe that Proposition~3 in \cite{lam2017optimization} states a requirement that the eighth moment of the simulation output function be bounded. However, inspection of their proof reveals that only the second moment is actually utilized. This condition is already satisfied under \Cref{assump:input.length} in the present work.
\hfill
\end{proof}

\section{Proofs for \Cref{sec:optimization,sec:as-algorithm}}

Prior to proving \Cref{prop:algorithm-error,prop:Uhat-coverage,thm:as-asymp-coverage}, we establish several supporting lemmas and propositions.
The following lemma is a minor modification of Lemmas~EC.1 and EC.2 in \cite{lam2017optimization}.
\begin{lemma}\label{lemma:weights-bound}
Let 
$0 < l(\alpha) < 1 < u(\alpha) < \infty$ solve $x  \exp{(1 + (1/2)\chi^2_{k-1,1-\alpha} - x)} = 1$. Then each $\bw \in \mathcal{U}_\alpha$ satisfies 
    \begin{equation}
    \label{eq:lemma-lq}
        l(\alpha)/n_p \leq w_{pj} \leq u(\alpha)/n_p, \forall j \in [n_p], p \in [m],
    \quad \text{and} \quad  \sum\nolimits_{p=1}^m n_p^2 \sum\nolimits_{j=1}^{n_p} (w_{pj} - 1/n_p)^2 \leq u(\alpha)^2\chi^2_{k-1,1-\alpha}.
    \end{equation} 
\end{lemma}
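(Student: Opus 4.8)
The plan is to reduce both claims to elementary properties of the single-variable function $g(x) \triangleq x - \log x$ after substituting $t_{pj} \triangleq n_p w_{pj}$. Under this change of variable the simplex constraints $\bw_p \in \Delta_{n_p}$ become $t_{pj} > 0$ and $\sum_{j=1}^{n_p} t_{pj} = n_p$ for each $p$, while the defining inequality of $\mathcal{U}_\alpha$ reads $\sum_{p=1}^m\sum_{j=1}^{n_p} \log t_{pj} \ge -\tfrac12\chi^2_{k-1,1-\alpha}$. I would first record that $l(\alpha)$ and $u(\alpha)$, defined as the roots of $x\exp(1+\tfrac12\chi^2_{k-1,1-\alpha}-x)=1$, are exactly the two solutions of $g(x) = 1 + \tfrac12\chi^2_{k-1,1-\alpha}$ (take logarithms of the defining equation). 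Since $g$ is strictly convex with minimum $g(1)=1$, the sublevel set $\{x>0 : g(x)\le 1+\tfrac12\chi^2_{k-1,1-\alpha}\}$ equals the interval $[l(\alpha),u(\alpha)]$, which is the shape I will exploit.

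For the pointwise bounds, the key observation is that for each fixed $p$, Jensen's inequality applied to the concave $\log$ with $\tfrac1{n_p}\sum_j t_{pj}=1$ gives $\sum_{j=1}^{n_p}\log t_{pj}\le 0$; adding the $\mathcal{U}_\alpha$ inequality to this fact applied to the other blocks then yields the sharper per-block bound $\sum_{j=1}^{n_p}\log t_{pj}\ge -\tfrac12\chi^2_{k-1,1-\alpha}$ for every $p$ separately. Fixing an index $j_0$ and isolating its term, I would bound the remaining summands using $\log x\le x-1$, so that $-\sum_{j\ne j_0}\log t_{pj}\ge\sum_{j\ne j_0}(1-t_{pj})=t_{pj_0}-1$, where I used $\sum_{j\ne j_0}t_{pj}=n_p-t_{pj_0}$. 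This produces $\log t_{pj_0}\ge t_{pj_0}-1-\tfrac12\chi^2_{k-1,1-\alpha}$, equivalently $g(t_{pj_0})\le 1+\tfrac12\chi^2_{k-1,1-\alpha}$, hence $t_{pj_0}\in[l(\alpha),u(\alpha)]$; dividing by $n_p$ gives the first claim in \eqref{eq:lemma-lq}.

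For the quadratic bound I would invoke a second-order Taylor expansion of $\log$ with exact remainder, $\log t_{pj} = (t_{pj}-1) - \tfrac{(t_{pj}-1)^2}{2\xi_{pj}^2}$ for some $\xi_{pj}$ between $1$ and $t_{pj}$. Because the first part already guarantees $t_{pj}\le u(\alpha)$, we have $\xi_{pj}\le u(\alpha)$, whence $\log t_{pj}\le (t_{pj}-1)-\tfrac{(t_{pj}-1)^2}{2u(\alpha)^2}$. Summing over all $p,j$ and using $\sum_{p,j}(t_{pj}-1)=\sum_p(n_p-n_p)=0$, the linear terms cancel and leave $\sum_{p,j}\log t_{pj}\le -\tfrac{1}{2u(\alpha)^2}\sum_{p,j}(t_{pj}-1)^2$. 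Combining this with the constraint $\sum_{p,j}\log t_{pj}\ge -\tfrac12\chi^2_{k-1,1-\alpha}$ yields $\sum_{p,j}(t_{pj}-1)^2\le u(\alpha)^2\chi^2_{k-1,1-\alpha}$; since $(t_{pj}-1)^2=n_p^2(w_{pj}-1/n_p)^2$, this is precisely the second inequality in \eqref{eq:lemma-lq}.

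The argument is essentially elementary once the substitution is in place. The point requiring the most care, and the main obstacle, is the per-block reduction $\sum_{j}\log t_{pj}\ge -\tfrac12\chi^2_{k-1,1-\alpha}$, which must be shown to hold uniformly over each $p$ rather than only in aggregate across blocks, since this is exactly what permits isolating and bounding a single coordinate $t_{pj_0}$. A secondary but necessary point is to ensure the Taylor-remainder step for the quadratic bound invokes only the already-established upper bound $u(\alpha)$ (so that $\xi_{pj}\le u(\alpha)$), rather than circularly relying on the quadratic bound itself, so that the two parts chain together in the correct order.
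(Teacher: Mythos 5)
Your argument is correct. The paper itself omits the proof of this lemma, deferring entirely to Lemmas~EC.1 and EC.2 of \citet{lam2017optimization}, and your self-contained derivation reproduces exactly that standard technique: the substitution $t_{pj}=n_pw_{pj}$, the identification of $l(\alpha),u(\alpha)$ as the two roots of $x-\log x = 1+\tfrac12\chi^2_{k-1,1-\alpha}$, the per-block reduction via concavity of $\log$ followed by $\log x\le x-1$ to trap each $t_{pj}$ in $[l(\alpha),u(\alpha)]$, and the Lagrange-remainder Taylor bound (correctly ordered after the pointwise bound so that $\xi_{pj}\le u(\alpha)$) for the quadratic inequality. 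All steps check out.
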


\begin{proof}{Proof}
We omit the proof  as it follows directly from the proofs of Lemmas~EC.1 and EC.2 in \cite{lam2017optimization}, with the sole distinction that our ambiguity set is defined using $\chi^2_{k-1,1-\alpha}$ instead of $\chi^2_{1,1-\alpha}$.
\hfill
\end{proof}

\begin{proposition}\label{prop:algorithm-error-r2}
    Under \Cref{assump:data_size,assump:input.length,assump:pos.def.cov,assump:sim_output_bounded}, for $i\neq \ell$,
    $
        \big|\widehat{U}_{i\ell,n} - \big(\eta_i(\widehat{\bw}^*_{i\ell}) - \eta_\ell(\widehat{\bw}^*_{i\ell})\big)\big| = O_p\big(R_2^{-1/2}\big).
    $
\end{proposition}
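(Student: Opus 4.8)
The plan is to condition on the (data- and simulation-dependent, hence random) input model $\widehat{\bw}^{*}_{i\ell}$ and to recognize $\widehat{U}_{i\ell,n}$ as a Monte Carlo average. Write $W \triangleq \widehat{\bw}^{*}_{i\ell}$ and $\mu_W \triangleq \eta_i(\widehat{\bw}^{*}_{i\ell}) - \eta_\ell(\widehat{\bw}^{*}_{i\ell})$. Conditional on $W$, the replication differences $D_r \triangleq Y_{ir}(W) - Y_{\ell r}(W)$, $r \in [R_2]$, are i.i.d.\ with conditional mean exactly $\mu_W$, since $\E[Y_{ir}(W)\mid W] = \eta_i(W)$ by the definition of $\eta_i$; the use of CRNs across Solutions $i$ and $\ell$ only couples $Y_{ir}$ and $Y_{\ell r}$ within the same replication and leaves the $D_r$ independent across $r$. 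Consequently $\widehat{U}_{i\ell,n} - \mu_W = R_2^{-1}\sum_{r=1}^{R_2}(D_r - \mu_W)$ is a centered conditional sample mean with conditional variance $\sigma_W^2/R_2$, where $\sigma_W^2 \triangleq \V_W(Y_i(W) - Y_\ell(W))$. The statement thus reduces to a conditional Chebyshev argument, once $\sigma_W^2$ is controlled.

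The key step is a uniform-in-$\mathcal{U}_\alpha$ variance bound. By \Cref{lemma:weights-bound}, every $\bw \in \mathcal{U}_\alpha$ obeys $w_{pj} \le u(\alpha)/n_p$. Expanding $\E_\bw[|h_i|^4]$ as a weighted sum over all index tuples $I^p \in [n_p]^{T_{ip}}$, $p \in [m]$, and bounding each weight product by $\prod_{p} (u(\alpha)/n_p)^{T_{ip}}$ gives
\[
\sup_{\bw\in\mathcal{U}_\alpha}\E_\bw[|h_i|^4] \;\le\; u(\alpha)^{\sum_{p} T_{ip}} \;\cdot\; \frac{1}{\prod_{p} n_p^{T_{ip}}} \sum_{I^1,\dots,I^m} \big| h_i(\bX_{1,I^1},\dots,\bX_{m,I^m})\big|^4 .
\]
The averaged sum on the right is a $V$-statistic-type quantity that converges almost surely, by the strong law of large numbers, to a finite limit determined by the reindexed fourth-moment terms $\E_{\bF^c}[|h_i(\bbZ^i_{1,I^1},\dots,\bbZ^i_{m,I^m})|^4]$, all finite under \Cref{assump:sim_output_bounded}. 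Hence $\sup_{\bw\in\mathcal{U}_\alpha}\E_\bw[|h_i|^4] = O_p(1)$, and by Cauchy--Schwarz together with $(a-b)^2 \le 2a^2+2b^2$ we obtain $\sigma_W^2 \le 2(\sup_{\bw}\E_\bw[|h_i|^4])^{1/2} + 2(\sup_{\bw}\E_\bw[|h_\ell|^4])^{1/2} = O_p(1)$, since $W \in \mathcal{U}_\alpha$ almost surely.

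To conclude, I would apply Chebyshev's inequality conditionally and integrate out $W$: for any $M > 0$,
\[
\Pr\big\{ R_2^{1/2}\,|\widehat{U}_{i\ell,n} - \mu_W| > M \big\}
= \E\big[ \Pr\{ R_2^{1/2}|\widehat{U}_{i\ell,n} - \mu_W| > M \mid W\}\big]
\le \E\big[\min(1,\ \sigma_W^2 / M^2)\big].
\]
Because $\sigma_W^2 = O_p(1)$, for every $\epsilon > 0$ one can choose $K$ with $\Pr\{\sigma_W^2 > K\} < \epsilon/2$ uniformly in large $n$, and then $M$ with $K/M^2 < \epsilon/2$, making the right-hand side below $\epsilon$. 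This shows $R_2^{1/2}(\widehat{U}_{i\ell,n} - \mu_W)$ is tight, i.e.\ $\widehat{U}_{i\ell,n} - (\eta_i(\widehat{\bw}^{*}_{i\ell}) - \eta_\ell(\widehat{\bw}^{*}_{i\ell})) = O_p(R_2^{-1/2})$, as claimed.

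The main obstacle is the uniform-in-$\mathcal{U}_\alpha$ second-moment control: because $\mathcal{U}_\alpha$ is itself random and $\widehat{\bw}^{*}_{i\ell}$ is a data- and simulation-dependent element of it, $\sigma_W^2$ cannot be treated as a fixed constant. The resolution is precisely the combination of the weight bound from \Cref{lemma:weights-bound} with the reindexed fourth-moment condition in \Cref{assump:sim_output_bounded}, which together guarantee $\sigma_W^2 = O_p(1)$ regardless of which $\bw \in \mathcal{U}_\alpha$ the optimizer selects. A secondary subtlety—the statistical dependence of the $R_2$ confirmation replications on $W$ through the input model—is handled cleanly by conditioning on $W$ throughout.
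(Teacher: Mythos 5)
Your proposal is correct and follows essentially the same route as the paper's proof: condition on $\widehat{\bw}^{*}_{i\ell}$, observe that $\widehat{U}_{i\ell,n}$ is a conditionally unbiased Monte Carlo average with conditional variance $\V_{\bw}(h_i-h_\ell)/R_2$, bound that variance uniformly over $\mathcal{U}_\alpha$ via \Cref{assump:sim_output_bounded}, and conclude by a conditional Chebyshev argument. The only difference is that you re-derive the uniform fourth-moment bound from \Cref{lemma:weights-bound} and treat it as $O_p(1)$ via a tightness argument, whereas the paper simply invokes the first part of \Cref{assump:sim_output_bounded} as a deterministic uniform bound; your version is slightly more careful but reaches the same conclusion.
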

\begin{proof}{Proof}
    \setlength{\abovedisplayskip}{5pt}
    \setlength{\belowdisplayskip}{0pt} 
    For any constant $M>0$, Chebyshev's inequality ensures
\begin{align}
        \Pr\Big( \big| \widehat{U}_{i\ell,n} - (\eta_i(\widehat{\bw}^*_{i\ell}) - \eta_\ell(\widehat{\bw}^*_{i\ell})) \big| > MR \Big) 
        & \leq R_2 M^{-2}\E\Big[\E\Big[ \big| \widehat{U}_{i\ell,n} - (\eta_i(\widehat{\bw}^*_{i\ell}) - \eta_\ell(\widehat{\bw}^*_{i\ell})) \big|^2 \, \Big| \, \widehat{\bw}^*_{i\ell} \Big] \Big]. \label{eq:second.moment.bound} 
    \end{align}
The definition of $\widehat{U}_{i\ell,n}$ in~\eqref{eq:U_iell.estimate} yields $\E[ | \widehat{U}_{i\ell,n} - (\eta_i(\widehat{\bw}^*_{i\ell}) - \eta_\ell(\widehat{\bw}^*_{i\ell})) |^2 \, | \, \widehat{\bw}^*_{i\ell} ] = R_2^{-1}\V\big(h_i(\matr{Z}^{i}_{1},\ldots,\matr{Z}^{i}_{m}) - h_\ell(\matr{Z}^{\ell}_{1},\ldots,\matr{Z}^{\ell}_{m}) |\widehat{\bw}^*_{i\ell}\big)$. Since  
    \Cref{assump:sim_output_bounded} bounds $\V_{\bw} [h_i(\vect{X}_{i1},\ldots,\vect{X}_{im}) - h_\ell(\vect{X}_{\ell 1}, \ldots,\vect{X}_{\ell m})]$  uniformly over all $\bw\in\mathcal{U}_\alpha$, the right-hand side in \eqref{eq:second.moment.bound} does not exceed $\mathcal{M}/M^2$ for some constant $\mathcal{M} >0$. 
    Therefore, for any $\varepsilon > 0$, choosing $M=\sqrt{\mathcal{M}/\varepsilon}$,
    we have $\Pr(\big|\widehat{U}_{i\ell,n} - (\eta_i(\widehat{\bw}^*_{i\ell}) - \eta_\ell(\widehat{\bw}^*_{i\ell}))\big| > MR_2^{-1/2}) \leq \varepsilon$.\hfill
\end{proof}

\begin{proposition}\label{prop:algorithm-error-sups}
If \Cref{assump:input.length} holds, then
    for $i\neq \ell$, 
     $   \big|\eta_i(\widehat{\bw}^*_{i\ell}) - \eta_\ell(\widehat{\bw}^*_{i\ell}) - \big( \eta_i(\bw^*_{i\ell}) - \eta_\ell(\bw^*_{i\ell}) \big) \big| \leq
        2  \sup_{\bw \in \mathcal{U}_\alpha} \big| \eta_i(\bw) - \eta_\ell(\bw) - \big( \widehat{\eta}^L_i(\bw) - \widehat{\eta}^L_\ell(\bw) \big)  \big|$.
\end{proposition}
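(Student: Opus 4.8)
The plan is to prove this as a purely deterministic \emph{optimizer-comparison} inequality: uniform closeness of two objective functions forces the quantity of interest, evaluated at their respective maximizers, to be close. Introduce the shorthand $a(\bw) \triangleq \eta_i(\bw) - \eta_\ell(\bw)$ for the true pairwise difference and $b(\bw) \triangleq \widehat{\eta}_i^L(\bw) - \widehat{\eta}_\ell^L(\bw)$ for its estimated linear surrogate, both regarded as functions of $\bw \in \mathcal{U}_\alpha$. By construction $\widehat{\bw}^*_{i\ell}$ maximizes $b$ over $\mathcal{U}_\alpha$, and the relevant optimality of $\bw^*_{i\ell}$ is that it maximizes the true pairwise difference $a$ over $\mathcal{U}_\alpha$; the left-hand side of the claim is exactly $|a(\widehat{\bw}^*_{i\ell}) - a(\bw^*_{i\ell})|$ and the right-hand side is $2\sup_{\bw \in \mathcal{U}_\alpha} |a(\bw) - b(\bw)|$. \Cref{assump:input.length} enters only to guarantee that all objects are well defined: it makes $\eta_i$ finite and, through \Cref{eq:etaidifferentiable} and the influence-function estimator in \eqref{eq:if_estimator}, ensures $\widehat{\eta}_i^L$ and hence $a$, $b$, and their maximizers exist.

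The core step is a three-term telescoping. Write
\[
a(\bw^*_{i\ell}) - a(\widehat{\bw}^*_{i\ell}) = \big[a(\bw^*_{i\ell}) - b(\bw^*_{i\ell})\big] + \big[b(\bw^*_{i\ell}) - b(\widehat{\bw}^*_{i\ell})\big] + \big[b(\widehat{\bw}^*_{i\ell}) - a(\widehat{\bw}^*_{i\ell})\big].
\]
The middle bracket is nonpositive because $\widehat{\bw}^*_{i\ell}$ maximizes $b$ over $\mathcal{U}_\alpha$ and $\bw^*_{i\ell} \in \mathcal{U}_\alpha$. The first and third brackets are each bounded in absolute value by $\sup_{\bw \in \mathcal{U}_\alpha}|a(\bw) - b(\bw)|$, since both $\bw^*_{i\ell}$ and $\widehat{\bw}^*_{i\ell}$ lie in $\mathcal{U}_\alpha$. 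Hence $a(\bw^*_{i\ell}) - a(\widehat{\bw}^*_{i\ell}) \leq 2\sup_{\bw}|a(\bw) - b(\bw)|$; this direction uses only the optimality of $\widehat{\bw}^*_{i\ell}$ and in fact holds for any competitor in $\mathcal{U}_\alpha$. For the reverse sign, optimality of $\bw^*_{i\ell}$ for $a$ gives $a(\widehat{\bw}^*_{i\ell}) \leq a(\bw^*_{i\ell})$, so $a(\bw^*_{i\ell}) - a(\widehat{\bw}^*_{i\ell}) \geq 0$ and the absolute value equals this nonnegative quantity. Combining the two observations yields $|a(\widehat{\bw}^*_{i\ell}) - a(\bw^*_{i\ell})| \leq 2\sup_{\bw}|a(\bw) - b(\bw)|$, which is the claim.

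There is essentially no analytical obstacle in this proposition itself: the statement is a deterministic inequality carrying no probabilistic rate, and the argument is two lines of telescoping. The points requiring care are bookkeeping ones — pairing each of the two optimizers with the objective it actually maximizes, so that exactly one bracket collapses by optimality, and checking that both optimizers belong to $\mathcal{U}_\alpha$ so the two cross terms are genuinely dominated by the single supremum. The genuine work is deferred elsewhere: this proposition packages the mismatch between the two optimizers into the uniform quantity $\sup_{\bw}|a(\bw)-b(\bw)|$, and it is that supremum — controlled in \Cref{sec:optimization} by the worst-case linearization error of \Cref{thm:worst-case-expansion-error} together with the finite-$R_1$ influence-function estimation error — that ultimately supplies the convergence rate feeding into \Cref{prop:algorithm-error}. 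In this sense the proposition is precisely the step that converts error source (iv) into a worst-case linearization quantity.
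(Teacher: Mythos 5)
Your proof is correct and follows essentially the same route as the paper's: both arguments hinge on reading $\bw^*_{i\ell}$ as the maximizer of the true difference $\eta_i-\eta_\ell$ over $\mathcal{U}_\alpha$ (exactly what the paper's own proof uses via the identity $\eta_i(\bw^*_{i\ell})-\eta_\ell(\bw^*_{i\ell})=\sup_{\bw\in\mathcal{U}_\alpha}\{\eta_i(\bw)-\eta_\ell(\bw)\}$) and on controlling the gap between the two maximizers by the uniform deviation $\sup_{\bw\in\mathcal{U}_\alpha}|a(\bw)-b(\bw)|$. The only cosmetic difference is bookkeeping: the paper splits the left-hand side at $\widehat{\eta}^L_i(\widehat{\bw}^*_{i\ell})-\widehat{\eta}^L_\ell(\widehat{\bw}^*_{i\ell})$ and invokes $|\sup a-\sup b|\leq\sup|a-b|$, whereas you telescope through $b(\bw^*_{i\ell})$ and $b(\widehat{\bw}^*_{i\ell})$ and discard the nonpositive middle term; these yield the identical bound.
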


\begin{proof}{Proof}
Let us define
$E_{i\ell} \triangleq \sup_{\bw \in \mathcal{U}_\alpha} \big| \eta_i(\bw) - \eta_\ell(\bw) - \big( \widehat{\eta}^L_i(\bw) - \widehat{\eta}^L_\ell(\bw) \big)  \big|$.
    Using the triangle inequality, $ \big|\eta_i(\widehat{\bw}^*_{i\ell}) - \eta_\ell(\widehat{\bw}^*_{i\ell}) - \big( \eta_i(\bw^*_{i\ell}) - \eta_\ell(\bw^*_{i\ell}) \big) \big|$ can be bounded from above by
    \begin{equation}\label{eq:prop-error-sups-1}
        \big|\eta_i(\widehat{\bw}^*_{i\ell}) - \eta_\ell(\widehat{\bw}^*_{i\ell}) - 
        \big(\widehat{\eta}^L_i(\widehat{\bw}^*_{i\ell}) - \widehat{\eta}^L_\ell(\widehat{\bw}^*_{i\ell})\big) \big| + \big|
        \widehat{\eta}^L_i(\widehat{\bw}^*_{i\ell}) - \widehat{\eta}^L_\ell(\widehat{\bw}^*_{i\ell}) - \big( \eta_i(\bw^*_{i\ell}) - \eta_\ell(\bw^*_{i\ell}) \big) \big|.
    \end{equation}
    The first term in \eqref{eq:prop-error-sups-1} 
    is less than or equal to $E_{i\ell}$.
    For the second term, we observe
    {
    \setlength{\abovedisplayskip}{5pt}
    \setlength{\belowdisplayskip}{0pt}
    \begin{align*}
        \eta_i(\bw^*_{i\ell}) - \eta_\ell(\bw^*_{i\ell})
        & = \sup\nolimits_{\bw \in \mathcal{U}_\alpha} \left\{\widehat{\eta}^L_i(\bw) - \widehat{\eta}^L_\ell(\bw) + \eta_i(\bw) - \eta_\ell(\bw) - \big( \widehat{\eta}^L_i(\bw) - \widehat{\eta}^L_\ell(\bw)\big)\right\}\\
        & \leq  \sup\nolimits_{\bw \in \mathcal{U}_\alpha} \left\{\widehat{\eta}^L_i(\bw) - \widehat{\eta}^L_\ell(\bw)\right\} + E_{i\ell}
         =\widehat{\eta}^L_i(\widehat{\bw}^*_{i\ell}) - \widehat{\eta}^L_\ell(\widehat{\bw}^*_{i\ell}) + E_{i\ell}.
    \end{align*}
    }%
    Similarly, we can show $\widehat{\eta}^L_i(\widehat{\bw}^*_{i\ell}) - \widehat{\eta}^L_\ell(\widehat{\bw}^*_{i\ell}) \leq \eta_i(\bw^*_{i\ell}) - \eta_\ell(\bw^*_{i\ell})  + E_{i\ell}$. Combining the two expressions, we have $\big|\widehat{\eta}^L_i(\widehat{\bw}^*_{i\ell}) - \widehat{\eta}^L_\ell(\widehat{\bw}^*_{i\ell}) - \big( \eta_i(\bw^*_{i\ell}) - \eta_\ell(\bw^*_{i\ell}) \big) \big| \leq E_{i\ell}$. 
    \hfill
\end{proof}

\begin{lemma}\label{lemma:if-sums}
If \Cref{assump:input.length} holds, then
for each $i \in [k]$ and $p \in [m]$,  $\sum\nolimits_{j=1}^{n_p} \IFhatXipj$ \\$= \sum\nolimits_{j=1}^{n_p} \IFhathatXipj = 0$.
\end{lemma}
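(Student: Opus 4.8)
The plan is to reduce both equalities to one combinatorial identity and then invoke linearity. The crucial observation is that, for every replication index $r$ and every $t \in [T_{ip}]$, the variate $\bZ^{ir}_p(t)$ is resampled from the empirical distribution $\Fhat_p$ and hence coincides with exactly one of the data points $\{\bX_{pj}\}_{j \in [n_p]}$; this is the deterministic counterpart of the identity $\E_{\bFhat}[\one{\bZ^i_p(t) = \bX_{pj}}] = 1/n_p$ noted in the text. It yields $\sum_{j=1}^{n_p} \one{\bZ^{ir}_p(t) = \bX_{pj}} = 1$, so that summing the second factor of the estimator over $j$ gives
\[
    \sum_{j=1}^{n_p} \Big( n_p \sum_{t=1}^{T_{ip}} \one{\bZ^{ir}_p(t) = \bX_{pj}} - T_{ip} \Big) = n_p T_{ip} - n_p T_{ip} = 0.
\]
Justifying this identity, which relies on each simulated input landing on exactly one data point (equivalently, treating the data as distinct), is the only real subtlety; the remainder is routine manipulation of finite sums.

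For the second equality, I would substitute the definition \eqref{eq:if_estimator} of $\IFhathatXipj$ into $\sum_{j=1}^{n_p}\IFhathatXipj$, move the sum over $j$ inside the sum over $r$, and pull the $j$-independent factor $h_i(\matr{Z}^{ir}_{1},\ldots,\matr{Z}^{ir}_{m}) - \overline{h}_i$ out of the inner sum. Each summand then contains exactly the quantity displayed above, which vanishes, so $\sum_{j=1}^{n_p} \IFhathatXipj = 0$ for every realization of the replications.

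For the first equality I see two equivalent routes. The quickest invokes \Cref{eq:etaidifferentiable}: since $\IF_{ip}(\cdot;\bFhat)$ has zero mean under $\Fhat_p$ and $\Fhat_p$ assigns mass $1/n_p$ to each $\bX_{pj}$, we obtain $n_p^{-1}\sum_{j=1}^{n_p} \IFhatXipj = 0$ immediately. Alternatively, paralleling the second equality, I would start from the covariance representation \eqref{eq:influence_function_cov}, apply bilinearity of $\Cov_{\bFhat}(\cdot,\cdot)$ to bring the sum over $j$ inside its second argument, and again use the identity above to conclude $\Cov_{\bFhat}(h_i, 0) = 0$. Either route closes the proof.
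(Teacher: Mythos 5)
Your proposal is correct and follows essentially the same route as the paper: the first equality via the zero-mean property of $\IF_{ip}(\cdot;\bFhat)$ under $\Fhat_p$ from \Cref{eq:etaidifferentiable}, and the second by moving the sum over $j$ inside the second factor of \eqref{eq:if_estimator} and using $\sum_{j=1}^{n_p}\one{\bZ^{ir}_p(t)=\bX_{pj}}=1$. Your remark that this identity implicitly treats the data points as distinct is a fair observation about a subtlety the paper also glosses over, but it does not change the argument.
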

\begin{proof}{Proof} 
    \Cref{eq:etaidifferentiable}  ensures $(1/n_p)\sum\nolimits_{j=1}^{n_p} \IFhatXipj = \int \IF_{ip}(\bx;\bFhat)\mathrm{d}\Fhat_p(\bx) = 0$.
    Using \eqref{eq:if_estimator}, we note that $\IFhathatXipj$ is written as a sample covariance, where the first term does not depend on the index $j$. Therefore, using the linearity of covariance to move the sum inside the second term of the covariance,
    {
    \setlength{\abovedisplayskip}{5pt}
    \setlength{\belowdisplayskip}{0pt}
    \begin{align*}
        \sum\nolimits_{j=1}^{n_p} \IFhathatXipj 
        & = R_1^{-1} \sum\nolimits_{r=1}^{R_1} \Big(h_i(\matr{Z}^{ir}_{1},\ldots,\matr{Z}^{ir}_{m}) - \overline{h}_i\Big) \Big(n_p \sum\nolimits_{t=1}^{T_{ip}} \sum\nolimits_{j=1}^{n_p} \one{\bZ_{p}^{ir}(t) = \bX_{pj}} - n_pT_{ip}\Big)\\
        & = R_1^{-1} \sum\nolimits_{r=1}^{R_1} \Big(h_i(\matr{Z}^{ir}_{1},\ldots,\matr{Z}^{ir}_{m}) - \overline{h}_i\Big) \Big(n_p \sum\nolimits_{t=1}^{T_{ip}} 1 - n_pT_{ip}\Big) = 0. \hfill
    \end{align*}
    }%
\end{proof}

\begin{lemma}\label{lemma:covariance}
    Let $A$, $B$, and $C$ be random variables such that $\E[B] = 0$, and $C$ is independent of $A$ and $B$.
    Then $\Cov(AB,CB) = \E[AB^2]\E[C]$. If $A$ and $B$ are independent, then $\Cov(AB,CB) = \E[A]\E[B^2]\E[C]$.
    Moreover, if $\E[C] = 0$, then $\Cov(AB,AC) = 0$.
\end{lemma}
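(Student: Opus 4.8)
The plan is to prove each of the three claims by directly expanding the covariance via $\Cov(X,Y) = \E[XY] - \E[X]\E[Y]$ and then exploiting the independence of $C$ from the pair $(A,B)$ together with the moment hypotheses $\E[B]=0$ (and, in the last claim, $\E[C]=0$). The structural fact I will use repeatedly is that independence of $C$ from $(A,B)$ implies $C$ is independent of every measurable function of $(A,B)$; in particular $C$ is independent of $AB^2$ and of $A^2B$, so the relevant joint expectations factor. I would also implicitly assume the relevant moments are finite so that all expectations are well defined.

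For the first claim, I would write $\Cov(AB,CB) = \E[A C B^2] - \E[AB]\,\E[CB]$. Factoring the first term as $\E[AB^2]\,\E[C]$ using independence of $C$ from $(A,B)$, and writing $\E[CB] = \E[C]\,\E[B]$, the term $\E[CB]$ vanishes because $\E[B]=0$, leaving $\Cov(AB,CB)=\E[AB^2]\,\E[C]$. The second claim then follows immediately: adding independence of $A$ and $B$ lets me factor $\E[AB^2]=\E[A]\,\E[B^2]$, which gives $\E[A]\,\E[B^2]\,\E[C]$.

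For the third claim, I would expand $\Cov(AB,AC) = \E[A^2 B C] - \E[AB]\,\E[AC]$. Using independence of $C$ from $(A,B)$, the first term factors as $\E[A^2 B]\,\E[C]$ and the second as $\E[AB]\,\E[A]\,\E[C]$; since $\E[C]=0$, both terms vanish, yielding $\Cov(AB,AC)=0$.

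The computation is entirely routine, so there is no substantive obstacle; the only point requiring care is to invoke independence of $C$ from the \emph{pair} $(A,B)$ rather than mere marginal independence, since it is this joint independence that licenses the factorizations $\E[ACB^2]=\E[AB^2]\,\E[C]$ and $\E[A^2BC]=\E[A^2B]\,\E[C]$, as well as $\E[CB]=\E[C]\,\E[B]$ and $\E[AC]=\E[A]\,\E[C]$.
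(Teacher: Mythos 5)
Your proof is correct and is exactly the routine verification the paper omits (its proof of this lemma reads only ``The statements are easily verified''). Your explicit note that the factorizations require $C$ to be independent of the \emph{pair} $(A,B)$ — which is how the hypothesis is used in the paper's application of the lemma — is a worthwhile clarification of the slightly ambiguous phrasing ``$C$ is independent of $A$ and $B$.''
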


\begin{proof}{Proof} 
The statements are easily verified.
 \hfill
\end{proof}

\begin{proposition}\label{prop:algorithm-error-r1}
    Under \Cref{assump:data_size,assump:input.length,assump:pos.def.cov,assump:sim_output_bounded},  as $n\to \infty$ and $R_1 \to \infty$,
    $   \sup_{\bw \in \mathcal{U}_\alpha} 
        \big| \eta^L_i(\bw; \bFhat) - \eta^L_\ell(\bw; \bFhat) - \big( \widehat{\eta}^L_i(\bw) - \widehat{\eta}^L_\ell(\bw)\big) \big|
        = O_p(R_1^{-1/2})$, for $i\neq \ell$.
\end{proposition}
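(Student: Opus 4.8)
The plan is to reduce the supremum to a centered weighted empirical sum, control it with Cauchy--Schwarz together with the weight bound of \Cref{lemma:weights-bound}, and then estimate the conditional second moment of the influence-function estimation error given $\bFhat$. First I would note that the constants $\eta_i(\bFhat)$ and $\eta_\ell(\bFhat)$ cancel, so that for every $\bw\in\mathcal{U}_\alpha$,
\[
\eta^L_i(\bw;\bFhat)-\eta^L_\ell(\bw;\bFhat)-\bigl(\widehat{\eta}^L_i(\bw)-\widehat{\eta}^L_\ell(\bw)\bigr)=\sum_{p=1}^m\sum_{j=1}^{n_p}D_{pj}\,w_{pj},\qquad D_{pj}\triangleq\bigl(\IFhatXipj-\IFhathatXipj\bigr)-\bigl(\IFhatXlpj-\IFhathatXlpj\bigr).
\]
\Cref{lemma:if-sums} gives $\sum_{j=1}^{n_p}\IFhatXipj=\sum_{j=1}^{n_p}\IFhathatXipj=0$ (and likewise for $\ell$), hence $\sum_{j=1}^{n_p}D_{pj}=0$ for each $p$. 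This centering is the key enabler: it lets me replace $w_{pj}$ by the deviation $w_{pj}-1/n_p$, writing $\sum_j D_{pj}w_{pj}=\sum_j D_{pj}(w_{pj}-1/n_p)$, which is exactly the quantity controlled by \Cref{lemma:weights-bound}.

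Applying Cauchy--Schwarz in $j$ and the bound $\sum_j(w_{pj}-1/n_p)^2\le u(\alpha)^2\chi^2_{k-1,1-\alpha}/n_p^2$ from \eqref{eq:lemma-lq}, I obtain, uniformly over $\bw\in\mathcal{U}_\alpha$,
\[
\Bigl|\sum_{p=1}^m\sum_{j=1}^{n_p}D_{pj}(w_{pj}-1/n_p)\Bigr|\le u(\alpha)\,(\chi^2_{k-1,1-\alpha})^{1/2}\sum_{p=1}^m\frac{1}{n_p}\Bigl(\sum_{j=1}^{n_p}D_{pj}^2\Bigr)^{1/2}.
\]
Since $m$ is finite and the prefactor deterministic, it suffices to prove $\sum_{j=1}^{n_p}D_{pj}^2=O_p(n_p^2/R_1)$ for each $p$, which then yields $n_p^{-1}(\sum_j D_{pj}^2)^{1/2}=O_p(R_1^{-1/2})$.

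The main work, and the chief obstacle, is this moment estimate, which I would carry out by conditioning on $\bFhat$. Writing $\IFhathatXipj$ as the $R_1$-replication sample covariance \eqref{eq:if_estimator} with $B^i_{pj}\triangleq n_p\sum_{t=1}^{T_{ip}}\one{\bZ^i_p(t)=\bX_{pj}}-T_{ip}$, the quantity $D_{pj}$ is, up to an $O(1/R_1)$ correction from the centering by $\overline{h}_i$ whose squared contribution is negligible, a sample average of $R_1$ i.i.d.\ mean-zero terms given $\bFhat$, so $\sum_j\E_{\bFhat}[D_{pj}^2]\le 2R_1^{-1}\sum_j\bigl(\E_{\bFhat}[(h_i-\E_{\bFhat}h_i)^2(B^i_{pj})^2]+\E_{\bFhat}[(h_\ell-\E_{\bFhat}h_\ell)^2(B^\ell_{pj})^2]\bigr)$. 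The delicate feature is the $n_p$ scaling inside $B^i_{pj}$: a crude per-$j$ bound is too loose, so I would sum over $j$ \emph{before} bounding, exploiting that $h_i$ is common to all $j$. Expanding $\sum_j(B^i_{pj})^2=n_p^2T_{ip}-n_pT_{ip}^2+n_p^2\sum_{t\ne s}\one{\bZ^i_p(t)=\bZ^i_p(s)}$ and taking expectations, the deterministic part contributes $O(n_p^2)\,\V_{\bFhat}(h_i)$, while for the random part $\Pr_{\bFhat}(\bZ^i_p(t)=\bZ^i_p(s))=1/n_p$ together with the fourth-moment control of \Cref{assump:sim_output_bounded}---whose second, forced-index-pattern clause governs precisely these coinciding-input terms---gives $\E_{\bFhat}[(h_i-\E_{\bFhat}h_i)^2\sum_{t\ne s}\one{\bZ^i_p(t)=\bZ^i_p(s)}]=O(n_p^{-1/2})$, hence a contribution of only $O(n_p^{3/2})$. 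Thus $\sum_j\E_{\bFhat}[(h_i-\E_{\bFhat}h_i)^2(B^i_{pj})^2]=O(n_p^2)$ with a deterministic constant, finite because $\bFhat\in\mathcal{U}_\alpha$ bounds $\E_{\bFhat}[h_i^4]$ by its supremum over $\mathcal{U}_\alpha$; the exact cancellation of this $n_p^2$ growth against the $n_p^{-2}$ from the weight bound is what produces the clean $R_1^{-1/2}$ rate.

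Finally, taking expectations yields $\E[\sum_j D_{pj}^2]=O(n_p^2/R_1)$ with a constant uniform in $n$, so Markov's inequality gives $\sum_j D_{pj}^2=O_p(n_p^2/R_1)$ and therefore $n_p^{-1}(\sum_j D_{pj}^2)^{1/2}=O_p(R_1^{-1/2})$; summing the finitely many contributions over $p\in[m]$ establishes the claimed $O_p(R_1^{-1/2})$ bound on the supremum as $n\to\infty$ and $R_1\to\infty$.
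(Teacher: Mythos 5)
Your argument is correct, and its outer skeleton is exactly the paper's: cancel the constant terms, use \Cref{lemma:if-sums} to replace $w_{pj}$ by $w_{pj}-1/n_p$, apply Cauchy--Schwarz with the second inequality of \Cref{lemma:weights-bound}, and reduce the claim to $n_p^{-2}\sum_{j}\E\big[(\IFhatXipj-\IFhathatXipj)^2\big]=O(R_1^{-1})$. Where you genuinely depart from the paper is in how that moment bound is produced. The paper conditions on $\bXdata$, splits mean-squared error into conditional variance plus squared bias via the law of total variance, and then expands the variance of the sample covariance replication by replication --- including the cross-replication covariances created by the $\overline{h}_i$ centering, handled with \Cref{lemma:covariance} --- bounding per-$j$ quantities such as $\E[h_i^q\mid \bZ^{ir}_p(t)=\bX_{pj},\bXdata]$ before summing over $j$. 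You instead sum over $j$ \emph{first}, exploit that $h_i$ is common to all $j$ to obtain the closed form $\sum_j (B^i_{pj})^2=n_p^2T_{ip}-n_pT_{ip}^2+n_p^2\sum_{t\neq s}\one{\bZ^i_p(t)=\bZ^i_p(s)}$, and dispatch the collision term via $\Pr_{\bFhat}\{\bZ^i_p(t)=\bZ^i_p(s)\}=1/n_p$; this is shorter and makes visible exactly why the $O(n_p^2)$ growth cancels against the $n_p^{-2}$ from the weight bound. Two points that you assert are precisely where the paper spends its effort, so be aware they need a line or two each: (a) that the $\overline{h}_i$ centering only perturbs the i.i.d.-average picture at a negligible order --- the paper verifies this by computing the cross-replication covariance explicitly and finding it to be $R_1^{-2}\Etilde[A_rB_r]^2$, hence lower order; and (b) your fourth-moment control: the Cauchy--Schwarz step you actually perform needs only the first clause of \Cref{assump:sim_output_bounded} (since the edf has uniform weights and therefore lies in $\mathcal{U}_\alpha$), whereas the forced-index-pattern clause you credit is what the paper uses because it conditions on the coinciding-input event rather than applying Cauchy--Schwarz; either route closes the argument. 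Neither point is a gap --- the proof is sound and, in the central computation, more elementary than the paper's.
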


\begin{proof}{Proof} 
    We have
    $
        \sup_{\bw \in \mathcal{U}_\alpha} 
        \big| \eta^L_i(\bw; \bFhat) - \eta^L_\ell(\bw; \bFhat) - \big( \widehat{\eta}^L_i(\bw) - \widehat{\eta}^L_\ell(\bw)\big) \big|
        =\sup_{\bw \in \mathcal{U}_\alpha} 
        \big| \sum\nolimits_{p=1}^m \sum\nolimits_{j=1}^{n_p} \big(\IFhatXipj - \IFhatXlpj - \big( \IFhathatXipj - \IFhathatXlpj \big) \big) w_{pj} \big|
    $.
    By the triangle inequality, it suffices to show that $\sup_{\bw \in \mathcal{U}_\alpha}  \big| \sum\nolimits_{p=1}^m \sum\nolimits_{j=1}^{n_p} \big(\IFhatXipj - \IFhathatXipj  \big) w_{pj} \big| = O_p(R_1^{-1/2})$. Using \Cref{lemma:if-sums,lemma:weights-bound}, and the Cauchy--Schwartz inequality, 
    we have
    \setlength{\abovedisplayskip}{5pt}
    \setlength{\belowdisplayskip}{0pt}
    \begin{align}
        &\sup_{\bw \in \mathcal{U}_\alpha} \Big| \sum\nolimits_{p=1}^m \sum\nolimits_{j=1}^{n_p} \big(\IFhatXipj - \IFhathatXipj  \big) w_{pj} \Big|\notag\\
        & = \sup_{\bw \in \mathcal{U}_\alpha} \Big| \sum\nolimits_{p=1}^m \sum\nolimits_{j=1}^{n_p} n_p^{-1}\big(\IFhatXipj - \IFhathatXipj  \big) n_p (w_{pj} - 1/n_p) \Big| \notag\\
        & \leq \sup_{\bw \in \mathcal{U}_\alpha} \Big( \sum\nolimits_{p=1}^m  \sum\nolimits_{j=1}^{n_p} n_p^{-2} \big(\IFhatXipj - \IFhathatXipj  \big)^2 \sum\nolimits_{p=1}^m \sum\nolimits_{j=1}^{n_p} n_p^2 (w_{pj} - 1/n_p)^2 \Big)^{1/2}\notag\\
        & \leq u(\alpha) \big(\chi^2_{k-1,1-\alpha}\big)^{1/2} \Big( \sum\nolimits_{p=1}^m n_p^{-2} \sum\nolimits_{j=1}^{n_p} \big(\IFhatXipj - \IFhathatXipj  \big)^2\Big)^{1/2}. \label{eq:r1-error-final-bound}
    \end{align}
    
    We proceed to show  
    $\E\Big[ n_p^{-2} \sum\nolimits_{j=1}^{n_p} \big(\IFhatXipj - \IFhathatXipj  \big)^2 \Big] = O(R_1^{-1})$.
    For any $p \in [m]$, let $\bXdata \triangleq \{\bX_{pl}\}_{l\in[n_p]}$ be the collection of observations of the $p$th input distribution. Then, noting that the weighted sum over $j$ with equal weights is the same as the expectation over $\bX_{pj} \in \bXdata$ given $\bXdata$,
    and applying the law of total variance,
    {
    \setlength{\abovedisplayskip}{5pt}
    \setlength{\belowdisplayskip}{0pt}
    \begin{align}
         & 
         \E\Big[ n_p^{-2} \sum\nolimits_{j=1}^{n_p} \big(\IFhatXipj - \IFhathatXipj  \big)^2 \Big]
         = \E\bigg[ \E \Big[ \left.n_p^{-1} \big(\IFhatXipj - \IFhathatXipj  \big)^2  \,\right|\, \bXdata \Big] \bigg]\notag\\
         & = n_p^{-1} \Big\{ \E\big[ \V (\IFhathatXipj \,|\,\bXdata) \big] + \E\big[ \E [\IFhathatXipj - \IFhatXipj \,|\, \bXdata]^2 \big] \Big\}.\label{eq:exp2ifhathat}
    \end{align}
    }%
    In the following, we derive upper bounds for each term in~\eqref{eq:exp2ifhathat}.

    To derive the conditional variance in \eqref{eq:exp2ifhathat}, let $\matr{Z}^i \triangleq (\matr{Z}^i_1,\ldots,\matr{Z}^i_m)$ be the collection of random inputs generated to run the $i$th solution's simulation. When we desire to specify the replication  $r\in[R_1]$,  we adopt  $\matr{Z}^{ir} \triangleq (\matr{Z}^{ir}_1,\ldots,\matr{Z}^{ir}_m)$. 
    In the following, let us fix $i\in[k], p\in[m]$ and $j\in[n_p]$ and rewrite $\IFhathatXipj = R_1^{-1} \sum\nolimits_{r=1}^{R_1} (A_r-\bar{A})B_r$, where $A_r \triangleq h_i(\matr{Z}^{ir})$, $\bar{A} \triangleq R_1^{-1} \sum\nolimits_{r=1}^{R_1} A_r$ and $B_r \triangleq n_p \sum\nolimits_{t=1}^{T_{ip}} \one{\bZ_{p}^{ir}(t) = \bX_{pj}} - T_{ip}$.
    In addition, we define $\Etilde[\cdot] \triangleq \E[\cdot | \bXdata]$, $\Vtilde(\cdot) \triangleq \V(\cdot | \bXdata)$ and $\Covtilde(\cdot) \triangleq \Cov(\cdot | \bXdata)$.
    Since $\Etilde[\one{\bZ^i_p(t)=\bX_{pj}}] = 1/n_p$, we have $\Etilde[B_r] = 0$. Furthermore,
    {
    \setlength{\abovedisplayskip}{5pt}
    \setlength{\belowdisplayskip}{0pt}
    \begin{align}
        \V(\IFhathatXipj|\bXdata) & = \Vtilde\left(R_1^{-1} \sum\nolimits_{r=1}^{R_1} (A_r-\bar{A})B_r\right)\notag\\
        & = R_1^{-2} \bigg[ \sum\nolimits_{r=1}^{R_1} \Vtilde((A_r-\bar{A})B_r) + \sum\nolimits_{r\neq s} \Covtilde((A_r-\bar{A})B_r,(A_s-\bar{A})B_s)  \bigg]\notag\\
        & = R_1^{-1} \big[ \Vtilde((A_r-\bar{A})B_r) + (R_1-1) \Covtilde((A_r-\bar{A})B_r,(A_s-\bar{A})B_s)  \big].\label{eq:varifhathat1}
    \end{align}
    }%
    Applying \Cref{lemma:covariance}, the variance term in \eqref{eq:varifhathat1} can be expanded as:
    {
    \footnotesize
    \setlength{\abovedisplayskip}{5pt}
    \setlength{\belowdisplayskip}{0pt}
    \begin{align}
        \Vtilde\big((A_r-\bar{A})B_r\big) 
        &  = \Vtilde\bigg( \frac{R_1-1}{R_1}A_rB_r - \Big(\frac{1}{R_1}\sum\nolimits_{s: s \neq r}A_s\Big)B_r \bigg)\notag\\
        &  = \frac{(R_1-1)^2}{R_1^2}\Vtilde\big(A_rB_r \big)
        - 2\frac{R_1-1}{R_1}\Covtilde\Big(A_rB_r,\Big[\frac{1}{R_1}\sum\nolimits_{s: s \neq r}A_s\Big]B_r\Big)
        + \frac{1}{R_1^2}\Vtilde\Big(\sum\nolimits_{s: s \neq r}A_sB_r\Big)\notag\\
        &  = \frac{(R_1-1)^2}{R_1^2}\Etilde\big[A_r^2B_r^2 \big]
        - \frac{(R_1-1)^2}{R_1^2}\Etilde\big[A_rB_r \big]^2
        - 2\frac{(R_1-1)^2}{R_1^2} \Etilde\big[A_rB_r^2 \big]\Etilde\big[A_s \big] 
        \notag\\& \quad
        + \frac{R_1-1}{R_1^2} \Etilde[A_r^2]\Etilde[B_r^2] + \frac{(R_1-1)(R_1-2)}{R_1^2}\Etilde[A_r]^2\Etilde[B_r^2].\label{eq:varifhathat-var}
    \end{align}
    }%
    We note that, if $r$ and $s$ represent different replications of the simulation which are run independently, then $A_rB_r$ is independent from $A_sB_s$. Using this combined with \Cref{lemma:covariance}, 
    the covariance term in \eqref{eq:varifhathat1} can be expanded as
    {
    \setlength{\abovedisplayskip}{5pt}
    \setlength{\belowdisplayskip}{0pt}
    \begin{align}
        & \Covtilde\big((A_r-\bar{A})B_r,(A_s-\bar{A})B_s\big) 
        =
        \Covtilde(A_rB_r,A_sB_s) - 2\Covtilde(A_rB_r, \bar{A}B_s) + \Covtilde(\bar{A}B_r, \bar{A}B_s) \notag\\
        & \quad  = 0 - 2R_1^{-1} \cdot \Covtilde(A_rB_r, A_rB_s) + \Covtilde\Big(R_1^{-1}A_rB_r + R_1^{-1} \sum\nolimits_{l \neq r} A_lB_r, R_1^{-1}A_sB_s + R_1^{-1} \sum\nolimits_{l \neq s} A_lB_s\Big)\notag\\
        & \quad  = R_1^{-2}\Covtilde(A_rB_r,A_sB_s) + 2R_1^{-2} \Covtilde\Big(A_rB_r,\sum_{l \neq s} A_lB_s\Big) + R_1^{-2} \Covtilde\Big( \sum_{l \neq r} A_lB_r, \sum_{l \neq s} A_lB_s\Big)
        = R_1^{-2} \Etilde[A_rB_r]^2. \label{eq:varifhathat-cov} 
    \end{align}
    }%

    Combining \eqref{eq:varifhathat-var} and \eqref{eq:varifhathat-cov}, \eqref{eq:varifhathat1} can be rewritten and bounded as
    \begin{footnotesize}
    {
    \setlength{\abovedisplayskip}{5pt}
    \setlength{\belowdisplayskip}{0pt}
    \begin{align}
        & R_1^{-1} \bigg\{ \frac{(R_1-1)^2}{R_1^2} \big(\Etilde\big[A_r^2B_r^2 \big]
        -2 \Etilde\big[A_rB_r^2 \big]\Etilde\big[A_s \big] \big)
        + \frac{R_1-1}{R_1^2} \Etilde[A_r^2]\Etilde[B_r^2] + \frac{(R_1-1)(R_1-2)}{R_1^2}(\Etilde[A_r]^2\Etilde[B_r^2] 
        - \Etilde[A_rB_r]^2)
        \bigg\} \notag\\
        &\leq 
        R_1^{-1} \bigg\{\Etilde\big[A_r^2B_r^2 \big]
        - 2\frac{(R_1-1)^2}{R_1^2} \Etilde\big[A_rB_r^2 \big]\Etilde\big[A_r \big]  +\Etilde[A_r]^2\Etilde[B_r^2] 
        + R_1^{-1} 
        \Etilde[A_r^2]\Etilde[B_r^2]
        \bigg\}. \label{eq:varifhathat2}
    \end{align}
    }%
    \end{footnotesize}%

    We can expand each expectation of \eqref{eq:varifhathat2} by plugging back the values of $A_r$ and $B_r$. For this purpose, let 
    us write $h_i$ as an abbreviation of $h_i(\matr{Z}_i)$ for notational convenience.
    We have 
    \begin{footnotesize}
    {
    \setlength{\abovedisplayskip}{5pt}
    \setlength{\belowdisplayskip}{0pt}
    \begin{align*}
        \Etilde[B_r^2] &
        = n_p^2 \sum\nolimits_{t=1}^{T_{ip}} \Vtilde\big(\one{\bZ_{p}^{ir}(t) = \bX_{pj}}\big) + \sum\nolimits_{s \neq t} \Covtilde\big(\one{\bZ_{p}^{ir}(t) = \bX_{pj}},\one{\bZ_{p}^{ir}(s) = \bX_{pj}} \big)\\
        & = n_p^2 T_{ip} n_p^{-1}(1 - n_p^{-1}) + 0 = T_{ip}(n_p - 1),\\
        \Etilde[A_r^qB_r^2] & = \Etilde\bigg[h_i^q \bigg(\Big(n_p\sum\nolimits_{t=1}^{T_{ip}}\one{\bZ^{ir}_p(t)=\bX_{pj}}\Big)^2 - 2n_pT_{ip} \sum\nolimits_{t=1}^{T_{ip}}\one{\bZ^{ir}_p(t)=\bX_{pj}} + T_{ip}^2\bigg)\bigg]\\
        & = n_p^2 \sum\nolimits_{t=1}^{T_{ip}}\Etilde\big[h_i^q \one{\bZ^{ir}_p(t)=\bX_{pj}}\big]
        + n_p^2 \sum\nolimits_{t\neq s}\Etilde\big[h_i^q \one{\bZ^{ir}_p(t)=\bX_{pj}}\one{\bZ^{ir}_p(s)=\bX_{pj}}\big]
        \\ & \quad
        -2 n_pT_{ip} \sum\nolimits_{t=1}^{T_{ip}}\Etilde\big[h_i^q \one{\bZ^{ir}_p(t)=\bX_{pj}}\big]
        + T_{ip}^2 \Etilde[h_i^q]\\
        & = (n_p - 2T_{ip})\sum\nolimits_{t=1}^{T_{ip}} \E\big[h_i^q | \bZ^{ir}_p(t)=\bX_{pj},\bXdata\big] + \sum\nolimits_{t\neq s} \E\big[h_i^q | \bZ^{ir}_p(t)=\bZ^{ir}_p(s)=\bX_{pj}, \bXdata\big] + T_{ip}^2\E[h_i^q | \bXdata],
    \end{align*}
    }%
    \end{footnotesize}%
    for $q \in \{1, 2\}$.   
    Since these expectations are conditional on the data, they are taken with respect to the edf. Thus,
    \begin{equation*}
        \E[h_i^q\,|\, \bXdata] = \prod\nolimits_{p=1}^m (n_p^{-T_{ip}}) \sum\nolimits_{I^i,\ldots,I^m} h(\bbZ^i_{1,I^1},\ldots,\bbZ^i_{m,I^m})^q.
    \end{equation*}
    Under \Cref{assump:sim_output_bounded}, $\E[\E[h_i^q\,|\, \bXdata]] < \infty$. Applying the same logic,
    $\E[\E[h_i^q | \bZ^i_p(t) = \bX_{pj}, \bXdata]]<\infty$ and $\E[\E[h_i^q | \bZ^{ir}_p(t)=\bZ^{ir}_p(s)=\bX_{pj}, \bXdata]] < \infty$ for $q=1,2$. Since $T_{ip}$ is a fixed constant and taking the outer expectations, we find that
    $\E\big[\Etilde[A_r^qB_r^2]\big]$ is equal to
    {
    \setlength{\abovedisplayskip}{5pt}
    \setlength{\belowdisplayskip}{0pt}
    \begin{align}
    &(n_p - 2T_{ip})\sum_{t=1}^{T_{ip}} \E\big[\E[h_i^q | \bZ^{ir}_p(t)=\bX_{pj},\bXdata] \big] + \sum_{t\neq s} \E\big[\E[h_i^q | \bZ^{ir}_p(t)=\bZ^{ir}_p(s)=\bX_{pj}, \bXdata]\big] + T_{ip}^2\E\big[\E[h_i^q| \bXdata]\big] \notag\\
    & = O(n_p) + O(1) + O(1) = O(n_p).\label{eq:E.of.E[AB2]}
    \end{align}
    Analogously, 
    \begin{align}
    \E\big[\Etilde[A_rB_r]^2\big] 
    & = 
    \sum_{t=1}^{T_{ip}} \E\big[\E[h_i|\bZ^{ir}_p(t) = \bX_{pj},\bXdata]^2\big] + \sum_{t \neq s} \E\big[\E[h_i|\bZ^{ir}_p(t) = \bX_{pj},\bXdata]\E[h_i|\bZ^{ir}_p(s) = \bX_{pj},\bXdata]\big] 
    \nonumber\\ & \quad
    - 2T_{ip} \sum\nolimits_{t=1}^{T_{ip}} \E\big[ \E[h_i|\bZ^{ir}_p(t) = \bX_{pj},\bXdata]\E[h_i|\bXdata]\big] + T_{ip}^2\E\big[\E[h_i|\bXdata]^2\big]
     =
    O(1). \label{eq:E.of.E[AB]^2}
    \end{align} 
    }%
    Finally, combining that $
    \E[\Etilde[B_r^2]] = \E[T_{ip}(n_p-1)] = O(n_p)$ with \eqref{eq:varifhathat2}, \eqref{eq:E.of.E[AB2]} and \eqref{eq:E.of.E[AB]^2}, we bound the unconditioned expectation of the variance of $\IFhathatXipj$ as
    {
    \setlength{\abovedisplayskip}{5pt}
    \setlength{\belowdisplayskip}{0pt}
    \begin{align*}
     \E\big[\Vtilde(\IFhathatXipj)\big]
        & \leq \E\bigg[R_1^{-1} \bigg\{\Etilde\big[A_r^2B_r^2 \big]
        - 2\frac{(R_1-1)^2}{R_1^2} \Etilde\big[A_rB_r^2 \big]\Etilde\big[A_r \big]  +\Etilde[A_r]^2\Etilde[B_r^2] 
        + R_1^{-1} 
        \Etilde[A_r^2]\Etilde[B_r^2]
        \bigg\}\bigg] \\
        & \leq R_1^{-1} \bigg\{
        O(n_p)
        - 2\frac{(R_1-1)^2}{R_1^2} O(n_p)  + O(n_p) 
        + R_1^{-1} \Big\{ 
        O(n_p) 
        + O(1) \Big\}
        \bigg\} 
        = O(n_p R_1^{-1}).
    \end{align*}
    }%
    Now, we tackle the squared bias term in \eqref{eq:exp2ifhathat}. Using the independence of $r$ from $r'$, and that $\Etilde[B_r]=0$, we have
    \begin{equation*}
        \E[\IFhathatXipj \,|\, \bXdata]
        = \Etilde[(A_r-\bar{A})B_r] 
        = \frac{R_1-1}{R_1}\Etilde[A_rB_r] + \frac{R_1-1}{R_1}\Etilde[A_{r'}B_r]
        = \frac{R_1-1}{R_1}\Etilde[A_rB_r].
    \end{equation*} 
    Therefore, using \eqref{eq:E.of.E[AB]^2}, $\E\big[ \E [\IFhathatXipj - \IFhatXipj \,|\, \bXdata]^2\big] = \E\big[\Etilde[A_rB_r]^2\big]/R_1^2 = O(R_1^{-2})$.

    Finally, combining all pieces,
    {
    \setlength{\abovedisplayskip}{5pt}
    \setlength{\belowdisplayskip}{0pt}
    \begin{align*}
        \eqref{eq:exp2ifhathat} & = n_p^{-1} \Big\{ \E\Big[ \V (\IFhathatXipj \,|\,\bXdata) \Big] + \E\Big[ \E [\IFhathatXipj - \IFhatXipj \,|\, \bXdata]^2 \Big] \Big\}\\
        & = n_p^{-1} \big\{ O(n_pR_1^{-1}) + O(R_1^{-2}) \big\}= O(R_1^{-1}).
    \end{align*}
    }%
    Hence, the right-hand side in
    \eqref{eq:r1-error-final-bound} is $O_p(R_1^{-1/2})$.
    \hfill
\end{proof}

\begin{proof}{Proof of \Cref{prop:algorithm-error}}
    By the triangle inequality, 
    
    \begin{footnotesize}
    \begin{equation}\label{eq:prop-error-pf-1}
        \big|\widehat{U}_{i\ell,n} - U_{i\ell,n} \big| \! \leq \!\big|\widehat{U}_{i\ell,n} - (\eta_i(\widehat{\bw}^*_{i\ell}) - \eta_\ell(\widehat{\bw}^*_{i\ell}))\big| \! + \!\big|\eta_i(\widehat{\bw}^*_{i\ell}) - \eta_\ell(\widehat{\bw}^*_{i\ell}) - \big( \eta_i(\bw^*_{i\ell}) - \eta_\ell(\bw^*_{i\ell}) \big) \big| \! + \! \big| \eta_i(\bw^*_{i\ell}) - \eta_\ell(\bw^*_{i\ell}) - U_{i\ell,n} \big|.
    \end{equation}
    \end{footnotesize}%
    From \Cref{prop:algorithm-error-r2}, the first term in the upper bound of \eqref{eq:prop-error-pf-1} is $O_p(R_2^{-1/2})$. For the second term, from \Cref{prop:algorithm-error-sups}, $\big|\eta_i(\widehat{\bw}^*_{i\ell}) - \eta_\ell(\widehat{\bw}^*_{i\ell}) - \big( \eta_i(\bw^*_{i\ell}) - \eta_\ell(\bw^*_{i\ell}) \big) \big| \leq 2 \sup_{\bw \in \mathcal{U}_\alpha} \big| \eta_i(\bw) - \eta_\ell(\bw) - \big( \widehat{\eta}^L_i(\bw) - \widehat{\eta}^L_\ell(\bw)\big) \big|$, where the supremum can be bounded above by the sum of two terms by the triangle inequality: (i) $ \sup_{\bw \in \mathcal{U}_\alpha} \big| \eta_i(\bw) - \eta_\ell(\bw) - \big( \eta^L_i(\bw; \bFhat) - \eta^L_\ell(\bw; \bFhat)\big) \big|$; and (ii) $\sup_{\bw \in \mathcal{U}_\alpha} \big| \eta^L_i(\bw; \bFhat) - \eta^L_\ell(\bw; \bFhat) - \big( \widehat{\eta}^L_i(\bw) - \widehat{\eta}^L_\ell(\bw)\big) \big|.$
   Part~(i) can be bounded from \eqref{eq:worst-case-error} and \Cref{prop:algorithm-error-r1} states that (ii) shrinks at $O_p(R_1^{-1/2})$. 
   Lastly, by Theorem~2.2 in \cite{kaniovski1995probabilistic}, the last term of \eqref{eq:prop-error-pf-1} can be bounded with $\sup_{\bw \in \mathcal{U}_\alpha} \big|\eta_i(\bw) - \eta_\ell(\bw) - \big( \eta^L_i(\bw) - \eta^L_\ell(\bw) \big) \big|$. Combining it with \Cref{thm:worst-case-expansion-error}, we get $\big|\eta_i(\bw^*_{i\ell}) - \eta_\ell(\bw^*_{i\ell}) - U_{i\ell,n} \big|
    = O_p(1/n)$. 
   Combining all pieces, 
   $\big|\widehat{U}_{i\ell,n} - U_{i\ell,n} \big| = o_p(n^{-1/2}) + O_p(R_1^{-1/2}) + O_p(R_2^{-1/2})$.
   \hfill
\end{proof}

To prove \Cref{prop:Uhat-coverage}, we need first to provide \Cref{lem:prob_chisquare} and \Cref{prop:normality_coverage}.
First, to make an analogy to~\eqref{eq:B-lower-bound}, let $\overline{\IF}_{ip} \triangleq n_p^{-1}\sum_{j=1}^{n_p} \IFXipj$ and let
\begin{equation}
\label{eq:overlineUielln}
\overline{U}_{i\ell,n} \triangleq \eta_i(\bF^c) - \eta_\ell(\bF^c) + \sum\nolimits_{p=1}^{m}(\overline{\IF}_{ip} - \overline{\IF}_{\ell p}) + \Big(\chi^2_{k-1,1-\alpha} \sum\nolimits_{p=1}^m n_p^{-1}\V\big(\IFipx - \IFlpx\big)\Big)^{1/2},
\end{equation}
which is obtained by rewriting the right-hand side of~\eqref{eq:B-lower-bound} with $\vect{Y}_{pj}$ in~\eqref{eq:redefined.Y}.

\begin{lemma}\label{lem:prob_chisquare}
     If $\vect{Z} \in \mathbb{R}^q \sim \N(\vect{0},\matr{V})$, and  $a_\ell \triangleq (\chi^2_{q, 1-\alpha} V_{\ell \ell})^{1/2}$, then $\Pr\set{\vect{Z} \leq \vect{a}} \geq 1 - \alpha$.
\end{lemma}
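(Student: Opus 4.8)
The plan is to avoid estimating the joint lower-tail probability of $\vect{Z}$ directly, and instead to sandwich the box event $\set{\vect{Z} \leq \vect{a}}$ from below by an \emph{ellipsoidal} confidence region whose probability is exactly $1-\alpha$. The thresholds $a_\ell = (\chi^2_{q,1-\alpha} V_{\ell\ell})^{1/2}$ are precisely the coordinatewise half-extents (support-function values) of this ellipsoid, so the ellipsoid is contained in the box $\set{\vect{z}\colon \vect{z} \leq \vect{a}}$, and monotonicity of probability then finishes the argument. Throughout I assume $\matr{V}$ is positive definite, consistent with the standing assumptions elsewhere in the paper, so that $\matr{V}^{-1}$ exists.

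First I would introduce the ellipsoid $E \triangleq \set{\vect{z} \in \mathbb{R}^q \colon \vect{z}^\top \matr{V}^{-1} \vect{z} \leq \chi^2_{q,1-\alpha}}$. Because $\vect{Z} \sim \N(\vect{0},\matr{V})$ with $\matr{V}$ positive definite, the quadratic form $\vect{Z}^\top \matr{V}^{-1} \vect{Z}$ follows a $\chi^2_q$ distribution, so by the definition of the quantile $\chi^2_{q,1-\alpha}$ we have $\Pr\set{\vect{Z} \in E} = \Pr\set{\vect{Z}^\top \matr{V}^{-1}\vect{Z} \leq \chi^2_{q,1-\alpha}} = 1-\alpha$.

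Next I would establish the containment $E \subseteq \set{\vect{z} \colon \vect{z} \leq \vect{a}}$. Fix $\vect{z} \in E$ and $\ell \in [q]$. The generalized Cauchy--Schwarz inequality for the positive definite matrix $\matr{V}$ yields $(\vect{e}_\ell^\top \vect{z})^2 \leq (\vect{e}_\ell^\top \matr{V} \vect{e}_\ell)(\vect{z}^\top \matr{V}^{-1} \vect{z}) \leq V_{\ell\ell}\, \chi^2_{q,1-\alpha}$, whence $z_\ell = \vect{e}_\ell^\top \vect{z} \leq (V_{\ell\ell}\,\chi^2_{q,1-\alpha})^{1/2} = a_\ell$. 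Since this holds for every $\ell \in [q]$, we conclude $\vect{z} \leq \vect{a}$, which proves the containment. Combining this with the previous paragraph gives $\Pr\set{\vect{Z} \leq \vect{a}} \geq \Pr\set{\vect{Z} \in E} = 1-\alpha$.

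There is no serious obstacle here; the only point requiring care is the coordinate bound, which is exactly the assertion that the support function of $E$ in the direction $\vect{e}_\ell$ equals $a_\ell$, and this is immediate from the Cauchy--Schwarz step above. If one wished to dispense with positive definiteness, the rank-deficient case could be treated by a routine limiting argument, but under the paper's standing assumptions this refinement is unnecessary.
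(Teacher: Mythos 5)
Your proof is correct and follows essentially the same route as the paper's: both sandwich the box $\set{\vect{z}\colon \vect{z}\leq\vect{a}}$ from below by the $1-\alpha$ ellipsoidal region of the Gaussian and identify $a_\ell$ as the support-function value in direction $\vect{e}_\ell$. Your version is in fact slightly more careful --- the paper writes the quadratic form as $\vect{Z}^\top\matr{V}\vect{Z}$ where it should be $\vect{Z}^\top\matr{V}^{-1}\vect{Z}$ (only the latter is $\chi^2_q$ and has coordinatewise maximum $(\chi^2_{q,1-\alpha}V_{\ell\ell})^{1/2}$ over its sublevel set), and you supply the Cauchy--Schwarz step that the paper leaves implicit.
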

\begin{proof}{Proof}
    We have
    $\Pr\set{\vect{Z}^\top \matr{V} \vect{Z} \leq \chi^2_{q,1-\alpha}} = 1 - \alpha$, and
    $a_\ell = \max \{\vect{e}_\ell^\top \vect{z} \colon  \vect{z}^\top \matr{V} \vect{z} \leq \chi^2_{q,1-\alpha}\}$. It follows that if $\vect{Z} \in \set{\vect{z}^\top \matr{V} \vect{z} \leq \chi^2_{q,1-\alpha}}$, then $\vect{e}_\ell^\top\vect{Z} \leq a_\ell$, 
    $\ell = 1, \ldots, q$.
    Hence
    $\set{\vect{Z}^\top \matr{V} \vect{Z} \leq \chi^2_{q,1-\alpha}} \subseteq \set{\vect{Z} \leq \vect{a}}$. \hfill
\end{proof}

\begin{proposition}\label{prop:normality_coverage}
Under \Cref{assump:input.length,assump:data_size,assump:pos.def.cov}, for each $i \in [k]$,
$\liminf_{n\to \infty}\Pr\set{\eta_i(\bF^c) - \eta_\ell(\bF^c) \leq \overline{U}_{i\ell,n} + o_p(n^{-1/2}), \forall  \ell \neq i} \geq 1 - \alpha$.
\end{proposition}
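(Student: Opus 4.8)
The plan is to reduce the claim to a joint central limit theorem for the centered influence-function averages and then invoke \Cref{lem:prob_chisquare}. Writing $\upsilon_\ell(\bX_{pj}) \triangleq \IFXipj - \IFXlpj$ and $G_{\ell,n} \triangleq \sum_{p=1}^m (\overline{\IF}_{ip} - \overline{\IF}_{\ell p}) = \sum_{p=1}^m n_p^{-1}\sum_{j=1}^{n_p}\upsilon_\ell(\bX_{pj})$, I would substitute the definition~\eqref{eq:overlineUielln} of $\overline{U}_{i\ell,n}$ and cancel the common term $\eta_i(\bF^c)-\eta_\ell(\bF^c)$. For an error sequence $\xi_{\ell,n} = o_p(n^{-1/2})$, the event then reduces to
\[
\set{-G_{\ell,n} - \xi_{\ell,n} \le a_{\ell,n}, \; \forall \ell \neq i}, \qquad a_{\ell,n} \triangleq \Big(\chi^2_{k-1,1-\alpha}\sum\nolimits_{p=1}^m n_p^{-1}\V(\upsilon_\ell(\bX_p))\Big)^{1/2}.
\]
By \Cref{eq:etaidifferentiable} each $\upsilon_\ell(\bX_p)$ has mean zero, so $(G_{\ell,n})_{\ell\neq i}$ is a sum over $p$ of independent centered sample means.

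Next I would establish the joint CLT. Multiplying by $\sqrt n$ and writing $\sqrt n\, G_{\ell,n} = \sum_{p=1}^m \sqrt{n/n_p}\, \big(n_p^{-1/2}\sum_{j=1}^{n_p}\upsilon_\ell(\bX_{pj})\big)$, I use that $\sqrt{n/n_p}\to \beta_p^{-1/2}$ by \Cref{assump:data_size}, that the within-block averages are jointly asymptotically normal across $\ell$, and that the blocks are independent across $p$. Hence $\sqrt n\,(G_{\ell,n})_{\ell\neq i}\xrightarrow{D} T\sim\N(\vect 0,\Sigma)$ with $\Sigma_{\ell\ell'} = \sum_{p=1}^m \beta_p^{-1}\Cov(\upsilon_\ell(\bX_p),\upsilon_{\ell'}(\bX_p))$; the same Lindeberg--Feller verification carried out in the proof of \Cref{thm:multisample-el} applies verbatim to this centered triangular array. \Cref{assump:pos.def.cov} ensures the covariance matrix of $(\upsilon_\ell(\bX_p))_{\ell\neq i}$, and therefore $\Sigma$, is positive definite, so $T$ is non-degenerate; moreover $\sqrt n\, a_{\ell,n}\to b_\ell \triangleq (\chi^2_{k-1,1-\alpha}\,\Sigma_{\ell\ell})^{1/2}$ deterministically.

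Finally I would pass to the limit. Setting $V_n \triangleq -\sqrt n\,(G_{\ell,n})_{\ell} - (\sqrt n\,\xi_{\ell,n})_\ell - (\sqrt n\, a_{\ell,n})_\ell$, the scaled event is $\set{V_n \le \vect 0}$. Since $\sqrt n\,\xi_{\ell,n}=o_p(1)$ and $\sqrt n\, a_{\ell,n}\to b_\ell$, Slutsky's theorem gives $V_n \xrightarrow{D} -T - \vect b \overset{d}{=} T - \vect b$, using the symmetry $-T\overset{d}{=}T$ of the centered Gaussian, with $\vect b = (b_\ell)_{\ell\neq i}$. Because $T-\vect b$ is absolutely continuous, the closed orthant $\set{\vect x\le \vect 0}$ is a continuity set of its law, so the Portmanteau theorem yields $\Pr\set{V_n\le \vect 0}\to \Pr\set{T\le \vect b}$. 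Applying \Cref{lem:prob_chisquare} with $q=k-1$, $\matr V=\Sigma$, and $a_\ell = b_\ell=(\chi^2_{k-1,1-\alpha}\Sigma_{\ell\ell})^{1/2}$ then gives $\Pr\set{T\le \vect b}\ge 1-\alpha$, which is the desired conclusion.

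The main obstacle is coordinating the three heterogeneous ingredients in the limit: the random vector $-\sqrt n\,G_{\ell,n}$, which requires the multi-sample CLT with the sample-size weights $\beta_p^{-1/2}$; the vanishing but possibly-signed slack $\xi_{\ell,n}=o_p(n^{-1/2})$, which must be absorbed by Slutsky rather than sign-controlled; and the deterministic threshold $\sqrt n\, a_{\ell,n}$, convergent by \Cref{assump:data_size}. The one point that genuinely must be verified rather than assumed is that the limiting orthant event is a continuity set of the law of $T-\vect b$; this is precisely where the positive definiteness granted by \Cref{assump:pos.def.cov} is indispensable, as it rules out boundary mass and legitimizes the Portmanteau passage.
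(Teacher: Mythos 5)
Your proposal is correct and follows essentially the same route as the paper's proof: cancel the mean difference, apply the multi-sample CLT with weights $\beta_p^{-1/2}$ together with Slutsky's theorem to absorb the $o_p(n^{-1/2})$ slack, and conclude via \Cref{lem:prob_chisquare}. Your explicit verification that the orthant is a continuity set of the non-degenerate Gaussian limit (via \Cref{assump:pos.def.cov}) is a slightly more careful justification of the limit passage that the paper states only as pointwise convergence of distribution functions, but it is not a different argument.
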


\begin{proof}{Proof}
Using the definition of $\overline{U}_{i\ell,n}$ provided in \eqref{eq:overlineUielln},
    {\footnotesize
    \setlength{\abovedisplayskip}{5pt}
    \setlength{\belowdisplayskip}{0pt}
    \begin{align}
    & \Pr\set{\eta_i(\bF^c) - \eta_\ell(\bF^c) \leq \overline{U}_{i\ell,n} + o_p(n^{-1/2}), \forall   \ell \neq i} 
    \notag\\
    & \quad = \Pr\Big\{\sum\nolimits_{p=1}^m \big(\overline{\IF}_{ip} - \overline{\IF}_{\ell p}\big) + \Big(\chi^2_{k-1,1-\alpha} \sum\nolimits_{p=1}^m n_p^{-1}\V(\IFipx - \IFlpx)\Big)^{1/2} + o_p(n^{-1/2}) \geq 0, \forall \ell \neq i \Big\} \notag\\
    & \quad = \Pr\Big\{n^{1/2}\sum\nolimits_{p=1}^m({\overline{\IF}_{\ell p} - \overline{\IF}_{ip}}) + o_p(1) \leq  \Big( \chi^2_{k-1,1-\alpha} \sum\nolimits_{p=1}^m \frac{n}{n_p}\V\big(\IFipx - \IFlpx\big)\Big)^{1/2}, \forall \ell \neq i\Big\}. \label{eq:propproof_clt}
    \end{align}   
    }%
By the CLT and Slutsky's theorem, $n^{1/2}\big\{\sum\nolimits_{p=1}^m({\overline{\IF}_{\ell p} - \overline{\IF}_{ip}})\big\}_{\ell=1,\dots,k-1} + o_p(1) \xrightarrow{D} \vect{\mathrm{Z}} \sim \N(0,\matr{\Sigma}^i)$ as $n\to\infty$, where
\[
\Sigma^i_{\ell\ell'} \triangleq \sum\nolimits_{p=1}^m \beta_p^{-1} \E \Big[ \big( \IFipx - \IFlpx \big) \big( \IFipx - \IFllpx \big) \Big]. 
\]
In addition, by \Cref{lem:prob_chisquare}, $\Pr\{\vect{\mathrm{Z}} \leq \vect{a}\} \geq 1-\alpha$, where $\vect{a}= \big\{(\chi^2_{k-1,1-\alpha})^{1/2} \big[ \sum\nolimits_{p=1}^m \beta_p^{-1} \V\big(\IFipx - \IFlpx\big)\big]^{1/2} \big\}_{\ell \neq i}$. Therefore, since the convergence in distribution implies the pointwise convergence of the cumulative distributions functions, 
the probabilities in \eqref{eq:propproof_clt} converge to $\Pr\{\vect{\mathrm{Z}} \leq \vect{a}\}$ as $n\to\infty$. \hfill
\end{proof}

\begin{proof}{Proof of \Cref{prop:Uhat-coverage}}
    From \Cref{prop:algorithm-error}, for each $i \neq \ell$, $\widehat{U}_{i\ell,n} = U_{i\ell,n} + o_p(n^{-1/2}) + O_p(R_1^{-1/2} + R_2^{-1/2})$. Hence $n$, $R_1/n$ and $R_2/n \to\infty$ ensure $\widehat{U}_{i\ell,n} = U_{i\ell,n} + o_p(n^{-1/2})$. Applying \Cref{prop:upper_bound.cov.opt} yields $U_{i\ell,n} \geq \overline{U}_{i\ell,n} + o_p(n^{-1/2})$. Hence $\widehat{U}_{i\ell,n} \geq \overline{U}_{i\ell,n} + o_p(n^{-1/2})$. Therefore, as $n$, $R_1/n$ and $R_2/n \to\infty$, we have
    \begin{align*}
        \Pr\set{\eta_i(\bF^c) - \eta_\ell(\bF^c) \leq \widehat{U}_{i\ell,n},\ell \neq i} 
        & \geq \Pr\set{\eta_i(\bF^c) - \eta_\ell(\bF^c) \leq \overline{U}_{i\ell,n} + o_p(n^{-1/2}),\ell \neq i}
        \geq 1 - \alpha. \hfill 
    \end{align*}
\end{proof}%

\begin{proof}{Proof of \Cref{thm:as-asymp-coverage}}
    \Cref{prop:Uhat-coverage} ensures $ \liminf_{n\to\infty, \, R_1/n\to\infty, \, R_2/n\to\infty} \Pr\set{\eta_i(\bF^c) - \eta_\ell(\bF^c) \leq \widehat{U}_{i\ell,n},\ell \neq i} \geq 1-\alpha$.  Therefore, the assertion follows from an application of \Cref{lem:asymptotic-mcb-ci}.
    \hfill
\end{proof}

\end{document}